\newcommand*{\horzbar}{\rule[.5ex]{2.5ex}{0.5pt}}
\newtheorem{theorem}{Theorem}
\newenvironment{customthm}[1]
  {\innercustomthm}
  {\endinnercustomthm}
\newenvironment{customlemma}[1]
  {\innercustomlemma}
  {\endinnercustomlemma}
\newenvironment{customprop}[1]
  {\innercustomprop}
  {\endinnercustomprop}  
\newtheorem{lemma}{Lemma}
\newtheorem{claim}{Claim}
\newtheorem{defn}{Definition}
\newtheorem{conj}{Conjecture}
\newtheorem{coro}{Corollary}
\newtheorem{prop}{Proposition}
\begin{document}

\title{Blind Joint MIMO Channel \\Estimation and Decoding}
\author{Thomas~R.~Dean,~\IEEEmembership{Member,~IEEE,}
Mary~Wootters,~\IEEEmembership{Member,~IEEE,} and 
Andrea~J.~Goldsmith,~\IEEEmembership{Fellow,~IEEE}%
\thanks{T. Dean is with the Department of Electrical Engineering, Stanford, CA
94305 USA (e-mail: trdean@stanford.edu).}%
\thanks{M. Wootters is with the Department of Computer Science and the Department of
Electrical Engineering, Stanford, CA 94305 USA (e-mail: marykw@stanford.edu).}%
\thanks{A. Goldsmith is with the Department of Electrical Engineering, Stanford, CA
94305 USA (email: andrea@ee.stanford.edu).}
\thanks{This work was presented in part in 2017 at IEEE Globecom in Singapore. T. Dean is supported by the Fannie and John Hertz Foundation. This work was supported in part by the NSF Center for Science of Information under Grant CCF-0939370}}
\maketitle

\global\csname @topnum\endcsname 0
\global\csname @botnum\endcsname 0

\begin{abstract}
We propose a method for MIMO decoding when channel state information (CSI) is unknown to both the
transmitter and receiver.  The proposed method requires some structure in the transmitted signal for the decoding to be effective, in particular that the underlying 
sources are drawn from a hypercubic space.   Our
proposed technique fits a minimum volume parallelepiped to the received samples.
This problem can be expressed as a non-convex optimization problem that
can be solved with high probability by gradient descent. 
Our blind decoding algorithm can be used when communicating over unknown MIMO wireless 
channels using either BPSK or MPAM modulation. We apply our technique to jointly estimate 
MIMO channel gain matrices and decode the underlying transmissions with only
knowledge of the transmitted constellation and without the use of pilot symbols.
Our results provide theoretical guarantees that the proposed
algorithm is correct when applied to small MIMO systems.
Empirical results show small sample size requirements, making this
algorithm suitable for block-fading channels with coherence times typically seen in practice.
Our approach has a loss of less than 3dB compared to zero-forcing with
perfect CSI, imposing a similar performance penalty as space-time coding techniques without the loss of rate incurred by those techniques.

\emph{Index Terms}---MIMO, Multiuser detection, Blind source separation, Optimization
\end{abstract}

%%%%%%%%%%%%%%%%%%%%%%%%%%%%%%%%%%%%%%%%%%%%%%%%%%%%%%%%%%%%%%%%%%%%%%%%%%%%%%
\section{Introduction}
In this work we propose a method to blindly estimate MIMO channels
and decode the underlying transmissions. Given only knowledge of the statistics of
the channel gain matrix, the constellation, and the channel noise, we exploit 
the geometry of the constellation in order to jointly estimate the channel gain 
matrix and decode the underlying data.  More precisely, we exploit the fact that
the underlying constellation is often \emph{hypercubic}, i.e. forms a regular
$n$-dimensional polytope with mutually perpendicular sides, as is the case with BPSK or
MPAM modulation.  The technique presented in this work can also 
be applied to decoding and estimation in the SIMO MAC, where channel gains are unknown at the receiver and there is no coordination among transmitters.

%The motivation for this work is two-fold.  First, 
In modern cellular systems, there is
up to 15\% transmission overhead dedicated to performing channel estimation 
\cite{3gpp}.  Improving channel estimation techniques, through, for example, 
sparse dictionary learning \cite{prasad2014joint}, is an active area of
research.  In practice, channel state information (CSI) is not always needed to
decode, but 
schemes that communicate without CSI impose 
losses in rate or increased symbol error rates \cite{tarokh1999space}.  Additionally, blind decoding schemes for MIMO systems exist, but they are often inefficient in terms of complexity or sample size requirements.  This is discussed in more detail in Section \ref{sec:related}.  Obtaining
accurate channel estimates is likely to become more challenging in future generation wireless
systems, which will likely have both increased spatial diversity and decreased coherence times 
\cite{chowdhury2016scaling}. Hence, improvements to channel estimation, or to schemes 
that communicate without CSI, have the potential to reduce overhead as well as
improve performance in current and future wireless systems.

This work is also motived by research in physical-layer security.  Several
works have proposed keyless authentication schemes that attempt to identify
users based on properties of the physical channel over which they communicate
(see, for example \cite{jorswieck2015broadcasting}, or \cite{tomasin2017survey}
for a survey).
Since MIMO channels are often well conditioned, and hence invertible, such
schemes require that CSI remains hidden from an adversary.  Our work shows that MIMO systems inherently leak CSI when the underlying source is structured.  
From a security perspective, this work implies that an eavesdropper need not have 
knowledge of pilot symbols nor any knowledge of the data being transmitted in order 
to efficiently intercept and decode MIMO communications.  Any
scheme that attempts to provide security by hiding or obscuring pilot symbols
will be insecure.
%Second, this work is motivated from a security point of view.  It shows that
%keeping pilot symbols secret does not offer much protection from an
%eavesdropper.  In otherwords, transmissions naturally leak CSI, and here we
%present a practical algorithm to recover the CSI and decode the underlying
%transmissions.
%It shows that an adversary, which does not have access to pilot symbols, can
%still recover the CSI and decode with performance competative to a legitimate
%receiver which does have access to pilot symbols.
%Several
%proposals exist to use MIMO channels as a basis for authentication,
%for example, see \cite{xiao2007fingerprints}.
%These schemes require CSI to remain secret.  Thus, it seems
%natural to explore the limits of recovering channel state information
%given the condition that an adversary has absolutely no control nor knowledge of
%the underlying transmission scheme, but could likely estimate statistics based
%on received signals or channel models. 

The blind decoding technique introduced in our work
is motivated by a classical problem in convex optimization: fitting a minimum 
volume ellipsoid (also known as the L\"{o}wner-John ellipsoid) to a set of samples, as given in
\cite{gls88}.  The method proposed in this work fits samples to 
within a parallelepiped, i.e. an $n$-dimensional polytope that has 
parallel and congruent opposite faces, thereby recovering the inverse of the channel gain matrix. In this work, we focus on MIMO systems that have a small number of
transmit antennas, specifically up to 8; this choice of parameters captures
nearly all MIMO systems in use in wireless systems deployed today, for example see \cite{3gppRI} or \cite{80211RI}.

We outline the major contributions of this work as follows:
\begin{itemize}
\item We introduce a novel (non-convex) optimization problem, whose solutions
capture those of the blind decoding problem.  Our formulation exploits the 
structure of the underlying constellation so that solving this optimization problem both
estimates the channel gain matrix and detects the underlying data
symbols using far fewer samples of received symbols than previous blind decoding 
techniques.
\item Despite the fact that this problem is non-convex, we give both theoretical and empirical results showing that gradient descent is effective for solving the blind MIMO decoding problem.  More precisely, for general values of $n$, we derive sufficient conditions to ensure that global optima correspond to solutions of the blind decoding problem for the case where BPSK is transmitted and in the limit of infinite SNR.  We further relate the blind decoding problem to the Hadamard Maximal Determinant problem. For $n \leq 4$, we present necessary conditions so that there are no spurious optima within the domain of the optimization problem, implying that our approach will always return a solution to the blind decoding problem.  Further, we provide evidence that 
formulating equivalent necessary conditions for larger values of $n$ is likely intractable.  
\item For $n \leq 4$, we give theoretical results that relate the number of observed samples to the probability that our method returns a correct solution to the blind decoding problem.  Our theoretical results nearly exactly match our empirical results.  Notice that $n \leq 4$ captures the majority of MIMO systems in use today.
\item Although it seems difficult to provide theoretical evidence that gradient descent performs well for large values of $n$, we present empirical evidence suggesting that gradient descent efficiently solves
 our non-convex optimization problem and the blind decoding problem for values of $n$ as high as $n=15$ and for values of $M$ as large as $M=16$.  
Further, our
empirical evidence shows that our approach is robust in the presence of AWGN and
that decoding performance is comparable to known methods that communicate over a
MIMO channel without CSI or with imperfect CSI. In particular, our blind method
\emph{outperforms} existing non-blind methods when the CSI is somewhat inaccurate. 
\end{itemize}

The remainder of the paper is organized as follows.  In Section \ref{sec:related} we provide a survey of techniques related to our work.  Section \ref{sec:model} describes our system model.  Section \ref{sec:fit} outlines the optimization problem that solves the blind decoding problem, as well as algorithms that solve this optimization problem; the theoretical performance of these algorithms is shown in Section \ref{sec:prelim}.  Section \ref{sec:empirical} presents empirical results that support the theory contained in Section \ref{sec:prelim}.  Concluding remarks are provided in Section \ref{sec:conc}. Proofs not contained in Section \ref{sec:prelim} are found in the appendices.  

%%%%%%%%%%%%%%%%%%%%%%%%%%%%%%%
\section{Related Work}
\label{sec:related}
The problem of joint blind channel estimation and decoding is not new.  For
example, in
\cite{talwar1996blind}, the authors apply MMSE techniques to the blind decoding
of MIMO problems over small alphabets while simultaneously recovering the
underlying channel gain matrix. The approach in \cite{talwar1996blind} requires
the number of samples of received signals used by the algorithm to grow linearly 
with constellation size, which is
exponential in $n$, the number of transmit antennas.  The approach in \cite{talwar1996blind} 
only requires the underlying constellation to be discrete; however, for
constellations that are also hypercubic, our approach requires far 
fewer received samples than the approach of \cite{talwar1996blind} based on our simulation results.

In addition, blind decoding algorithms have previously been applied to hypercubic 
sources.  In \cite{hansen1997hyperplane},  the authors present a statistical
learning algorithm that applies a modified version of the Gram-Schmidt algorithm
to an estimate of the covariance matrix of the received signals to learn the
channel gain matrix.  In a different setting, the authors in
\cite{nguyen2006learning} learn a parallelepiped from a covariance matrix by
first orthogonalizing and then recovering the rotation through higher order
statistics.  
Our method does not rely on the covariance matrix estimation and our empirical results show that it requires fewer
samples than the techniques of \cite{hansen1997hyperplane} and
\cite{nguyen2006learning}, especially when the channel gain matrix has a high condition number.

Blind source separation is the separation of a set of unknown signals that are mixed 
through an unknown (typically linear) process with
no or little information about the mixing process or the source signals.
Several previous works have considered using blind source separation techniques
for detection of signals transmitted over unknown MIMO channels. 
Blind source separation is typically accomplished through techniques such as 
Principle Component Analysis (PCA), Independent Component Analysis (ICA), or 
Non-Negative Matrix Factorization (NMF); see \cite{hyvarinen2000independent} 
for a survey of these techniques.
Other techniques exploit structure in the mixing process, for example, 
\cite{ling2015blind} requires the mixing process to be a Toeplitz matrix.
 Our
technique differs from traditional blind source separation as we obtain an
estimate of the source signals by learning the inverse of the mixing process rather than directly estimating the source signals.
As an output, our algorithm produces both an estimate of the mixing process,
i.e. the channel gain matrix, and the source signal, i.e. the transmitted symbols.
Similarly, blind channel estimation techniques have been studied, although most
commonly for SISO channels.  See \cite{johnson1998blind} or
\cite{benveniste1984blind} for surveys on this topic.
The approach presented in this paper can be viewed outside the
context of communicating over an unknown MIMO channel as a general technique
that performs blind source separation of sources mixed through an unknown, linear process.

Many techniques exist for communications over unknown MIMO
channels that do not rely on channel estimation.  
For example, Space-Time Block Coding (STBC) was 
introduced by Alamouti
in \cite{alamouti1998simple} and formalized by Tarokh \emph{et al}. in
\cite{tarokh1999space}.  
These techniques rely on coding transmissions
using sets of highly orthogonal codes so that the receiver can recover the
transmission without CSI. For the case of two transmitters, rate one
space-time block codes exist that impose a 3 dB penalty in terms of SNR at the
receiver.  For larger numbers of transmitters, rate one codes do not exist.  
Our techniques do not require any coding at the transmitter and thus do not impose
any rate penalty.  Numerical simulation shows the decoding performance of our
technique to be comparable to rate one STBC methods. 

%%%%%%%%%%%%%%%%%%%%%%%%%%%%%%%%%%%%%%%%%%%%%%%%%%%%%%%%%%%%%%%%%%%%%%%%%%%%%%%%%
\section{System Model and Notation}
\label{sec:model}
This work focuses on an $n \times n$ real-valued MIMO channel with block fading and AWGN. In Section \ref{sub:extentions}, we discuss how this work can be extended to complex-valued channels and channels with more receivers than transmitters.  The input-output relation of this channel is characterized by:
\begin{equation}
\label{eq:mimo}
\mathbf{y} = \mathbf{A} \mathbf{x} + \mathbf{e},
\end{equation}
where $\mathbf{x}$ is drawn from a standard $M$-PAM or BPSK constellation; that
is, $\mathcal{X}^n$ for $\mathcal{X} = \{2i-1-M:i=1,2,\ldots,M\}$ or $\{-1, +1\}$
respectively.  The channel matrix $\mathbf{A} \in \mathbb{R}^{n \times n}$ is drawn from a random distribution.  For the simulations in Section \ref{sec:empirical}, we take $\mathbf{A}$ to be drawn with i.i.d. entries from $\mathcal{N}(0,1)$; however, our approach only requires $\mathbf{A}$ to be full rank and thus $\mathbf{A}$ can be considered to be drawn from an arbitrary distribution or entirely deterministic.  The noise
$\mathbf{e} \in \mathbb{R}^{n}$ has i.i.d. entries drawn from $\mathcal{N}(0,\sigma^2)$.  
We assume that
$\mathbf{A}$ is block-fading, meaning that the value of $\mathbf{A}$ remains
constant for some coherence time, $T_c$, after which $\mathbf{A}$ is redrawn. 

The receiver sees samples $\mathbf{y}_1, \hdots, \mathbf{y}_k$, as in
(\ref{eq:mimo}).  We
assume that the receiver knows the constellation $\mathcal{X}^n$ but has no knowledge of
the points drawn from it, nor does it have any knowledge of the matrix $\mathbf{A}$.

Given messages $\mathbf{x}_1, \ldots, \mathbf{x}_k \in \mathbb{R}^n$, we
denote by $\mathbf{X}$ the $n \times k$-dimensional matrix formed by taking each
symbol as a column, and by $\mathbf{Y}$ the corresponding matrix with received
symbols as columns.  Notice that we cannot hope to recover $\mathbf{A},\mathbf{X}$ 
exactly.  Indeed, since 
the constellation is invariant under sign flips and permutations, we can always 
write $\mathbf{AX} = \mathbf{ATT}^{-1}\mathbf{X}$, where $\mathbf{T}$ is the 
product of a permutation matrix and a diagonal matrix with entries $\pm 1$, and 
there is no way to distinguish between the solutions $(\mathbf{A}, \mathbf{X})$ and
$(\mathbf{AT},\mathbf{T}^{-1}\mathbf{X})$.   Such a matrix $\mathbf{T}$ is
termed an admissible transform matrix (ATM) in \cite{talwar1996blind}. Thus, in
this work, we aim to recover $\mathbf{AT}$ for some ATM $\mathbf{T}$. 

While inevitable, these sign and permutation ambiguities do not pose a huge problem in
practice, and we ignore them when comparing the
results to MIMO decoding algorithms with known CSI.  We justify this approach as
follows.   First, in the non-blind estimation case (i.e. where we have some
control over the transmission scheme and allow the transmitter to send pilot
symbols), assuming $M>n$, the permutation ambiguity could be resolved by a single pilot symbol. Additionally, if we consider the SIMO Multiple Access Channel, we can ignore the issue of 
permutations of the received signals, for example by assuming that identification
occurs at a higher protocol layer.  Finally, we note that the sign ambiguity
can easily be resolved through differential modulation.
In practice, it may also be possible to resolve these ambiguities by
examining structure in the transmission scheme, present from either
protocol/framing data or structure in the underlying data.  
This could prove to
be difficult, however, if the data is encrypted or compressed, or the underlying
transmission protocol is designed to thwart such analysis.

The notation $\lfloor \mathbf{A} \rceil$ rounds elements of $\mathbf{A}$ to the 
nearest element of $\mathcal{X}$, and $\kappa( \mathbf{A} )$ denotes the condition 
number of the matrix $\mathbf{A}$, which is the ratio of the largest to the
smallest singular value of $\mathbf{A}$.  $\mathbf{a}_i$ denotes a column vector
formed from the $i$th column of $\mathbf{A}$ and $\mathbf{a}^{(j)}$ denotes a
row vector formed from the $j$th row of $\mathbf{A}$.
We define ${n \brack k}_q$ to be the Gaussian binomial coefficient, which for
any prime power $q$, counts the number of $k$ dimensional subspaces in a vector
space of dimension $n$ over a finite field with $q$ elements.
For two vector spaces $\mathbf{X}$ and $\mathbf{Y}$, the notation $\mathbf{X}
\leq \mathbf{Y}$ denotes ``$\mathbf{X}$ is a subspace of $\mathbf{Y}$''.  For any
$\mathbf{X} \in \mathbb{R}^{n \times n}$, $\text{vec}(\mathbf{X})$ corresponds
to the length $n^2$ column vector obtained by stacking the columns of
$\mathbf{X}$ in the usual ordering.  Given a matrix $\mathbf{X}$, the set $\text{cols}(\mathbf{X})$ denotes the set of vectors that comprise the columns of $\mathbf{X}$.

%%%%%%%%%%%%%%%%%%%%%%%%%%%%%%%%%%%%%%%%%%%%%%%%%%%%%%%%%%%%%%%%%%%%%%%%
\section{Fitting a Parallelepiped}
\label{sec:fit}
Since each transmitted symbol $\mathbf{x}_i$ is drawn from a hypercube, the
values $\mathbf{A x}_i$ are contained in an $n$-dimensional parallelepiped. As
$\mathbf{y}_i = \mathbf{Ax}_i + \mathbf{e}$, the received symbols $\mathbf{y}_i$
will lie in a slightly distorted parallelepiped. At reasonable SNR levels, this
distortion will be minimal. Thus,
we formulate the problem of blindly estimating the channel gain matrix as fitting a 
parallelepiped to our observed symbols and express this problem as
an optimization problem.  Given a set of $k$ samples of
$\mathbf{y}_1,\ldots,\mathbf{y}_k$, consider the program:
\begin{align}
& \underset{\mathbf{U}}{\text{maximize}} 
& & \log | \det \mathbf{U} | \label{eq:obj}\\
& \text{subject to}
& & \| \mathbf{Uy}_i \|_\infty \leq M + c \cdot \sigma, \; i = 1, \ldots, k.
\label{eq:const}
\end{align}
The domain of $\mathbf{U}$ is all $n \times n$ invertible matrices (not
necessarily symmetric or positive-semidefinite), meaning the
objective function is not necessarily convex. However, we will show that if some
condition on $\mathbf{X}$ is satisfied, then solutions in the form
$\mathbf{U} = \mathbf{TA}^{-1}$ for some ATM $\mathbf{T}$, correspond to global
optima to this problem; moreover, we show that these are often the only optima and
that gradient descent will find them.  

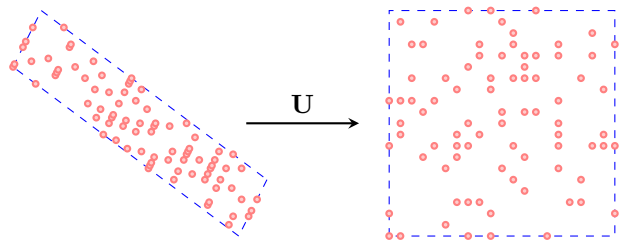
\begin{figure} 
    \begin{centering}
    
\begin{tikzpicture}
[point/.style={circle,draw=red!50,fill=red!20,thick,
inner sep=0pt,minimum size=0.75mm}]
\draw[dashed, blue] (0,2.25) -- (0.375,3);
\draw[dashed, blue] (0.375,3) -- (3.375,0.75);
\draw[dashed, blue] (3.375,0.75) -- (3,0);
\draw[dashed, blue] (3,0) -- (0,2.25);

\draw[thick, black, -stealth] (3.1,1.5) -- (4.6,1.5)
	node [above,align=center,midway] {$\mathbf{U}$};

\filldraw[fill=none, draw=blue, dashed] (5,0) rectangle (8,3);

\foreach \x in {-1,-0.9,...,1.1}
{
	\foreach \y in {-1,-0.9,...,1.1}
    {
    \pgfmathparse{rnd < 0.2 ? int(1) : int(0)}
    \ifnum\pgfmathresult=1 
    	\node at (1.5*\x+6.5, 1.5*\y+1.5) [point] {};
        \node at (1.5*0.125*\x+1.5*\y+1.6875, 1.5*0.25*\x-1.5*0.75*\y+1.5) [point] {};
    \else
       
    \fi	
    }
}
\end{tikzpicture}

    \par\end{centering}
    \caption{The program given in (\ref{eq:obj})--(\ref{eq:const}) aims to find a linear transformation $\mathbf{U}$, that transforms a set of observed MIMO samples $\mathbf{y}_i$ (left), such that the resulting $\mathbf{Uy}_i$ lie within the unit $\ell_\infty$ ball (right).  By finding a $\mathbf{U}$ that has a maximally-valued determinant, we are effectively finding a parallelepiped of minimal volume that fits the observed samples.}
    \label{fig:inverse}
\end{figure}

In this section, we present three
separate algorithms. We first present Algorithm \ref{alg:fit}, a simple algorithm 
using gradient descent in the usual manner to solve
(\ref{eq:obj})--(\ref{eq:const}). We demonstrate in Section \ref{sec:empirical} that, in practice, this algorithm is sufficient to
recover a solution to the blind decoding problem.  In Section
\ref{sub:mod}, we present Algorithm \ref{alg:fit_corner}, a slightly
modified version of Algorithm \ref{alg:fit} that has a theoretical guarantee of
correctness under conditions given in Section \ref{sec:prelim}.  Finally, in
Section \ref{sub:interior}, we include a description of the interior-point
method which allows us to use ordinary gradient descent to solve
(\ref{eq:obj}) while remaining in the feasible region as given by
(\ref{eq:const}).  

Informally, by seeking to maximize the determinant 
of $\mathbf{U}$, subject to the $\ell_\infty$-norm constraints, 
we are finding the minimum volume parallelepiped that fits the observed samples.
Since $\mathbf{U}$ is the inverse of $\mathbf{A}$, up to an ATM, maximizing
$\mathbf{U}$ is effectively finding the minimal $\mathbf{U}^{-1}$ which maps
the $\ell_\infty$-ball to the observed samples. This is depicted in Figure \ref{fig:inverse}. The quantity $c \cdot \sigma$ present in (\ref{eq:const}) adds a margin to our constraint to account for the presence of AWGN.
In practice, values close to $c=3$ appear to be optimal as this captures 99\%
of the additive Gaussian channel noise.
More careful analysis is warranted to understand how the performance of this
algorithm is affected by the value of $c$.  More optimal methods of margining
the constraints of our optimization problem, such as the method of ellipsoid
peeling, given in \cite{silverman1980minimum}, or other methods presented in \cite{boyd_opt}, may lead to further improvements in performance. However, the simple margin presented here works well in practice.

In order for (\ref{eq:obj})--(\ref{eq:const}) to be a meaningful problem, we
require $\mathbf{Y}$ to be full rank.  
If 
$\mathbf{Y}$ is not full rank, then the maximum does not exist, formally shown
in Proposition \ref{prop:fullrank}
below, which is proven in Appendix \ref{apx:prop_one}.  
\begin{prop}
\label{prop:fullrank}
If the matrix $\mathbf{Y}$ is not full rank, then (\ref{eq:obj})--(\ref{eq:const}) 
is unbounded above. Conversely, if $\mathbf{Y}$ is full rank and $k \geq n$, then
(\ref{eq:obj})--(\ref{eq:const}) is bounded above and feasible.  
\end{prop}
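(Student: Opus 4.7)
The plan splits along the two directions of the proposition.

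For the forward direction, my approach is to exhibit an explicit one-parameter family of feasible matrices whose determinant tends to infinity. If $\mathbf{Y}$ is not full rank, its left null space is nontrivial, so there exists $\mathbf{v} \neq \mathbf{0}$ with $\mathbf{v}^\top \mathbf{y}_i = 0$ for every $i$. Starting from any feasible base matrix $\mathbf{U}_0$ (for instance $\mathbf{U}_0 = \varepsilon \mathbf{I}$ with $\varepsilon$ small enough that $\varepsilon \|\mathbf{y}_i\|_\infty \leq M + c\sigma$), I would consider the rank-one update $\mathbf{U}_t = \mathbf{U}_0 + t\,\mathbf{w}\mathbf{v}^\top$ for some $\mathbf{w}$. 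Because $\mathbf{v}^\top \mathbf{y}_i = 0$, one has $\mathbf{U}_t \mathbf{y}_i = \mathbf{U}_0 \mathbf{y}_i$ for all $i$, so feasibility is preserved regardless of $t$. By the matrix determinant lemma, $\det(\mathbf{U}_t) = \det(\mathbf{U}_0)\bigl(1 + t\,\mathbf{v}^\top \mathbf{U}_0^{-1}\mathbf{w}\bigr)$. Choosing $\mathbf{w} = \mathbf{U}_0 \mathbf{v}$ makes the bracketed factor equal to $1 + t\|\mathbf{v}\|^2$, which grows without bound as $t \to \infty$ and is nonzero for $t > 0$, so $\mathbf{U}_t$ is invertible and $\log|\det \mathbf{U}_t|$ diverges.

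For the converse, feasibility is immediate by the same scaling trick: for $\mathbf{U} = \varepsilon \mathbf{I}$ with $\varepsilon \leq (M + c\sigma)/\max_i \|\mathbf{y}_i\|_\infty$, every constraint in (\ref{eq:const}) holds. The substantive part is the upper bound on the objective. Since $\mathbf{Y}$ is full rank and $k \geq n$, I can select $n$ linearly independent columns of $\mathbf{Y}$ and assemble them into an invertible matrix $\mathbf{Y}_0 \in \mathbb{R}^{n \times n}$, with $|\det \mathbf{Y}_0| > 0$ fixed and independent of $\mathbf{U}$.

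The feasibility constraints force every entry of $\mathbf{U}\mathbf{Y}_0$ to have absolute value at most $M + c\sigma$. Applying Hadamard's inequality to $\mathbf{U}\mathbf{Y}_0$ yields $|\det(\mathbf{U}\mathbf{Y}_0)| \leq n^{n/2}(M+c\sigma)^n$. Since $\det(\mathbf{U}\mathbf{Y}_0) = \det(\mathbf{U})\det(\mathbf{Y}_0)$, this gives
\begin{equation*}
\log|\det \mathbf{U}| \;\leq\; \tfrac{n}{2}\log n + n\log(M+c\sigma) - \log|\det \mathbf{Y}_0|,
\end{equation*}
which is a finite upper bound, completing the proof. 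The only step requiring any care is the forward direction, where I must choose the rank-one perturbation simultaneously to preserve feasibility (handled by using the left null vector) and to preserve invertibility while driving the determinant up (handled by the choice $\mathbf{w} = \mathbf{U}_0\mathbf{v}$); once that construction is in place, both parts are essentially mechanical.
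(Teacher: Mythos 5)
Your proposal is correct, and both directions rest on the same core ideas as the paper's proof: a rank-one perturbation along the left null space of $\mathbf{Y}$ for unboundedness, and the observation that full-rank $\mathbf{Y}$ pins down $\mathbf{U}$ for boundedness. Where you differ is in the level of rigor, and in both places your version is tighter. In the forward direction the paper perturbs a fixed row (the first) by $c_2\mathbf{v}^\top$ and simply asserts the determinant grows with $c_2$; that assertion actually requires the corresponding cofactor pairing to be nonzero, and could fail for an unlucky choice of row. Your choice $\mathbf{w}=\mathbf{U}_0\mathbf{v}$ together with the matrix determinant lemma forces the growth coefficient to be $\|\mathbf{v}\|^2>0$, closing that gap. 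In the converse, the paper argues qualitatively that each row norm $\|\mathbf{u}_i\|_2$ is bounded and that this bounds the objective; your argument makes this quantitative by extracting an invertible $n\times n$ submatrix $\mathbf{Y}_0$, applying Hadamard's inequality to $\mathbf{U}\mathbf{Y}_0$, and dividing by the fixed $|\det\mathbf{Y}_0|$, yielding an explicit finite upper bound. Your feasibility witness $\varepsilon\mathbf{I}$ is also slightly preferable to the paper's $\mathbf{U}=0$, since it lies in the domain where $\log|\det\mathbf{U}|$ is finite. Nothing in your write-up is missing or wrong; it is essentially a cleaned-up version of the paper's argument with one genuine small repair and one explicit bound where the paper is informal.
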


In our model, $\mathbf{Y}$ will be full rank with high probability and thus, we assume that $\mathbf{Y}$ is always full rank and turn our attention to
solving (\ref{eq:obj})--(\ref{eq:const}).
Maximizing the determinant of a positive-semidefinite matrix is a classic problem in convex
optimization.  Unfortunately, the matrix $\mathbf{A}$ is not
necessarily even symmetric and the problem is not convex.  In order to solve the 
problem we apply the MATLAB \texttt{fmincon} solver that uses gradient descent to 
solve non-linear conic optimization problems.  The gradient of the problem given in
(\ref{eq:obj})--(\ref{eq:const}) has the following value (see, for example, \cite{petersen2008matrix}):
%, which, for completeness, is derived in Appendix \ref{apx:theorem_one}:
\begin{equation}
\label{eq:gradient}
\nabla \left( \log | \det \mathbf{U} | \right) =
\left(\mathbf{U}^{-1} \right)^\intercal.
\end{equation}

Before we begin gradient descent, we check that $\mathbf{Y}$ is well conditioned.  
As noted
above, $\mathbf{Y}$ must be full rank for the problem to make sense; however,
if $\mathbf{Y}$ is full rank but poorly conditioned, similar issues will arise
and $\mathbf{U}$ may not invert the channel. Thus, we return
$\texttt{FAIL}$ if $\kappa(\mathbf{Y})$, the condition number of $\mathbf{Y}$, is
larger than $\kappa_{\max}$.
The gradient descent algorithm requires a starting point as input, 
denoted as $\mathbf{U}^{(0)}$. We draw this matrix uniformly at random over the
set of all orthogonal matrices, $O(n)$, using the method described in \cite{DS87}. 
We check that this random $\mathbf{U}^{(0)}$ in fact
satisfies the constraints; if it does not, we generate a new random matrix and
scale the matrix by a constant term until we find a suitable starting condition.
This is guaranteed to find a suitable $\mathbf{U}^{(0)}$ in at most $\log_2
\kappa_{\max}$ iterations. The algorithm is summarized as Algorithm
\ref{alg:fit} below. 

 \begin{algorithm}
 \caption{Fitting a Parallelepiped}
 \label{alg:fit}
 \begin{algorithmic}[1]
 \renewcommand{\algorithmicrequire}{\textbf{Input:}}
 \renewcommand{\algorithmicensure}{\textbf{Output:}}
 \REQUIRE An $n \times k$ matrix of received samples $\mathbf{Y}$.
 \ENSURE  An estimate of the inverse of the channel gain matrix $\mathbf{U}$,
and an estimate of the transmitted symbols $\hat{\mathbf{X}}$
 %\\ \textit{Initialisation} :
  \IF {$\kappa(\mathbf{Y}) > \kappa_{\mathrm{max}}$}
    \RETURN \texttt{FAIL}
  \ENDIF
  \STATE scale = 1;
  \STATE Draw $\mathbf{U}^{(0)}$ uniformly from O($n$);
  \WHILE {$|\mathbf{U}^{(0)} \mathbf{y}_i|_\infty > 1$ for any $i$}
  \STATE scale = scale * 2;
  \STATE Draw $\mathbf{U}^{(0)}$ uniformly from O($n$);
  \STATE $\mathbf{U}^{(0)}$ = $\mathbf{U}^{(0)}$ / scale;
  \ENDWHILE
  \STATE Run gradient descent over (\ref{eq:obj})--(\ref{eq:const}) starting at 
	$\mathbf{U}^{(0)}$ to find an optimal value of $\mathbf{U}$;
 \RETURN $\mathbf{U}$, $\hat{\mathbf{X}} = \lfloor \mathbf{UY} \rceil$.
 \end{algorithmic} 
 \end{algorithm}

\subsection{Modified Gradient Descent}
\label{sub:mod}
In practice, Algorithm \ref{alg:fit} works well and empirical results show that
it always returns solutions to (\ref{eq:obj})--(\ref{eq:const}) when $k$ is
sufficiently large.  However, the problem geometry, which is studied in Section
\ref{sec:prelim}, is non-convex and there is in fact a small but non-zero probability that gradient
descent will not terminate at a global optimum.  Moreover, in Section \ref{sec:prelim}, we show that Algorithm \ref{alg:fit} will only fail to find a global optima at specific dimensions, and further that its probability of failure is low. 
In this subsection, we present a
modified gradient descent algorithm, Algorithm \ref{alg:fit_corner}, shown below, which is 
motivated by the theory in Section \ref{sec:prelim}, where we show that
all strict solutions to (\ref{eq:obj})--(\ref{eq:const}) lie on vertices of the problem
boundary. Algorithm \ref{alg:fit_corner} always terminates at a vertex of the feasible 
region, and it is conjectured that, for general $n$, when $k$ is slightly larger than $n$, all
non-singular vertices are global optima and solutions to our channel estimation
problem, implying that Algorithm \ref{alg:fit_corner} will always be correct.

Before describing Algorithm \ref{alg:fit_corner}, we make the following
observations. The feasible region is bounded by $2kn$ halfspaces, forming an
$\mathbb{R}^{n^2}$ dimensional polytope.  We denote this polytope as $\mathcal{P}$.
Notice that any row of $\mathbf{U}$ can be changed without effecting
whether or not the constraints on each of the other rows of $\mathbf{U}$ are 
satisfied. Further, we say that 
$\mathbf{U}$ is a vertex if it is a vertex of the $n^2$-dimensional 
polytope which defines the problem boundary. 

Algorithm \ref{alg:fit_corner} begins by choosing a starting position in the
same manner as Algorithm \ref{alg:fit} and performing gradient descent.  
In Section \ref{sec:prelim} it is shown that not only are all optima
contained on the problem boundary but also that the only possible
critical points on the problem boundary exist as low-dimensional affine
subspaces, along which the objective function is constant valued.  
If gradient descent reaches such a subspace, then Algorithm
\ref{alg:fit_corner} continues by choosing a direction on this subspace at
random and moving in that direction until the edge of the feasible region is
reached.  At this point, the algorithm has either reached a vertex, in which
case it terminates, or gradient descent is continued from this point. This
process can be repeated until a vertex is reached.

 \begin{algorithm}
 \caption{Modified Gradient Descent}
 \label{alg:fit_corner}
 \begin{algorithmic}[1]
 \renewcommand{\algorithmicrequire}{\textbf{Input:}}
 \renewcommand{\algorithmicensure}{\textbf{Output:}}
 \REQUIRE An $n \times k$ matrix of received samples $\mathbf{Y}$.
 \ENSURE  An estimate of the inverse of the channel gain matrix $\mathbf{U}$,
and an estimate of the transmitted symbols $\hat{\mathbf{X}}$
 %\\ \textit{Initialisation} :
  \IF {$\kappa(\mathbf{Y}) > \kappa_{\mathrm{max}}$}
    \RETURN \texttt{FAIL}
  \ENDIF
  %\STATE scale = 1; i=0;
  %\STATE Draw $\mathbf{U}^{(0)}$ uniformly from O($n$);
  %\WHILE {$|\mathbf{U}^{(0)}\mathbf{y}_i|_\infty > 1$ for any $i$}
  %\STATE scale = scale * 2;
  %\STATE Draw $\mathbf{U}^{(0)}$ uniformly from O($n$);
  %\STATE $\mathbf{U}^{(0)}$ = $\mathbf{U}^{(0)}$ / scale;
  %\ENDWHILE
  \STATE Generate $\mathbf{U}^{(0)}$ as described in Algorithm 1.
  \WHILE {$\mathbf{U}^{(i)}$ is not at a vertex of the feasible region}
  \STATE Run gradient descent over (\ref{eq:obj})--(\ref{eq:const}) starting at
	$\mathbf{U}^{(i)}$
  \FOR {$j$ in $\{0,\ldots, n-1\}$}
 	\IF {$\mathbf{U}^{(i)}_j$ is not at a vertex}
		\STATE Move in the direction of zero gradient to a vertex;
	\ENDIF
 \ENDFOR
 \STATE $\mathbf{U}^{(i+1)} = \mathbf{U}^{(i)}$; $i$++;
 \ENDWHILE
 \RETURN $\mathbf{U}$, $\hat{\mathbf{X}} = \lfloor \mathbf{UY} \rceil$.
 \end{algorithmic} 
 \end{algorithm}

\subsection{Interior-Point Method}
\label{sub:interior}
Both Algorithms \ref{alg:fit} and \ref{alg:fit_corner} perform gradient descent
on an objective function subject to a convex set of constraints.  A na\"ive
implementation of gradient descent will not stay within these constraints.
There are many algorithms to perform constrained optimization, for an overview,
see \cite{boyd_opt}.  

For completeness, we propose using an interior-point method with a standard logarithm barrier
function to perform the gradient descent step in both Algorithms \ref{alg:fit}
and \ref{alg:fit_corner}.
\footnote{
In Section \ref{sec:prelim}, we prove that optima of our problem lie on the
boundary of the feasible region.  One may notice that the
interior-point method is not the most efficient algorithm given this fact.  We
base the analysis contained in this paper on gradient descent because it makes
the analysis tractable and more easily understood.  We defer to investigating more
efficient algorithms for this problem to be a topic of future research.
}
This method is attractive because it is simple to
implement and has reasonable computational complexity and numerical stability.
The results in Section \ref{sec:empirical} are obtained using this method.  We
formulate (\ref{eq:obj})--(\ref{eq:const}) into an unconstrained optimization 
function by using the following barrier function:
\begin{equation}
B(\mathbf{U}, \mu) = f(\mathbf{U}) - \mu \sum_{i=1}^k \sum_{j=1}^n \left(
\log ( \mathbbm{1} - u_j y_i ) + 
\log ( \mathbbm{1} + u_j y_i ) \right)
\end{equation}
The gradient of the barrier function can be computed using the expression
derived in (\ref{eq:gradient}).  This is given by:
\begin{align}
\nabla B (\mathbf{U}, \mu) = \left(\mathbf{U}^{-1}\right)^\intercal 
&- \mu \sum_{i=1}^k \sum_{j=1}^n 
\frac{1}{1-\mathbf{u}_j \mathbf{y}_i} \mathbf{e}_j \mathbf{y}_i^\intercal
\nonumber \\
&+ \mu \sum_{i=1}^k \sum_{j=1}^n 
\frac{1}{1+\mathbf{u}_j \mathbf{y}_i} \mathbf{e}_j \mathbf{y}_i^\intercal
\end{align}
where $\mathbf{e}_j$ is the $j$-th standard basis of $\mathbb{R}^n$. Notice that
this will take $O\left(n^2 k\right)$ operations per step.  The dominating
operation at each step is computing the product $\mathbf{UY}$.

\subsection{Further Extensions}
\label{sub:extentions}
%For the case of complex-valued MIMO channels, (\ref{eq:obj}) can be replaced with the 
%objective $\log \det \mathbf{U}^H \mathbf{U}$.  Notice that this objective
%function will always be real-valued.  Gradient-descent-based algorithms will
%still work well to solve such problems; however the implementation of an
%efficient solver for the complex case is substantially more complicated.  Additionally, many of the theoretical results we derive in Section \ref{sec:prelim} hold only for the real case.  For simplicity of exposition, we focus the empirical results in this paper on the real-valued case.  In Section \ref{sub:complex_theory} we elaborate on which of our theoretical results we derive for the real case also hold for the complex case.
The results in this paper readily extend to complex channels.  We can accomplish this by mapping an $n \times n$-dimensional complex channel to a $2n \times 2n$-dimensional real channel in the usual manner.  Note that this mapping imposes additional constraints on our optimization problem.  However, in Section \ref{sec:prelim}, we derive the necessary and sufficient conditions for our Algorithm 2 to return a correct solution to the blind decoding problem.  These results directly imply that we may simply ignore these constraints and solve the $2n \times 2n$-dimensional real problem by using Algorithm 2 on (\ref{eq:obj})--(\ref{eq:const}).  Since this approach will return the correct channel gain matrix, up to a factor of a $2n$-dimensional ATM, the amount of side information needed to recover this ATM will be identical to the $2n$-dimensional real case.  Whether or not the structure present in complex channels can be utilized to create a more efficient algorithm or reduce the required amount of side information is an open question.

When there
are more receivers than transmitters, the receiver may still apply our algorithms
by simply discarding all but $n$ received signals, but this is clearly
suboptimal.  Further optimization of this case is a topic of future research.  When there are more transmitters than receivers then the nullspace of the channel gain
matrix will always be non-trivial and thus (\ref{eq:obj})--(\ref{eq:const}) will
be unbounded above.  In this case, if we assume that the transmit signals are uncoded and the channel gain matrix is full rank, as is the case in this work, then detecting signals transmitted over this channel is not a meaningful problem.

%This work can also be easily extended to considering rectangular channels.  For
%a system with $n$ transmitters and $m$ receivers, the objective function can be
%written in the form:
%\begin{equation}
%\frac{1}{2} \log \det \mathbf{U}^\intercal \mathbf{U},
%\end{equation}
%which for square matrices reduces to the same expression as (\ref{eq:obj}).
%However, much of the theory derived in Section \ref{sec:prelim} makes use of the
%fact that $\mathbf{U}$ is square and the fact that we are directly computing the 
%determinant of $\mathbf{U}$ rather than the determinant of a quadratic form 
%involving $\mathbf{U}$.   All theoretical gaurantees on the performance of
%Algorithm \ref{alg:fit} in this work do not extend to rectangular MIMO channels.

%%%%%%%%%%%%%%%%%%%%%%%%%%%%%%%%%%%%%%%%%%%%%%%%%%%%%%%%%%%%%%%%%%%%%%%%%%%%%%%
\section{Theoretical Performance Guarantees}
\label{sec:prelim}
Proving correctness of an algorithm that solves a non-convex problem is often a difficult task.
In this section, we lay the groundwork for such an analysis by studying the noiseless case.  We
provide guarantees on the correctness of Algorithm \ref{alg:fit_corner} for
$n=2,3$, and $4$. The motivation for studying Algorithm \ref{alg:fit_corner}
over Algorithm \ref{alg:fit} will become apparent in the following subsections,
as will the difficultly of proving the correctness of our algorithms for more general or
larger values of $n$.  The results in this section are strongly supported by the empirical results shown in Section \ref{sec:empirical}.

For the
results in this section, we suppose $\sigma = 0$.  We also focus on the BPSK
case, so $\mathbf{x}_i \in \{-1, +1\}^n$. Deriving matching theoretical results for larger
$M$ and in the presence of noise remains an open problem; however, empirical results, given in Section \ref{sec:empirical}, show that Algorithms 1 and 2 still work extremely well in these cases.
As mentioned in Section \ref{sec:fit}
the problem (\ref{eq:obj})--(\ref{eq:const}) is a non-convex optimization
problem, 
and thus has several optima.  
Our analysis of gradient descent applied to this problem will proceed
as follows. First,
we will show in Section \ref{sub:had_red} that if $n=k$, then the optimization problem reduces to the
\emph{Hadamard Maximal Determinant problem}, which asks for the maximum value of
an $n$-dimensional matrix whose entries are contained on the unit disk.  
We will use this result to establish
guiding intuition for the remainder of this section.  Additionally, we show that completely understanding the problem geometry when $n=k$ would solve the Hadamard Maximal Determinant Problem; since the latter is considered extremely difficult, this implies that a complete theoretical analysis of our problem is likely out of reach.

In this section, we present a set of theorems that describe when and why Algorithms \ref{alg:fit} and \ref{alg:fit_corner} correctly solve the blind decoding problem.  The proofs of these theorems are contained within the appendices of this work.  The remainder of this section is organized as follows.
In Section \ref{sub:behavior} we will show 
that Algorithm \ref{alg:fit_corner} will always terminate at a vertex of the feasible 
region and that all strict optima of (\ref{eq:obj}) lie on these vertices.
In Section \ref{sub:vertex}, we will state the necessary conditions under which the 
set of global optima contains the solutions to the blind decoding problem. 
Finally, we will conclude by stating our theoretical guarantees; namely, necessary and sufficient conditions for Algorithm \ref{alg:fit_corner} to correctly solve the blind decoding problem for the cases $n=2,3,$ and
$4$. Additionally, we conjecture about the performance of Algorithms \ref{alg:fit} and
\ref{alg:fit_corner} for larger $n$.  Note that in practice, values of $n \leq 4$ captures nearly all MIMO systems that are currently in use today.

\subsection{Reduction to the Hadamard Maximal Determinant Problem}
\label{sub:had_red}
We now proceed by showing the equivalence between (\ref{eq:obj})--(\ref{eq:const}) and
the Hadamard Maximal Determinant problem for the case $n=k$.  This problem is related to finding dimensions at which Hadamard matrices exist.  A Hadamard matrix is a $\{-1,+1\}$-valued matrix with mutually 
orthogonal rows and columns. Hadamard matrices are known to exist for $n=1,2^k$, for 
all $k \in \mathbb{N}$, and are conjectured to exist when $n \equiv 0 \mod 4$.
\begin{lemma}
\label{lem:had_red}
There exists an efficient algorithm that solves
(\ref{eq:obj})--(\ref{eq:const}) when $n=k$, if and only if there exists an
efficient solution to the Hadamard Maximal Determinant problem.
\end{lemma}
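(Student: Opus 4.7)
The plan is to exhibit a bijective reduction between the two optimization problems via the linear change of variables $\mathbf{M} := \mathbf{U}\mathbf{Y}$. In the noiseless BPSK case with $n = k$, the constraints $\|\mathbf{U}\mathbf{y}_i\|_\infty \leq 1$ for $i = 1,\ldots,n$ are equivalent to requiring every entry of the $n \times n$ matrix $\mathbf{M}$ to lie in $[-1,1]$. By Proposition \ref{prop:fullrank} we may assume $\mathbf{Y}$ is full rank, so $\mathbf{U} \mapsto \mathbf{U}\mathbf{Y}$ is a bijection on $n\times n$ invertible matrices. Since $|\det \mathbf{U}| = |\det \mathbf{M}|/|\det \mathbf{Y}|$ and $|\det \mathbf{Y}|$ is a fixed positive constant in this transformation, maximizing $\log|\det \mathbf{U}|$ subject to (\ref{eq:const}) is equivalent to maximizing $|\det \mathbf{M}|$ subject to $\|\mathbf{M}\|_{\max} \leq 1$. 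This is exactly the Hadamard Maximal Determinant problem, and the determinant is multilinear in the columns of $\mathbf{M}$, so the optimum over $[-1,1]^{n\times n}$ is always attained at a $\pm 1$ vertex.

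For the forward direction, I would assume an efficient solver $\mathcal{A}$ for (\ref{eq:obj})--(\ref{eq:const}) with $n = k$, and use it to solve an arbitrary $n$-dimensional Hadamard MaxDet instance by calling $\mathcal{A}$ on the input $\mathbf{Y} = \mathbf{I}_n$. This input is realizable within our system model since we may take any invertible $\mathbf{X} \in \{-1,+1\}^{n \times n}$ and set $\mathbf{A} = \mathbf{X}^{-1}$. The returned $\mathbf{U}^\star$ is an optimizer of $|\det \mathbf{U}|$ over $\|\mathbf{U}\|_{\max}\leq 1$, and the multilinearity argument above lets us round it to a $\pm 1$-valued Hadamard-optimal matrix without loss.

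For the reverse direction, I would assume an efficient Hadamard MaxDet solver $\mathcal{H}$ and solve an arbitrary instance $\mathbf{Y}$ of (\ref{eq:obj})--(\ref{eq:const}) by invoking $\mathcal{H}$ to obtain an optimal $\pm 1$ matrix $\mathbf{M}^\star$ and returning $\mathbf{U}^\star := \mathbf{M}^\star \mathbf{Y}^{-1}$. The overhead is a single matrix inversion, which is polynomial time.

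I expect the main obstacle to be less mathematical than notational: one must pin down a model of computation under which ``efficient'' is meaningful for both problems, since they each have real-valued data. Under any standard convention (bit-model with the conditioning assumption that Algorithm \ref{alg:fit} already enforces via its rejection test on $\kappa(\mathbf{Y})$, or a real-RAM model) the single matrix inversion needed in each direction is polynomial-time, and the one-line change of variables then converts a solution to one problem into a solution to the other verbatim.
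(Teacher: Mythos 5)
Your proposal is correct and follows essentially the same route as the paper's proof: the change of variables $\tilde{\mathbf{U}}=\mathbf{UY}$ converting the constraints to $\tilde{\mathbf{U}}\in[-1,+1]^{n\times n}$, followed by the multilinearity argument placing the optimum at a $\pm 1$ vertex. You are in fact slightly more thorough, since you spell out both directions of the reduction and the (polynomial) cost of the matrix inversion, whereas the paper only explicitly argues the direction from a solver of (\ref{eq:obj})--(\ref{eq:const}) to the Hadamard Maximal Determinant problem.
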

\begin{proof}
We show how, given an efficient algorithm to solve (\ref{eq:obj})--(\ref{eq:const}), we can
obtain solutions to the Hadamard Maximal Determinant problem.
%Consider the following routine which optimizes (\ref{eq:obj})--(\ref{eq:const})
%when $n=k$.  
Given an arbitrary, full-rank, set of $k$ samples of $\mathbf{Y}$, by setting $\tilde{\mathbf{U}}=\mathbf{UY}$, we arrive at the following optimization problem, equivalent to (\ref{eq:obj})--(\ref{eq:const})
%from which we can recover the
%optimal $\mathbf{U}$ by multiplying $\tilde{\mathbf{U}}$ on the right by
%$\mathbf{Y}^{-1}$. 
\begin{align}
& \underset{\tilde{\mathbf{U}}}{\text{maximize}} 
& & \log | \det \tilde{\mathbf{U}}| \label{eq:hobj} \\
& \text{subject to}
& & \tilde{\mathbf{U}} \in [-1,+1]^{n \times n} \label{eq:hconst}.
\end{align}
Notice that for any value of $n$
\begin{equation}
\underset{\mathbf{w} \in [-1,+1]^{n \times n}}{\mathrm{max}} |\det \mathbf{W}| =
\underset{\mathbf{w} \in \{-1,+1\}^{n \times n}}{\mathrm{max}} |\det
\mathbf{W}|. 
\end{equation}
This is because $\det \mathbf{W}$ is linear in the columns of $\mathbf{W}$ and
so the maximum is obtained at a vertex of $[-1,+1]^{n \times n}$. Thus, we may as well 
consider the maximum over $\{-1, +1\}^n$ instead of 
$[-1, +1]^n$. The optimal $\tilde{\mathbf{U}}$ is the solution to the Hadamard Maximal 
Determinant Problem.
\end{proof}
This observation has many consequences.  Many questions regarding the
Hadamard Maximal Determinant problem have remained open since the problem was
originally posed by Hadamard in 1893 \cite{hadamard1893resolution}.  
Even for moderately sized values of $n$, the
maximum value obtainable by (\ref{eq:hobj}) remains unknown or unverified.  
However, our reduction holds only for $n=k$ and, empirically, the 
program given by (\ref{eq:obj})--(\ref{eq:const}) appears to become easier as 
$k$ grows relative to $n$. Roughly, as we add constraints, we are
removing vertices from the feasible region in a way that leaves vertices
that correspond to solutions. As we show in the next subsection, Algorithm 2 is guaranteed to 
terminate at a vertex, so, removing ``bad'' vertices 
increases the likelihood that we terminate at a vertex that corresponds to a
solution to the blind decoding problem. The following facts are consequences of
this computational equivalence between the Hadamard Maximal Determinant problem
and the blind decoding problem (see for example \cite{macwilliams2006theory}, 
\cite{tressler2004survey}, or \cite{brent2013minors}):
\begin{itemize}
\item For values of $n$ such that Hadamard matrices exists, the global optima to
(\ref{eq:hobj})--(\ref{eq:hconst}) is obtained if and only if $\tilde{\mathbf{U}}$ 
is a Hadamard matrix. 
\item The value of the objective function at vertices of the problem boundary, 
which are the only strict optima of (\ref{eq:obj})--(\ref{eq:const}), 
correspond to the set of possible determinants of$ \{-1,+1\}$-valued
matrices. Understanding this set for general $n$ is an open problem and is considered more difficult than establishing an upper bound 
on the maximum value of the determinant.
\item For $n=k$, there is a one-to-one correspondence between global optima and 
distinct maximal-determinant $[-1,+1]$-valued matrices.
\end{itemize}
Finally, we state the following lemma, which follows directly from the proof of 
Lemma \ref{lem:had_red}:
\begin{lemma}
\label{cor:had_iff}
There will always be a global optimum of (\ref{eq:obj})--(\ref{eq:const}) on a vertex of a feasible region.  If a Hadamard matrix exists,
then all global optima of (\ref{eq:obj})--(\ref{eq:const}) are strict and lie on vertices.
\end{lemma}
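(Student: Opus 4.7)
The plan is to exploit the multilinearity of the determinant: $\det \mathbf{U}$ is linear in each row $\mathbf{u}^{(j)}$ when the other rows are held fixed. Moreover, the constraints (\ref{eq:const}) decouple across rows, so each row is constrained to lie in the convex polytope $P_j = \{\mathbf{v} \in \mathbb{R}^n : |\mathbf{v}\cdot\mathbf{y}_i| \leq 1 \text{ for all } i\}$, and the full feasible region is the Cartesian product $P_1 \times \cdots \times P_n$. Consequently, vertices of the full feasible polytope are exactly tuples of vertices of the individual $P_j$.

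For the first claim, I would start from any global optimum $\mathbf{U}^*$. For each row index $j$, fix $\mathbf{u}^{(i)*}$ for $i \neq j$; then $\mathbf{u}^{(j)} \mapsto \det \mathbf{U}$ is linear in $\mathbf{u}^{(j)} \in P_j$, so $|\det \mathbf{U}|$ is the absolute value of a linear function on a convex polytope. Its maximum is attained at a vertex of $P_j$, and we can replace $\mathbf{u}^{(j)*}$ with such a vertex without decreasing the objective. Iterating over $j = 1, \ldots, n$ produces a vertex of the full feasible region attaining the global maximum.

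For the second claim, I would invoke the Hadamard reduction from the proof of Lemma \ref{lem:had_red}. Since $\mathbf{Y}$ has rank $n$, pick $n$ linearly independent columns $\mathbf{y}_{i_1}, \ldots, \mathbf{y}_{i_n}$ forming an invertible $n \times n$ matrix $\mathbf{Y}_0$. The constraints force $\mathbf{W} := \mathbf{U}\mathbf{Y}_0 \in [-1,+1]^{n\times n}$, so $|\det \mathbf{U}| = |\det \mathbf{W}|/|\det \mathbf{Y}_0| \leq n^{n/2}/|\det \mathbf{Y}_0|$, with equality only when $\mathbf{W}$ is a Hadamard matrix, by the standard equality case of Hadamard's inequality. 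When a Hadamard matrix exists in dimension $n$ this bound is tight, so at any global optimum $\mathbf{U}^*$ every entry of $\mathbf{U}^*\mathbf{Y}_0$ lies in $\{-1,+1\}$. Hence for each row $j$, the $n$ constraints $\mathbf{u}^{(j)*}\cdot\mathbf{y}_{i_\ell} = \pm 1$ are simultaneously tight and linearly independent (as $\mathbf{Y}_0$ is invertible), pinning $\mathbf{u}^{(j)*}$ at an isolated vertex of $P_j$. It follows that $\mathbf{U}^*$ itself is a strict vertex optimum.

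I expect the main obstacle to be nailing down the precise meaning of ``strict'' and confirming that strictness transfers cleanly from the per-row argument to the full matrix. The row-by-row argument above pins $\mathbf{U}^*$ uniquely by the $n^2$ tight equalities $\mathbf{U}^*\mathbf{Y}_0 = \mathbf{H}$ for the corresponding Hadamard matrix $\mathbf{H}$, so no positive-dimensional face of maximizers can pass through $\mathbf{U}^*$; this should suffice. A secondary subtlety is that when $k > n$ the choice of $\mathbf{Y}_0$ is not canonical, but since the argument holds for any such choice the conclusion is invariant, and any additional tight constraints coming from the remaining columns of $\mathbf{Y}$ only reinforce that $\mathbf{U}^*$ is a vertex of the enlarged feasible region.
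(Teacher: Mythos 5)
Your first claim is handled correctly and in essentially the paper's way: the determinant is multilinear in the rows, the constraints decouple into a product of row-polytopes $P_1\times\cdots\times P_n$ (bounded because $\mathbf{Y}$ is full rank), and a row-by-row replacement moves any global optimum to a vertex without decreasing $|\det\mathbf{U}|$. This is exactly the mechanism behind the reduction in Lemma \ref{lem:had_red}, which is all the paper itself invokes.

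The second claim has a genuine gap when $k>n$. Your argument hinges on the assertion that the bound $|\det\mathbf{U}|\le n^{n/2}/|\det\mathbf{Y}_0|$ is \emph{tight} for the constrained problem, so that the equality case of Hadamard's inequality forces $\mathbf{U}^*\mathbf{Y}_0\in\{-1,+1\}^{n\times n}$. That is only automatic when $k=n$ (where $\mathbf{Y}_0=\mathbf{Y}$ and $\mathbf{U}=\mathbf{H}\mathbf{Y}^{-1}$ is feasible for any Hadamard $\mathbf{H}$, recovering Lemma \ref{lem:had_red}). For $k>n$ it fails on two counts. First, the bound depends on which $n$ columns you pick: in the noiseless BPSK model $\mathbf{Y}_0=\mathbf{A}\mathbf{X}_0$, and if the chosen $\mathbf{X}_0$ is not a maximal-determinant submatrix the bound $n^{n/2}/|\det\mathbf{Y}_0|$ strictly exceeds the true maximum and equality is never attained, so the Hadamard equality case tells you nothing about $\mathbf{U}^*$ --- your closing remark that ``the argument holds for any such choice'' is precisely where this breaks. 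Second, even for the best choice of columns, the remaining $k-n$ constraints can make the bound unattainable (e.g.\ for $n=4$ when $\mathbf{X}$ lacks the maximal subset property, so every $4\times 4$ submatrix has $|\det|\le 8$ while the Hadamard bound is $16$); at a vertex the $n$ tight constraints on different rows may come from \emph{different} column subsets, so a global optimum need not map any single $\mathbf{Y}_0$ into $\{-1,+1\}^{n\times n}$, and strictness does not follow from your computation. Since the paper applies this lemma for general $k$ (in Theorems \ref{thm:vertex} and \ref{thm:n4iff}), the $k>n$ case is the substance of the statement; to close it you would need an argument at the level of Lemma \ref{lem:face} and Corollary \ref{cor:nonstrict}, showing that a non-strict optimum would force the objective to be constant along an interval emanating from a vertex and deriving a contradiction with the cofactor structure of a maximal-determinant vertex, rather than relying on tightness of the unconstrained Hadamard bound.
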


\subsection{Behavior of Algorithm \ref{alg:fit_corner}}
\label{sub:behavior}
In this subsection, we show that Algorithm \ref{alg:fit_corner} is guaranteed to
terminate at a vertex of the feasible reason.  This result is important because all solutions to the blind decoding problem will lie on these vertices, as shown in the following claim.
\begin{claim}
\label{clm:vertex}
Solutions to the blind decoding problem lie on vertices on the feasible
region, defined by (\ref{eq:const}).
\end{claim}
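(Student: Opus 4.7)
The plan is to unpack what a ``solution'' looks like and then directly count tight constraints. In the noiseless BPSK setting studied in Section~\ref{sec:prelim}, a solution is any $\mathbf{U}=\mathbf{T}\mathbf{A}^{-1}$ with $\mathbf{T}$ an ATM. For such $\mathbf{U}$ and any received sample $\mathbf{y}_i=\mathbf{A}\mathbf{x}_i$ we have $\mathbf{U}\mathbf{y}_i=\mathbf{T}\mathbf{x}_i$. Since $\mathbf{T}$ is a signed permutation and $\mathbf{x}_i\in\{-1,+1\}^n$, the vector $\mathbf{T}\mathbf{x}_i$ also lies in $\{-1,+1\}^n$. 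In particular, every coordinate satisfies $|(\mathbf{U}\mathbf{y}_i)_j|=1=M$, so every one of the $kn$ inequality constraints (\ref{eq:const}) is tight at the solution.

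Next I would exploit the row-wise decoupling of the constraints. The constraint $|(\mathbf{U}\mathbf{y}_i)_j|\le 1$ is a linear inequality in $\mathbf{u}^{(j)}$ alone, so the feasible region factors as a product $\mathcal{P}=\mathcal{P}_1\times\cdots\times\mathcal{P}_n$, where $\mathcal{P}_j=\{\mathbf{u}\in\mathbb{R}^n:|\mathbf{u}\cdot\mathbf{y}_i|\le1 \text{ for all } i\}$. A vertex of the product is exactly a tuple whose components are vertices of the respective factors, so it suffices to check that every row $\mathbf{u}^{(j)}$ of $\mathbf{U}$ is a vertex of $\mathcal{P}_j$.

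Fix a row $j$. At the solution we have $\mathbf{u}^{(j)}\mathbf{y}_i=s_{i,j}$ with $s_{i,j}\in\{-1,+1\}$ for all $i=1,\ldots,k$; these are $k$ active half-space constraints (exactly one of the two sides of $|\,\cdot\,|\le1$ is tight since $1\ne -1$). Because $\mathbf{Y}$ is assumed full rank with $k\ge n$ (the regime where Proposition~\ref{prop:fullrank} guarantees boundedness), the columns $\mathbf{y}_1,\ldots,\mathbf{y}_k$ span $\mathbb{R}^n$, and one can extract $n$ linearly independent vectors $\mathbf{y}_{i_1},\ldots,\mathbf{y}_{i_n}$. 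The corresponding $n$ active linear equations uniquely determine $\mathbf{u}^{(j)}$, which is the defining property of a vertex of $\mathcal{P}_j$. Running this for each $j$ places $\mathbf{U}$ at a vertex of $\mathcal{P}$.

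There is essentially no obstacle here; the only care needed is to invoke full-rankness of $\mathbf{Y}$ (already assumed globally in this section, and justified by Proposition~\ref{prop:fullrank}) to guarantee that the $kn$ active hyperplanes contain $n^2$ linearly independent ones, and to note that each $|\cdot|\le1$ constraint contributes at most one tight side. The argument is restricted to the BPSK/noiseless setting considered in Section~\ref{sec:prelim}, and the same bookkeeping would extend to the general $M$-PAM noiseless case provided the constraint bound is taken to be $M-1$ (the actual $\ell_\infty$ radius of $\mathcal{X}^n$) rather than the loose margin $M+c\sigma$ used in (\ref{eq:const}).
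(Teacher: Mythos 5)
Your proof is correct and follows essentially the same route as the paper's: a solution $\mathbf{U}=\mathbf{T}\mathbf{A}^{-1}$ sends $\mathbf{Y}$ to a $\{-1,+1\}$-valued matrix, so all $kn$ constraints are tight, and linear independence of enough of them forces a vertex. Your version is in fact slightly more careful than the paper's one-line count, since you make explicit the row-wise product structure of $\mathcal{P}$ and derive the needed $n$ linearly independent active constraints per row from the full-rankness of $\mathbf{Y}$ (and you correctly flag that the tightness happens at the $\ell_\infty$ radius of the constellation, which is how the constraint is actually used throughout Section~\ref{sec:prelim}).
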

\begin{proof}
This is a simple consequence of the fact that $\mathbf{X} \in \{-1,+1\}^{n
\times k}$. Any $\mathbf{U}$ which takes $\mathbf{UY}$ to $\{-1,+1\}^{n \times
k}$ will satisfy exactly $kn$ constraints from (\ref{eq:const}) with equality.
Since the constrained region is given by a polytope with $2kn$ faces, $kn$ of
which are linearly independent, this corresponds with a vertex of the feasible
region. 
\end{proof}
Notice that because (\ref{eq:obj}) is not convex, Claim \ref{clm:vertex} is not immediately obvious, nor is it obvious that either gradient descent or Algorithm \ref{alg:fit_corner} will terminate at a vertex.  We show that there is a small but non-zero chance that gradient descent will not terminate at a vertex. This motivates the study of Algorithm \ref{alg:fit_corner} over Algorithm \ref{alg:fit}.  Concretely, our first result regarding the behavior of Algorithm \ref{alg:fit_corner} is stated as follows:
\begin{theorem}
\label{thm:vertex}
Algorithm \ref{alg:fit_corner} terminates at a vertex of the feasible region of (\ref{eq:obj})--(\ref{eq:const}) with probability 1.
\end{theorem}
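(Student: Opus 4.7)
My plan is to leverage two structural features of the problem: the constraints are row-separable, so the feasible polytope factors as $P = P_1 \times \cdots \times P_n$ with each $P_j \subset \mathbb{R}^n$ bounded (using Proposition \ref{prop:fullrank} and the fact that $\mathbf{Y}$ is full rank), and multi-linearity of $\det$ makes the objective almost linear along any single-row perturbation. Every face of $P$ is therefore a Cartesian product $F_1 \times \cdots \times F_n$ of faces of the individual $P_j$, and $\mathbf{U}$ is a vertex of $P$ precisely when every row $\mathbf{u}^{(j)}$ is a vertex of its row-polytope $P_j$.

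The central algebraic calculation is that, by multi-linearity of determinant in row $j$,
\begin{equation*}
\det\bigl(\mathbf{U} + t\,\mathbf{e}_j \mathbf{v}^\intercal\bigr) \;=\; (\det \mathbf{U})\bigl(1 + t\,\mathbf{v}^\intercal \mathbf{v}_j\bigr),
\end{equation*}
where $\mathbf{v}_j$ is the $j$-th column of $\mathbf{U}^{-1}$. Hence $f(\mathbf{U}+t\mathbf{e}_j\mathbf{v}^\intercal) = f(\mathbf{U}) + \log|1 + t\,\mathbf{v}^\intercal \mathbf{v}_j|$ is either strictly monotone on either side of its singularity (if $\mathbf{v}^\intercal \mathbf{v}_j \neq 0$) or identically constant in $t$ (if $\mathbf{v}^\intercal \mathbf{v}_j = 0$). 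Now suppose gradient descent has halted at a KKT point $\mathbf{U}$ lying in the relative interior of some face $F_1 \times \cdots \times F_n$ of $P$ with $\dim F_j \geq 1$ for some $j$. Any tangent vector $\mathbf{v}$ of $F_j$ then yields a two-sided feasible row perturbation $\pm\mathbf{e}_j\mathbf{v}^\intercal$, so KKT stationarity forces the directional derivative $\mathbf{v}^\intercal \mathbf{v}_j$ to vanish; by the display, $f$ is therefore constant along the entire segment in $F_j$ through $\mathbf{u}^{(j)}$ in direction $\mathbf{v}$. This segment is precisely the ``zero-gradient subspace'' of row $j$ that Algorithm \ref{alg:fit_corner} is designed to exploit.

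Third, I would argue that the algorithm's random slide along this subspace strictly reduces the face dimension with probability one. Since $P_j$ is bounded, a uniformly random tangent direction from $\mathbf{u}^{(j)}$ meets the relative boundary of $F_j$ in finite distance and generically activates exactly one additional inequality constraint; the directions that simultaneously activate two or more constraints form a measure-zero subset of the direction sphere, so a union bound over the finitely many higher-codimension strata yields the probability-one claim. A Sherman--Morrison computation confirms that a slide along a direction orthogonal to $\mathbf{v}_j$ leaves column $j$ of $\mathbf{U}^{-1}$ (and thus $\mathbf{v}_j$ itself) invariant, so the zero-gradient condition persists on the reduced face and further slides drive row $j$ to a vertex of $P_j$ in at most $n$ steps. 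Combining this with the fact that gradient descent never decreases $f$ and each zero-gradient slide preserves $f$ while strictly decreasing the non-negative integer quantity $\sum_j \dim F_j \in [0,n^2]$, the outer while loop must terminate at a vertex of $P$ after finitely many iterations, where the randomness is drawn from $\mathbf{U}^{(0)}$ and the direction choices. The main obstacle in formalizing this plan is justifying that the interior-point gradient descent of Section \ref{sub:interior}, in the limit as the log-barrier parameter vanishes, actually converges to a KKT stationary point in the relative interior of a well-defined face of $P$ (so that the case analysis above applies cleanly); this is handled by a standard limiting argument using compactness of $P$ and continuity of $f$ on the set of invertible matrices.
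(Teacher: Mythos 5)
Your proof is correct and follows essentially the same route as the paper's (Lemmas 4--7 and Corollary 1): multilinearity of the determinant in the rows forces stationary points onto low-dimensional faces of the row-separable feasible polytope, the objective is exactly constant along the remaining two-sided feasible directions, and the random slide strictly reduces face dimension until a vertex is reached. Your rank-one identity $\det(\mathbf{U}+t\,\mathbf{e}_j\mathbf{v}^\intercal)=(\det\mathbf{U})(1+t\,\mathbf{v}^\intercal\mathbf{v}_j)$ is a cleaner way to obtain the constant-versus-monotone dichotomy than the paper's cofactor-row computation (and quietly covers the case where a stationary row has fewer than $n-1$ active constraints), and your Sherman--Morrison check that the zero-gradient condition persists along the slide makes explicit a detail the paper leaves implicit, but neither changes the structure of the argument.
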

The full proof of this theorem is contained in Appendix \ref{apx:theorem_one}.  Here, we sketch the proof of this theorem which will also give the reader intuition as to why the blind decoding problem can, at reasonable dimensions, be practically solved by gradient descent or other similar optimizations methods.

The first step in the proof of Theorem \ref{thm:vertex} is showing that all optima lie on the problem boundary.  This is formally proven in Lemma \ref{lem:fp_boundary}.  This lemma is a simple consequence of the facts that the objective function consists of the composition of a monotonically increasing function (the logarithm) and a multilinear function (the determinant), and that the problem boundary is convex.  These facts imply that, given any point within the feasible region that does not lie on the boundary, we can always move away from the origin in a way that increases the objective function.

We have already established in Lemma \ref{cor:had_iff} that when a Hadamard matrix exists, all optima are strict.  Conversely, at dimensions where Hadamard matrix do not exist, then one can find non-strict optima.  From Lemma \ref{lem:fp_boundary}, we know that these non-strict optima must lie on the boundary of the feasible region. In Lemma \ref{lem:face} and Corollary \ref{cor:nonstrict}, we further characterize these non-strict optima to show that if a non-strict optima exists, then they must be restricted to a linear interval contained on a face of the polytope which defines the feasible region.  We further show that all strict optima, regardless of the existence of a Hadamard matrix must lie on vertices.  We use these fact together with Lemma \ref{lm:max_optima} to guarantee that Algorithm \ref{alg:fit_corner} reaches a vertex.

In Lemma \ref{lm:n_minus_one}, we show how far gradient descent (or Algorithm \ref{alg:fit}) will proceed towards a vertex.  Notice that the constraints in (\ref{eq:const}) act on each row of $\mathbf{U}$ independently, and $\mathbf{U}$ will be at a vertex of the feasible region when there are exactly $n$ constraints active on each row. In fact, we show in Lemma \ref{lm:n_minus_one} that when gradient descent terminates (meaning we have reached an optima), each row of $\mathbf{U}$ will have at least $n-1$ active constraints per row.  

When this occurs, $\mathbf{U}$ will be on an edge of the feasible region; indeed, there is exactly one line on which $\mathbf{U}$ can move while staying on the boundary of the feasible region and not affecting the active constraints.  We can further show that the objective function must be constant along this line.  Thus, for each row with $n-1$ active constraints, we can simply choose a direction at random and move in this direction until we reach a vertex.  We are thus guaranteed that Algorithm \ref{alg:fit_corner} will terminate at the vertex of the feasible region.

\subsection{Maximal Subset Property}
\label{sub:vertex}
We have established that Algorithm 2 always terminates on a vertex of the
feasible region.  However, such a point may either be a global or local optima to (\ref{eq:obj})--(\ref{eq:const}) and may not correspond to a solution to the blind decoding problem.
In this light, we now study when vertices of the feasible region
correspond to solutions to the blind decoding problem and understanding when, if
ever, local optima of (\ref{eq:obj})--(\ref{eq:const}) exist.
We first derive a sufficient condition for the solutions of the blind decoding
problem to correspond to global optima of (\ref{eq:obj})--(\ref{eq:const}).
More precisely, we will study the following condition of the
set $\{ \mathbf{x}_1, \ldots, \mathbf{x}_k \}$: 

\begin{defn}
A matrix $\mathbf{X} \in [-1, +1]^{n \times k}$, and corresponding set $\text{cols}(\mathbf{X}) \subseteq [-1,+1]^{n}$, with $k \geq n$ has the maximal
subset property if there is a subset $\text{cols}(\mathbf{V}) \subseteq \text{cols}(\mathbf{X})$ of size $n$ so that if
$\mathbf{V} \in \mathbb{R}^{n \times n}$ is the matrix with elements of $\text{cols}(\mathbf{V})$ as
columns, then
\begin{equation}
|\det \mathbf{V}| = \underset{\mathbf{W} \in [-1, +1]^{n \times n}}{\mathrm{max}}
           |\det \mathbf{W}|.
\end{equation}
\end{defn}
 That is, $\mathbf{X}$ has the maximal subset property if it
contains a subset of columns that, when viewed as a matrix, has a determinant
that is maximal among all $[-1, +1]$-valued matrices (and hence also all 
$\{-1,+1\}$-valued matrices).  With this definition, we can now state a sufficient 
condition for solutions to our problem to be global optima.
\begin{lemma}
\label{lm:max_optima}
If $\mathbf{X}$ has the maximal subset property, then, for all ATMs $\mathbf{T}$,
 all matrices in the form $\mathbf{U} = \mathbf{TA}^{-1}$ are global optima 
of (\ref{eq:obj})--(\ref{eq:const}).
\end{lemma}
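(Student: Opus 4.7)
The plan is to establish three facts: (i) $\mathbf{U} = \mathbf{T}\mathbf{A}^{-1}$ is feasible for (\ref{eq:obj})--(\ref{eq:const}); (ii) at this point $|\det \mathbf{U}| = 1/|\det \mathbf{A}|$; and (iii) no feasible matrix can exceed this value, with the maximal subset property providing the key leverage for the upper bound. The substitution trick from the proof of Lemma \ref{lem:had_red} (passing from $\mathbf{U}$ to $\mathbf{U}\mathbf{Y}$) will be the conceptual engine, applied to an $n\times n$ column subset of $\mathbf{Y}$ rather than the whole matrix.

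First I would verify feasibility. Since $\mathbf{y}_i = \mathbf{A}\mathbf{x}_i$ in the noiseless case, we get $\mathbf{U}\mathbf{y}_i = \mathbf{T}\mathbf{A}^{-1}\mathbf{A}\mathbf{x}_i = \mathbf{T}\mathbf{x}_i$, which lies in $\{-1,+1\}^n$ because $\mathbf{T}$ is a signed permutation and $\mathbf{x}_i \in \{-1,+1\}^n$. Thus $\|\mathbf{U}\mathbf{y}_i\|_\infty = 1$ and all constraints hold with equality on $kn$ coordinates. Computing the objective is then immediate: $|\det \mathbf{U}| = |\det \mathbf{T}| \cdot |\det \mathbf{A}|^{-1} = 1/|\det \mathbf{A}|$, since every ATM has determinant $\pm 1$.

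The key step is the matching upper bound. Let $\mathbf{V}$ be the $n \times n$ submatrix of $\mathbf{X}$ guaranteed by the MSP, and let $\mathbf{Y}_V = \mathbf{A}\mathbf{V}$ denote the corresponding columns of $\mathbf{Y}$. For \emph{any} feasible $\mathbf{U}$, the product $\mathbf{U}\mathbf{Y}_V = \mathbf{U}\mathbf{A}\mathbf{V}$ has every entry in $[-1,+1]$ because each of its columns is $\mathbf{U}\mathbf{y}_j$ for some $j$. The MSP then yields
\begin{equation*}
|\det(\mathbf{U}\mathbf{A})| \cdot |\det \mathbf{V}| \;=\; |\det(\mathbf{U}\mathbf{A}\mathbf{V})| \;\leq\; \max_{\mathbf{W} \in [-1,+1]^{n \times n}} |\det \mathbf{W}| \;=\; |\det \mathbf{V}|.
\end{equation*}
Since $|\det \mathbf{V}| \geq 1 > 0$ (the identity matrix alone witnesses this lower bound on the maximum), I can divide through to obtain $|\det \mathbf{U}| \leq 1/|\det \mathbf{A}|$, matching the value achieved at $\mathbf{T}\mathbf{A}^{-1}$ and completing the argument.

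I do not anticipate a serious obstacle: the proof is essentially a one-line computation once the MSP is unpacked, and the feasibility check is routine. The elegance, and the reason the definition was set up this way, is that the MSP is engineered to make the upper bound tight: embedding an extremal $[-1,+1]$-valued matrix as a column subset of $\mathbf{X}$ certifies the optimum no matter which ATM $\mathbf{T}$ we consider. This is also why subsequent sections focus on bounding the probability that $\mathbf{X}$ satisfies the MSP, rather than on the optimization geometry itself.
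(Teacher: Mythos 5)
Your proposal is correct and follows essentially the same route as the paper: the paper's Appendix B proves the same upper bound $|\det(\mathbf{UA})|\le 1$ by applying the MSP submatrix $\mathbf{V}$ (its Claim states that any $\mathbf{M}$ with $\|\mathbf{Mx}\|_\infty\le 1$ on $\mathrm{cols}(\mathbf{X})$ has $|\det\mathbf{M}|\le 1$, argued by the contrapositive of your displayed inequality), and then notes that $\mathbf{U}=\mathbf{TA}^{-1}$ attains this bound. Your write-up merely makes the feasibility check and the achieved objective value explicit, which the paper leaves implicit.
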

Lemma \ref{lm:max_optima} is proved in Appendix \ref{apx:lemma_three}.  For small $n$, 
we can compute the probability that a set of $k$ samples has the maximal subset property; this result is given in Appendix \ref{sec:rank}.
In Section \ref{sec:empirical} we show that the empirical success probability of Algorithm \ref{alg:fit} with $k$ samples exactly matches the probability distribution derived in Appendix \ref{sec:rank}.

It is natural to ask whether the converse of Lemma \ref{lm:max_optima} is true. In fact,
for $M>2$, an explicit counterexample exists, found through computer simulation,
that shows that the maximal subset property is not a necessary condition. This
is important because the probability of finding a maximal subset through uniform 
sampling decreases rapidly as $M$ and $n$ grow. Indeed, this agrees with our
empirical observations, specifically Table \ref{tb:samples} found in Section \ref{sec:empirical}, which show that the
increase in $k$ required to maintain a constant success probability (assuming
uniform sampling) appears less than quadratic in $M$. Having established the
maximum subset property as a sufficient condition, we now continue our analysis
of the geometry of our non-convex optimization problem by considering specific
values of $n$.
 
\subsection{The Cases $n=2$ and $n=3$}
\label{sub:n23}
For $n=2$ and $n=3$, it can easily be checked that 
all full-rank matrices in $\{-1,+1\}^{n \times n}$ have the maximal subset
property. In other words, the set of possible values of the determinant contains two possible
absolute values, 0 and $2^{n-1}$. For $n=2$, one can verify that ATMs are the only orthogonal
matrices that map elements of this set to other elements of this set.  Further, since a Hadamard matrix exists at $n=2$, all optima are strict and vertices of the feasible region.  These facts imply the following theorem. 
\begin{theorem}
\label{thm:n23iff}
For $n=2$, Algorithms \ref{alg:fit} and \ref{alg:fit_corner} are 
correct (that is, finds a solution to the blind decoding problem) 
if and only if $\mathbf{X}$ has the maximal subset property.
\end{theorem}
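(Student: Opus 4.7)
The proof breaks into two directions. For \emph{necessity}, the argument is a rank count: at $n=2$, every full-rank matrix in $\{-1,+1\}^{2\times 2}$ has determinant $\pm 2$, so the maximal subset property fails only if no two columns of $\mathbf{X}$ are linearly independent, i.e., all columns lie in a single parallel class $\{\pm\mathbf{v}\}$. Then $\mathrm{rank}(\mathbf{Y})\le\mathrm{rank}(\mathbf{X})\le 1$, and by Proposition \ref{prop:fullrank} the problem is unbounded; the condition-number check in both algorithms returns \texttt{FAIL} rather than a solution.

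For \emph{sufficiency}, the plan is to describe the feasible region explicitly. Apply the substitution $\mathbf{v}^{(i)}=\mathbf{u}^{(i)}\mathbf{A}$, which transforms each constraint $|\mathbf{u}^{(i)}\mathbf{y}_j|\le 1$ into $|\mathbf{v}^{(i)}\mathbf{x}_j|\le 1$. Since $\mathbf{x}_j\in\{\pm(1,1),\pm(1,-1)\}$ and both parallel classes are present by the maximal subset property, the $k$ constraints on each row collapse to just $|v^{(i)}_1+v^{(i)}_2|\le 1$ and $|v^{(i)}_1-v^{(i)}_2|\le 1$, carving out a square in $\mathbf{v}^{(i)}$-coordinates with vertices $\{(\pm 1,0),(0,\pm 1)\}$. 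Pulling back through $\mathbf{A}^{-1}$, the per-row feasible polytope has vertex set equal to the four signed rows of $\mathbf{A}^{-1}$. Consequently, every vertex of the full $n^2$-dimensional feasible polytope has each row equal to a signed row of $\mathbf{A}^{-1}$; those vertices with linearly independent rows are precisely $\mathbf{U}=\mathbf{T}\mathbf{A}^{-1}$ for some ATM $\mathbf{T}$, and all have $|\det\mathbf{U}|=|\det\mathbf{A}^{-1}|$, which by Lemma \ref{lm:max_optima} is the global optimum of the objective. The remaining (singular) vertices have $\det\mathbf{U}=0$ and hence objective value $-\infty$.

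The conclusion then follows from Theorem \ref{thm:vertex}: Algorithm \ref{alg:fit_corner} terminates at a vertex almost surely, and cannot terminate at a singular vertex because of its objective value, so it outputs $\mathbf{U}=\mathbf{T}\mathbf{A}^{-1}$ for some ATM — a correct solution to the blind decoding problem. For Algorithm \ref{alg:fit}, since a Hadamard matrix exists at $n=2$, Lemma \ref{cor:had_iff} guarantees that all optima are strict, and the explicit square geometry above can be used to check that the bilinear form $\det\mathbf{U}$ is non-constant along every edge of the feasible region emanating from a non-singular vertex; hence ordinary gradient descent also reaches a vertex.

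The main obstacle I anticipate is this last step for Algorithm \ref{alg:fit}: verifying rigorously that no spurious edge critical point (of the type flagged in the sketch of Theorem \ref{thm:vertex}) exists at $n=2$. However, because the per-row feasible region is literally a square with only four edges, this reduces to a short enumeration of the possible edge-vertex and edge-edge configurations of $(\mathbf{u}^{(1)},\mathbf{u}^{(2)})$, in each of which one directly checks that the determinant is strictly monotone along at least one admissible direction — so gradient descent is forced all the way to a vertex.
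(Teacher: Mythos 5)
Your proof is correct, but it takes a genuinely different route from the paper's. The paper's argument for $n=2$ is top-down: it notes that every full-rank matrix in $\{-1,+1\}^{2\times 2}$ has determinant $\pm 2$ and hence the maximal subset property, invokes Lemma \ref{cor:had_iff} (a Hadamard matrix exists at $n=2$, so all optima are strict and lie on vertices) together with Lemma \ref{lm:ortho} (global optima have the form $\mathbf{Q}\mathbf{A}^{-1}$ with $\mathbf{Q}$ orthogonal), and then checks that the ATMs are the only orthogonal matrices mapping $\{-1,+1\}^2$ to itself. Your argument is bottom-up: the change of variables $\mathbf{v}^{(i)}=\mathbf{u}^{(i)}\mathbf{A}$ reduces each row's feasible set to the square with vertices $(\pm 1,0),(0,\pm 1)$, so the full feasible region is a product of two squares whose vertices are the signed rows of $\mathbf{A}^{-1}$; the nonsingular vertices are exactly $\mathbf{T}\mathbf{A}^{-1}$ and the singular ones have objective value $-\infty$. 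This buys you a self-contained proof that bypasses the orthogonality lemma and the classification of norm-preserving maps, and it also makes the necessity direction explicit (rank collapse of $\mathbf{Y}$ when the maximal subset property fails), which the paper leaves implicit. The one place where you sit at the same level of informality as the paper is the claim that plain gradient descent (Algorithm \ref{alg:fit}) cannot stall off-vertex: your enumeration correctly shows $\det\mathbf{U}$ is non-constant along every $1$-face, but on a $2$-face (both rows in the interiors of edges of their squares) the restricted bilinear form can have an interior strict saddle, so strictly speaking one must also appeal to the fact that gradient descent from a random initialization avoids such saddles almost surely --- the same tacit appeal the paper makes via Lemma \ref{cor:had_iff}.
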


For $n=3$, the maximal subset property alone is not sufficient to ensure that Algorithms \ref{alg:fit} or \ref{alg:fit_corner} succeed.  Indeed, for any $\mathbf{X}\in \{-1,+1\}^{3 \times 3}$, there exists a matrix $\mathbf{Q}$ such that $\mathbf{QX} \in [-1,+1]^{3 \times 3}$, $\det \mathbf{Q} = \pm 1$ and $\mathbf{Q} \notin \mathcal{T}$.  This implies the existence of spurious optima whenever $k=3$.  However, these spurious optima will not exist if $k \geq 4$ and $\mathbf{X}$ contains at least one additional distinct column beyond the three required for the maximal subset property.  By a \emph{distinct} column, we mean that the $i$th column of $\mathbf{X}$ is distinct if $\mathbf{x}_i \neq \pm \mathbf{x}_j$ for all $j \neq i$.  Notice that this also implies that all columns of $\mathbf{X}$ are pair-wise linearly independent.  Further, we note that Algorithm \ref{alg:fit} is no longer guaranteed to be correct for $n=3$ because $n=3$ contains no Hadamard matrix.  We now formally state a theorem, proven in Appendix \ref{apx:n3}, regarding the performance of Algorithm \ref{alg:fit_corner}.

\begin{theorem}
\label{thm:n3iff}
When $n=3$, Algorithm \ref{alg:fit_corner} is correct with probability 1 if and only if 
$k \geq 4$ and there exists a $3 \times 4$ matrix $\mathbf{V}$, such that $\text{cols}(\mathbf{V}) \subseteq \text{cols}(\mathbf{X})$, 
$\text{span}(\mathbf{v}_1,\ldots,\mathbf{v}_4) = \mathbb{R}^3$, and all vectors in $\mathbf{V}$ are pair-wise linearly independent. 
\end{theorem}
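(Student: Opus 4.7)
The plan is to reduce to the case $\mathbf{A} = \mathbf{I}$ via the substitution $\tilde{\mathbf{U}} := \mathbf{U}\mathbf{A}$, so that the feasible region becomes $\{\tilde{\mathbf{U}} : \|\tilde{\mathbf{U}}\mathbf{X}\|_\infty \leq 1\}$ and the blind decoding solutions are precisely the ATMs $\mathcal{T}$. Any three pair-wise linearly independent vectors in $\{-1,+1\}^3$ form a matrix with $|\det|=4$, which is the maximum over $3\times 3$ $\pm 1$-matrices, so the hypothesis implies that $\mathbf{X}$ has the maximal subset property; Lemma \ref{lm:max_optima} then certifies that every ATM is a global optimum, and Theorem \ref{thm:vertex} ensures that Algorithm \ref{alg:fit_corner} terminates at a vertex of the feasible region with probability $1$. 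The theorem therefore reduces to the deterministic geometric statement: the stated condition on $\mathbf{X}$ is equivalent to every local-optimum vertex lying in $\mathcal{T}$.

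For sufficiency, I would analyze vertices row by row. At any vertex $\tilde{\mathbf{U}}$, each row $\tilde{\mathbf{u}}^{(j)}$ saturates exactly three constraints of (\ref{eq:const}), hence $\tilde{\mathbf{u}}^{(j)} = \mathbf{s}^T \mathbf{X}_j^{-1}$ for some $\mathbf{s}\in\{-1,+1\}^3$ and some $3 \times 3$ pair-wise linearly independent submatrix $\mathbf{X}_j$ of $\mathbf{X}$. I need to show that, for every such pair $(\mathbf{X}_j,\mathbf{s})$, the resulting row is either (i) a signed standard basis vector $\pm \mathbf{e}_k^T$ (an ATM row), or (ii) violates the remaining constraint $|\tilde{\mathbf{u}}^{(j)}\mathbf{x}_d|\leq 1$, where $\mathbf{x}_d$ is the fourth pair-wise linearly independent column guaranteed by the hypothesis. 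The key algebraic identity is that $\mathbf{x}_d = \epsilon_a\mathbf{x}_a + \epsilon_b\mathbf{x}_b + \epsilon_c\mathbf{x}_c$ for signs $\epsilon_\star \in\{-1,+1\}$ (since any three of the four pair-wise linearly independent $\pm 1$-directions in $\mathbb{R}^3$ determine the fourth up to sign), so $\tilde{\mathbf{u}}^{(j)}\mathbf{x}_d = \epsilon_a s_a + \epsilon_b s_b + \epsilon_c s_c \in \{-3,-1,+1,+3\}$; a short enumeration shows that the pairs $(\mathbf{X}_j,\mathbf{s})$ giving $|\tilde{\mathbf{u}}^{(j)}\mathbf{x}_d|=1$ are exactly those giving ATM rows. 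Since $|\det\tilde{\mathbf{U}}|>0$, the three rows must be \emph{distinct} signed basis vectors, placing $\tilde{\mathbf{U}}$ in $\mathcal{T}$.

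For necessity, the relevant case is $k=3$ (Proposition \ref{prop:fullrank} forbids $k<3$). There the fourth direction $\mathbf{x}_d$ is absent, and the same enumeration produces explicit vertex rows $\mathbf{s}^T\mathbf{X}^{-1}$ that are not signed basis vectors yet still satisfy all three active constraints; stacking three such rows yields a matrix $\mathbf{Q}\notin \mathcal{T}$ with $|\det\mathbf{Q}|=1$, i.e.\ a spurious global-optimum vertex. Because $\mathbf{U}^{(0)}$ is drawn uniformly from $O(n)$, a positive-measure basin of initializations flows to this spurious vertex under gradient descent, so Algorithm \ref{alg:fit_corner} fails with positive probability. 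For $k\geq 4$ without the hypothesized $\mathbf{V}$, the columns of $\mathbf{X}$ cover at most three of the four $\pm$-pairs in $\{-1,+1\}^3$ (since any four pair-wise linearly independent $\pm 1$-vectors in $\mathbb{R}^3$ automatically span $\mathbb{R}^3$), and redundant columns contribute no new constraints, so the $k=3$ spurious vertex persists.

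The main obstacle is the sufficient-direction enumeration, which a priori involves $\binom{4}{3}\cdot 2^3 = 32$ cases per row. I would prefer to replace brute force with the geometric observation that the four pair-wise linearly independent $\pm 1$-directions in $\mathbb{R}^3$ are the vertices of a regular tetrahedron inscribed in $[-1,+1]^3$, and the only invertible linear maps sending that tetrahedron into the cube are the $48$ cube symmetries (which, as $\{-1,0,+1\}$-valued matrices, are precisely the ATMs). A secondary technical step is confirming that the edge-walking phase of Algorithm \ref{alg:fit_corner} cannot terminate at a singular vertex with positive probability, which should follow from Lemma \ref{lem:face} and the measure-theoretic argument used in the proof of Theorem \ref{thm:vertex}.
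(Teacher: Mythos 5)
Your proposal is essentially correct, but for the sufficiency direction it takes a genuinely different route from the paper. The paper first argues that spurious optima must satisfy $\det\mathbf{Q}=\pm 1$ (via the maximal subset property), then proves $\mathbf{Q}\in O(3)$ by a volume argument --- the image of the sphere of radius $\sqrt{3}$ under a unimodular map is an equal-volume ellipsoid, and the only such ellipsoid through the four points $(\pm 1,\pm 1,1)$ is the sphere itself --- and finally identifies the feasible orthogonal maps with the $48$ cube symmetries, i.e.\ the ATMs. You instead enumerate vertex rows directly: each row is $\mathbf{s}^\intercal\mathbf{X}_j^{-1}$, and the identity $\mathbf{x}_d=\epsilon_a\mathbf{x}_a+\epsilon_b\mathbf{x}_b+\epsilon_c\mathbf{x}_c$ forces $\langle\tilde{\mathbf{u}}^{(j)},\mathbf{x}_d\rangle\in\{\pm 1,\pm 3\}$, with the $\pm 3$ cases (exactly the non-ATM rows) killed by the fourth constraint. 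Your version has the advantage of handling \emph{all} feasible non-singular vertices in one pass rather than only the unimodular ones, which closes a gap the paper leaves slightly implicit (whether sub-unimodular local optima could exist at vertices when $k>3$); the paper's version buys a coordinate-free argument that avoids case analysis and generalizes more gracefully (the same orthogonality lemma reappears for $n=4$). The necessity direction is essentially identical in both: with only three $\pm$-classes present the excluded sign patterns become feasible non-ATM vertices of determinant $\pm 1$, and $\pm$-duplicate columns add no constraints.

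Two caveats. First, your proposed ``cleaner'' geometric replacement --- that the only invertible linear maps sending the inscribed tetrahedron into the cube are the $48$ cube symmetries --- is false as stated (e.g.\ $\tfrac{1}{2}\mathbf{I}$ is a counterexample); it needs the determinant normalization that the paper's ellipsoid lemma supplies, so if you pursue that route you are effectively reconstructing the paper's argument. Second, your assertion that a spurious vertex attracts a positive-measure basin of initializations is stated without proof, but this is at the same level of rigor as the paper's own treatment of the necessity direction.
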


We now turn our attention to quantifying the probability that the conditions required by Theorems \ref{thm:n23iff} and \ref{thm:n3iff} hold.  
Let
$r(n,k)$ denote the probability that a collection of $k$ vectors in
$\mathbb{F}_2^n$ has rank $n$ (and hence the rank in $\mathbb{R}^n$ must also be
$n$), then the probability of the solver succeeding, given $k$ samples chosen
uniformly at random, is given by:
\begin{equation}
\mathrm{Pr}(\mathrm{Success}) = r(n,k). \label{eq:n23}
\end{equation}
An explicit formula for $r(n,k)$ is derived in Appendix \ref{sec:rank}.  Notice that for $n=2$, equation (\ref{eq:n23}) expresses the probability that the set of global optima contains
solutions to the blind decoding problem. For $n=3$, since we require the existence of a fourth distinct vector, we find that the probability, for a set of $k$ samples chosen uniformly, that all global optima will be solutions to be
\begin{equation}
\mathrm{Pr}(\mathrm{Success}) = r(n,k) \cdot (1-2^{n-k}). \label{eq:n3}
\end{equation}

We note that, for $n=3$, if a collection of samples contains only the MSP and not an additional distinct column, then Algorithm \ref{alg:fit_corner} still has a non-zero probability of finding a solution to the blind decoding problem as the set of global optima still contains the set of all solutions.  Thus the probability of success of Algorithm \ref{alg:fit_corner}, conditioned over a uniform selection of samples, is bounded between (\ref{eq:n23}) and (\ref{eq:n3}).
In Section \ref{sec:empirical} we compare these distributions to our empirical results.

\subsection{The case $n=4$.}
\label{sub:n4}
At dimension $n=4$, the problem geometry gets slightly more complicated.  The
set of possible values of the determinants of $\{-1,+1\}^{4 \times 4}$ increases to
$\{0,\pm 8, \pm 16\}$, which means that not all non-singular vertices of (\ref{eq:const}) are global
optima to (\ref{eq:obj})--(\ref{eq:const}).  However, we show that for $n=4$, the only optima of (\ref{eq:obj})--(\ref{eq:const})
are indeed global optima.  Unfortunately, for $n=k$, not all global optima are solutions to the blind decoding problem. Nonetheless, we are able to show that for $n=4$, Algorithms 1 and 2
both succeed (and solve the blind decoding problem) with probability 1 under proper input conditions.   

Before stating Theorem \ref{thm:n4iff}, which is proved in Appendix \ref{apx:theorem_three}, we must also introduce equivalence classes of 
Hadamard matrices. We say that two Hadamard matrices $\mathbf{H}_1$ and $\mathbf{H}_2$ are equivalent if there exists and ATM $\mathbf{T}$ such that $\mathbf{H}_1 = \mathbf{T \cdot H}_2$.  This is an equivalence relation, and thus decomposes the set of Hadamard matrices into equivalence classes.  For $n=4$, there are exactly two
equivalence classes, which we denote as $\mathcal{H}_4^{(1)}$ and
$\mathcal{H}_4^{(2)}$, that are defined as follows:
\begin{align}
\mathcal{H}_4^{(1)} &= 
\left\{ 
\mathbf{T}
\begin{bmatrix}
1 &  1 &  1  &  1 \\ 
1 &  1 &  -  &  - \\
1 &  - &  1  &  - \\
1 &  - &  -  &  1  
\end{bmatrix}, 
\forall \mathbf{T} \in \mathcal{T}
\right\},  \\
\mathcal{H}_4^{(2)} &= 
\left\{
\mathbf{T}
\begin{bmatrix}
 - &  1 &  1  &  1 \\ 
 1 &  - &  1  &  1 \\
 1 &  1 &  -  &  1 \\
 1 &  1 &  1  &  - 
\end{bmatrix},
\forall \mathbf{T} \in \mathcal{T}
\right\},
\end{align}
where ``$-$'' denotes $-1$.
Notice that all vectors in $\mathbb{F}_2^4$ appear as column vectors in either
$\mathcal{H}_4^{(1)}$ or $\mathcal{H}_4^{(2)}$.
We say that a vector belongs to an equivalence class if it appears as a column
vector in that equivalence class.
We now state our result for the case $n=4$, which is proven in Appendix \ref{apx:theorem_three}. Notice that because a Hadamard matrix exists for $n=4$, then by
Lemma \ref{cor:had_iff}, the only optima of (\ref{eq:obj})--(\ref{eq:const})
are strict and hence gradient descent will always terminate at a vertex even
without the modification given in Algorithm \ref{alg:fit}.
\begin{theorem}
\label{thm:n4iff}
When $n=4$, Algorithm \ref{alg:fit} is correct with probability 1 if and only if $k \geq 5$ and $\text{cols}(\mathbf{X})$
contains at least four linearly independent vectors from $\mathcal{H}_4^{(i)}$
and a fifth vector from $\mathcal{H}_4^{(j)}$ for $i \neq j$. 

Algorithm \ref{alg:fit} will be correct with probability 0.5 if $\text{cols}(\mathbf{X})$ has only four
linearly independent vectors belonging to the same equivalence class.
\end{theorem}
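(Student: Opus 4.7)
The approach is to characterize all global optima of (\ref{eq:obj})--(\ref{eq:const}) via the Hadamard determinant bound, show they split into exactly two families---the ATM family and one non-ATM ``class-switching'' family---and then use the condition in Theorem \ref{thm:n4iff} to eliminate the spurious family. Because $n=4$ admits a Hadamard matrix, Lemma \ref{cor:had_iff} guarantees that every optimum of (\ref{eq:obj})--(\ref{eq:const}) is strict and lies on a vertex, so Algorithm \ref{alg:fit} terminates at a vertex of the feasible region with probability $1$.

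Setting $\mathbf{Q}:=\mathbf{U}\mathbf{A}$ and applying the Hadamard bound on any four linearly independent columns $\mathbf{X}_0$ of $\mathbf{X}$,
\[
|\det(\mathbf{Q}\mathbf{X}_0)| = |\det\mathbf{U}|\cdot|\det\mathbf{A}|\cdot|\det\mathbf{X}_0| \leq 16,
\]
so under the maximal subset property the global optima are exactly those $\mathbf{Q}$ for which $\mathbf{Q}\mathbf{X}_0^\star$ is itself a Hadamard matrix (where $\mathbf{X}_0^\star$ is a maximum-determinant $4$-column submatrix of $\mathbf{X}$) and $\|\mathbf{Q}\mathbf{x}\|_\infty\leq 1$ for every other column $\mathbf{x}$. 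Up to column reordering $\mathbf{X}_0^\star=H_i$ for some $i\in\{1,2\}$, and the Hadamard image $\mathbf{Q}H_i$ lies either in $\mathcal{H}_4^{(i)}$ (forcing $\mathbf{Q}=\mathbf{T}$ an ATM) or in $\mathcal{H}_4^{(j)}$ with $j\neq i$ (forcing $\mathbf{Q}=\mathbf{T}\mathbf{Q}_0\mathbf{T}'$, where $\mathbf{Q}_0:=H_j H_i^{-1}$).

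The crux is then a direct computation of $\mathbf{Q}_0$: it is orthogonal with all entries $\pm 1/2$ and $|\det\mathbf{Q}_0|=1$, and by enumerating $\mathbf{Q}_0\mathbf{v}$ over all $\mathbf{v}\in\{-1,+1\}^4$ one verifies that $\mathbf{Q}_0\mathbf{v}\in\{-1,+1\}^4$ iff $\mathbf{v}$ lies in $\mathcal{H}_4^{(i)}$, while $\|\mathbf{Q}_0\mathbf{v}\|_\infty=2$ whenever $\mathbf{v}$ lies in $\mathcal{H}_4^{(j)}$. Since ATMs preserve the $\ell_\infty$-norm, a spurious optimum $\mathbf{T}\mathbf{Q}_0\mathbf{T}'$ is feasible on a column $\mathbf{x}$ iff $\mathbf{T}'\mathbf{x}$ lies in $\mathcal{H}_4^{(i)}$, equivalently iff the ``parity'' $\sigma(\mathbf{T}')\in\{\pm 1\}$ of $\mathbf{T}'$'s sign flips equals $\prod_\ell x_\ell$. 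Sufficiency follows at once: a fifth column $\mathbf{x}'$ from the opposite class forces incompatible parity conditions across the columns of $\mathbf{X}$, so no spurious $\mathbf{T}'$ survives, and Algorithm \ref{alg:fit} is correct with probability $1$. For necessity, when $\text{cols}(\mathbf{X})$ contains only four linearly independent vectors, all from $\mathcal{H}_4^{(i)}$, the compatible $\mathbf{T}'$'s do yield feasible spurious optima coexisting with the ATM family.

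To obtain the probability exactly $1/2$ in the boundary case, I would invoke the orthogonal right-action $\mathbf{Q}\mapsto\mathbf{Q}\mathbf{Q}_0$, which bijects the ATM family onto the spurious family, preserves $|\det\mathbf{Q}|$, and preserves feasibility on single-class data. The main obstacle I anticipate is turning this bijection into an exact equality of basin measures under the Haar-uniform initialization $\mathbf{U}^{(0)}\in O(n)$: although $\mathbf{Q}_0$ is orthogonal, the change of variables $\mathbf{U}\leftrightarrow\mathbf{Q}=\mathbf{U}\mathbf{A}$ is mediated by the non-orthogonal $\mathbf{A}$, so the induced map on $\mathbf{U}$-space is no longer a Haar isometry, and the gradient of $\log|\det\mathbf{Q}|$ does not commute with the action of $\mathbf{Q}_0$. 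Resolving this will require either an explicit description of the gradient flow in a neighborhood of each family or an argument using a fundamental domain of the problem's discrete symmetry group, which is where I expect the detailed proof to become delicate.
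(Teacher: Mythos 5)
Your characterization of the global optima is essentially the paper's: under the maximal subset property the optimal $\mathbf{Q}=\mathbf{UA}$ must carry a Hadamard submatrix to a Hadamard matrix, and the feasible such $\mathbf{Q}$ split into the ATM family and a single spurious family generated by the class-switching orthogonal matrix (your $\mathbf{Q}_0$, the paper's $\mathbf{S}$); the fifth column from the opposite equivalence class kills the spurious family. That part matches Lemma \ref{lm:n4_succ_prob}. However, there is a genuine gap in your first step. Lemma \ref{cor:had_iff} asserts only that all \emph{global} optima are strict and lie on vertices; it says nothing about the absence of \emph{local} optima. For $n=4$ the set of achievable determinants of $\{-1,+1\}$-valued matrices is $\{0,\pm 8,\pm 16\}$, so the feasible region has non-singular vertices at which the objective equals $\log 8 < \log 16$. ``Algorithm \ref{alg:fit} terminates at a vertex'' therefore does not imply ``Algorithm \ref{alg:fit} terminates at a global optimum'': gradient descent could a priori be trapped at a determinant-$\pm 8$ vertex, and your proof never rules this out. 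The paper devotes Lemma \ref{lm:n4_global} to exactly this point, arguing via the graph on $\{-1,+1\}^{4\times 4}$ with Hamming-distance-one edges that every determinant-$\pm 8$ vertex admits a feasible direction along which the objective strictly increases toward a determinant-$\pm 16$ vertex. Without some version of this argument, the ``correct with probability 1'' claim is unproven even when the column condition holds.

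On the probability-$0.5$ statement, you candidly leave the basin-of-attraction issue unresolved. The paper's own justification is lighter than what you are demanding of yourself: it argues that exactly half of the global optima are solutions (the right action of $\mathbf{S}$ bijects the two families and preserves feasibility on single-class data) and appeals to the empirical $50\%$ success rate, rather than proving an exact equality of basin measures under the Haar initialization. Your symmetry observation is the right ingredient, but as written the second half of the theorem also remains incomplete in your proposal.
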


Theorem \ref{thm:n4iff} implies that we will always require at least 5 samples in order to solve the blind decoding problem.  Further, assuming that the source symbols are chosen uniformly at random, this result allows us to quantify the success probability of the blind decoding algorithm.  This is done in Appendix \ref{apx:theorem_three}, where we show that the success probability for $n=4$ is given by:
\begin{equation}
\text{Pr(Success)} = r(4,k) \cdot r(3,k)^4 \cdot (1-2^{n-k}).
\end{equation}

\subsection{Larger $n$}
\label{sub:large_n}
In this subsection we discuss the performance of Algorithm \ref{alg:fit} for
larger values of $n$.  
In Figure \ref{fig:finding_hadamard}  we use Algorithm \ref{alg:fit} to attempt to
find maximal determinant matrices, as described in Lemma \ref{lem:had_red}.
For $1 \leq n \leq 5$, Algorithm \ref{alg:fit} terminated at a global maximum 
100\% of the time, supporting the claim that there are no local maxima in these cases, as 
explicitly proven for dimension 1,2,3, and 4.  This also suggests that a similar
theoretical guarantee may exist for $n=5$, but proving such a result in the
same manner as used for the case $n=4$ would be computationally expensive.

For dimensions $n>5$, such an analysis seems extremely difficult.
Indeed, for even reasonably small values of $n$, the set of possible 
determinants of $\{-1,+1\}$-valued matrices is not well understood, and for very
large values of $n$ the maximal value of the determinant is only known for special cases of $n$: see, for example, \cite{brent2013minors}.
%One implication that can be made from this figure is that it is likely hard to 
%establish an equivalent of Theorems \ref{thm:n23iff} and \ref{thm:n4iff} that holds
%for general values of $n$.  Notice that Figure \ref{fig:finding_hadamard} is not
%even monotonic as $n$ grows.  

We can however compare Figure \ref{fig:finding_hadamard} with results obtained in Section \ref{sec:empirical} (notably Figure \ref{fig:empirical_only}).
Despite the fact that the odds of finding a global optima
decreases when $n$ grows with $k=n$, the probability of success of Algorithm
\ref{alg:fit} empirically grows toward 1 when $k$ is sufficiently large.  
Intuitively, this happens because adding additional constraints removes vertices from the
feasible region.  This has the effect of removing both local optima as well as global optima that do not correspond to solutions to our problem. Based on the theory established in this section and empirical
results from Section \ref{sec:empirical}, we make the following
conjecture about the behavior of Algorithms \ref{alg:fit} and
\ref{alg:fit_corner} for general values of $n$. 
\begin{conj}
If $k$ is slightly larger than $n$, and if $\mathbf{X}$ is selected uniformly 
from the set of all $n \times k$ matrices that have the maximal subset property, 
then with high probability, the only optima in (\ref{eq:obj})--(\ref{eq:const}) 
are $\mathbf{U}=\mathbf{TA}^{-1}$, for all ATMs $\mathbf{T}$.
\end{conj}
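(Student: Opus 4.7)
The plan is to follow the blueprint of the $n=4$ case (Theorem \ref{thm:n4iff}), combining a combinatorial characterization of the vertices of the feasible polytope with a probabilistic argument over the $k-n$ random columns of $\mathbf{X}$ beyond the maximal subset. By Theorem \ref{thm:vertex} and Lemma \ref{lm:max_optima}, all strict optima lie on vertices and the ATM solutions $\mathbf{TA}^{-1}$ are already global optima; it therefore suffices to show that, with probability $1-o(1)$ over the random $\mathbf{X}$, no other vertex of the feasible region is a local maximum. A useful first move is the change of variables $\tilde{\mathbf{U}} := \mathbf{UV}$, where $\mathbf{V}=\mathbf{AX}_S$ are the $n$ columns of $\mathbf{Y}$ corresponding to a maximal subset $\mathbf{X}_S$. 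This reformulates the problem as maximizing $\log|\det\tilde{\mathbf{U}}|$ subject to $\|\tilde{\mathbf{U}}\|_\infty \leq 1$ (the core Hadamard Maximal Determinant constraint from Lemma \ref{lem:had_red}) together with $\|\tilde{\mathbf{U}}\mathbf{z}_i\|_\infty \leq 1$ for the $k-n$ indices $i \notin S$, where $\mathbf{z}_i = \mathbf{X}_S^{-1}\mathbf{x}_i$. The additional random constraints should carve the Hadamard polytope down to only the ATM vertices.

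I would then split the candidate vertices into two classes. A \emph{saturated} vertex has $\tilde{\mathbf{U}} \in \{-1,+1\}^{n \times n}$, so $\mathbf{X}' := \tilde{\mathbf{U}}\mathbf{X}_S$ is a $\pm 1$ matrix whose rows lie in the row space of $\mathbf{X}$; a non-ATM saturated vertex therefore requires some $\pm 1$ vector in $\mathrm{rowspan}(\mathbf{X})$ other than $\pm$ the rows of $\mathbf{X}$ themselves. A first-moment argument, bounding $\Pr[\mathbf{c}^\intercal \mathbf{X} \in \{-1,+1\}^k]$ for each fixed $\mathbf{c} \in \mathbb{R}^n$ not proportional to a standard basis vector using anti-concentration of signed sums of the rows, should yield $o(1)$ total probability once $k$ exceeds $n$ by a modest amount. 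An \emph{unsaturated} vertex has some column of $\tilde{\mathbf{U}}$ strictly inside $[-1,+1]^n$; this can only occur because an extra constraint $\|\tilde{\mathbf{U}}\mathbf{z}_i\|_\infty \leq 1$ has truncated an edge of the Hadamard polytope. For each such candidate vertex I would compute the gradient $(\tilde{\mathbf{U}}^{-1})^\intercal$, decompose it against the active-constraint normals, and show that in sufficiently generic position with respect to the random $\mathbf{z}_i$, at least one feasible edge direction strictly increases the objective, ruling out a local maximum. A union bound over the $O(\binom{k}{n}^n 2^{n^2})$ combinatorial vertex types would close the argument.

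The dominant obstacle is the unsaturated case. Unlike $n \leq 4$, where the set of possible vertex determinants is small and fully enumerable, for general $n$ the geometry of maximal-determinant $\{-1,+1\}$-matrices is notoriously poorly understood, so classifying the exotic vertex configurations that the extra constraints can create and controlling their gradient structure uniformly looks very challenging; indeed, Lemma \ref{lem:had_red} essentially says that the $k=n$ problem is as hard as the Hadamard problem itself, so any proof must explicitly exploit $k>n$. A secondary obstacle is making quantitative the phrase ``slightly larger than $n$'': the empirical scaling in Section \ref{sec:empirical} suggests that $k-n$ need grow only very mildly, but rigorously interpolating between $k=n$ (Hadamard-hard) and $k$ modestly larger appears to require a delicate random-matrix argument controlling the conditioning and the range of possible $\mathbf{z}_i = \mathbf{X}_S^{-1}\mathbf{x}_i$. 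Absent progress on the Hadamard problem itself, I would expect a first result to establish only a dimension-dependent bound of the form ``$k \geq f(n)$ suffices'' for some polynomial $f$, and to sharpen it later.
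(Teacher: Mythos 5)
The statement you are asked to prove is stated in the paper as a \emph{conjecture}: the paper offers no proof of it, only the supporting evidence of the $n\le 4$ theorems, the reduction in Lemma \ref{lem:had_red}, and the empirical results of Section \ref{sec:empirical}. So there is no ``paper proof'' to match your attempt against, and your submission is correctly framed as a plan of attack rather than a proof. Your plan is faithful to the paper's own methodology: reducing to the Hadamard-type polytope, invoking Theorem \ref{thm:vertex} and Lemma \ref{lm:max_optima} to restrict attention to vertices, and then arguing that the $k-n$ extra random columns eliminate the non-ATM vertices is exactly the mechanism the authors describe informally in Section \ref{sub:large_n} and carry out explicitly for $n=4$ in Lemma \ref{lm:n4_succ_prob} and Lemma \ref{lm:n4_global}.

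That said, the proposal does have genuine gaps beyond the ones you flag, and none of them is closed. First, your ``saturated'' case is narrower than the actual vertex set: a vertex of the feasible region requires only $n$ active constraints per row drawn from all $2kn$ half-spaces, so a vertex need not have $\tilde{\mathbf{U}}\in\{-1,+1\}^{n\times n}$ with respect to the chosen maximal subset; the rows of $\mathbf{QX}$ need only hit $\pm1$ in $n$ of their $k$ coordinates. Your anti-concentration bound on $\Pr[\mathbf{c}^\intercal\mathbf{X}\in\{-1,+1\}^k]$ therefore has to be run for every choice of $n$ active columns, with $\mathbf{c}$ determined by that choice rather than free, and with the columns of $\mathbf{X}$ correlated by the conditioning on the maximal subset property; none of this is set up. Second, and more fundamentally, ruling out \emph{local} maxima at suboptimal vertices is the step the paper itself can only do for $n=4$ by exhaustive enumeration of the determinant spectrum of $\{-1,+1\}^{4\times4}$ (the Hamming-graph argument of Lemma \ref{lm:n4_global}); your gradient-decomposition sketch for the unsaturated case has no substitute for that enumeration, and Lemma \ref{lem:had_red} together with Figure \ref{fig:finding_hadamard} (where local optima demonstrably appear for $n\ge 6$ at $k=n$) shows that any such substitute must quantitatively exploit $k>n$ in a way your sketch does not yet do. In short: the architecture is right and consistent with the paper, but the statement remains a conjecture, and your proposal does not resolve it.
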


\begin{figure} 
    \begin{centering}
    %\vspace{1mm}
\begin{tikzpicture}
\begin{axis}[
	ymin=0, ymax=1,
    ytick={0,0.1,...,1.1},
    xmin=0, xmax=16,
	ylabel=Success Probability,
    xlabel=$n$,
    height=\columnwidth,
    width=\columnwidth
]
\addplot [blue!90,mark=*,mark options={draw=blue!75,fill=blue!40}] coordinates
{(2,1) (3,1) (4,1) (5, 1) (6, 0.952634695) (7,0.84758118)
(8,0.810136157) (9,0.741197183) (10,0.379654859) (11,0.62)
(12,0.297) (13,0.372139303) (14,0.071) (15,0.006)};

\end{axis}
\end{tikzpicture}
    \par\end{centering}
    \caption{The probability that Algorithm \ref{alg:fit_corner} finds a
maximal-determinant matrix when used as described in Lemma
\ref{lem:fp_boundary}, averaged over 1000 samples. 
For dimension at most 5, all optima are global.  Above
this, we are not guaranteed to find terminate at a maximal matrix.  However, as
$k$ grows relative to $n$, the probability increases again.  See Figure
\ref{fig:analytic}.}
    \label{fig:finding_hadamard}
\end{figure}
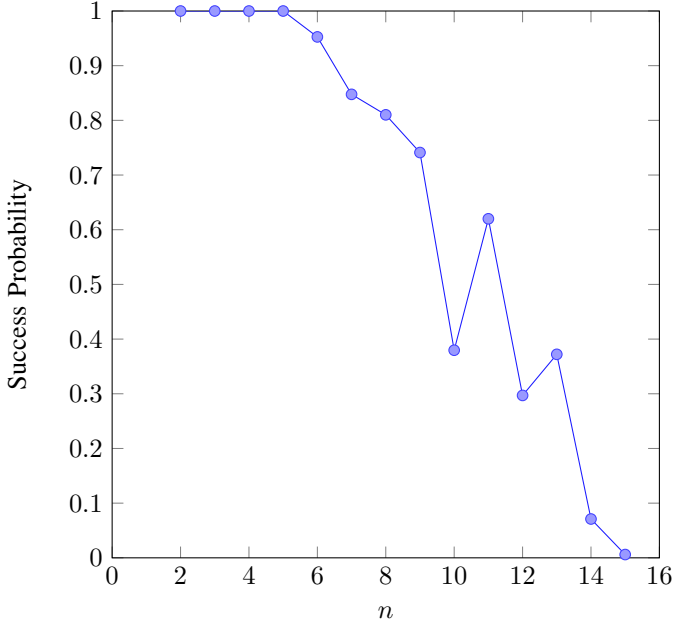

\section{Empirical Results}
\label{sec:empirical}
Having established theoretical results regarding the correctness of our algorithm, we now turn our attention to empirical results.  The simulation results contained in this section are entirely based on Algorithm 1 and demonstrate that Algorithm 2 is unnecessary in practice, at least for low dimensions. The empirical performance of Algorithm 2 does not noticeably improve over the performance of Algorithm 1.  
In order to assess the performance of Algorithm 1, we constructed two sets of
experiments.  In the first, we ran Algorithm 1 for various values of $n$ and
$M$ without channel noise in order to empirically test the conditions under which 
the solver will return the correct solution.  In the second, we ran the
algorithm using realistic channel conditions and compared the results to the
Zero-Forcing and Maximum-Likelihood decoders, both with perfect and imperfect
CSI.

\begin{figure} 
    \begin{centering}
    \vspace{-10mm}
    \includegraphics[width=0.96\columnwidth]{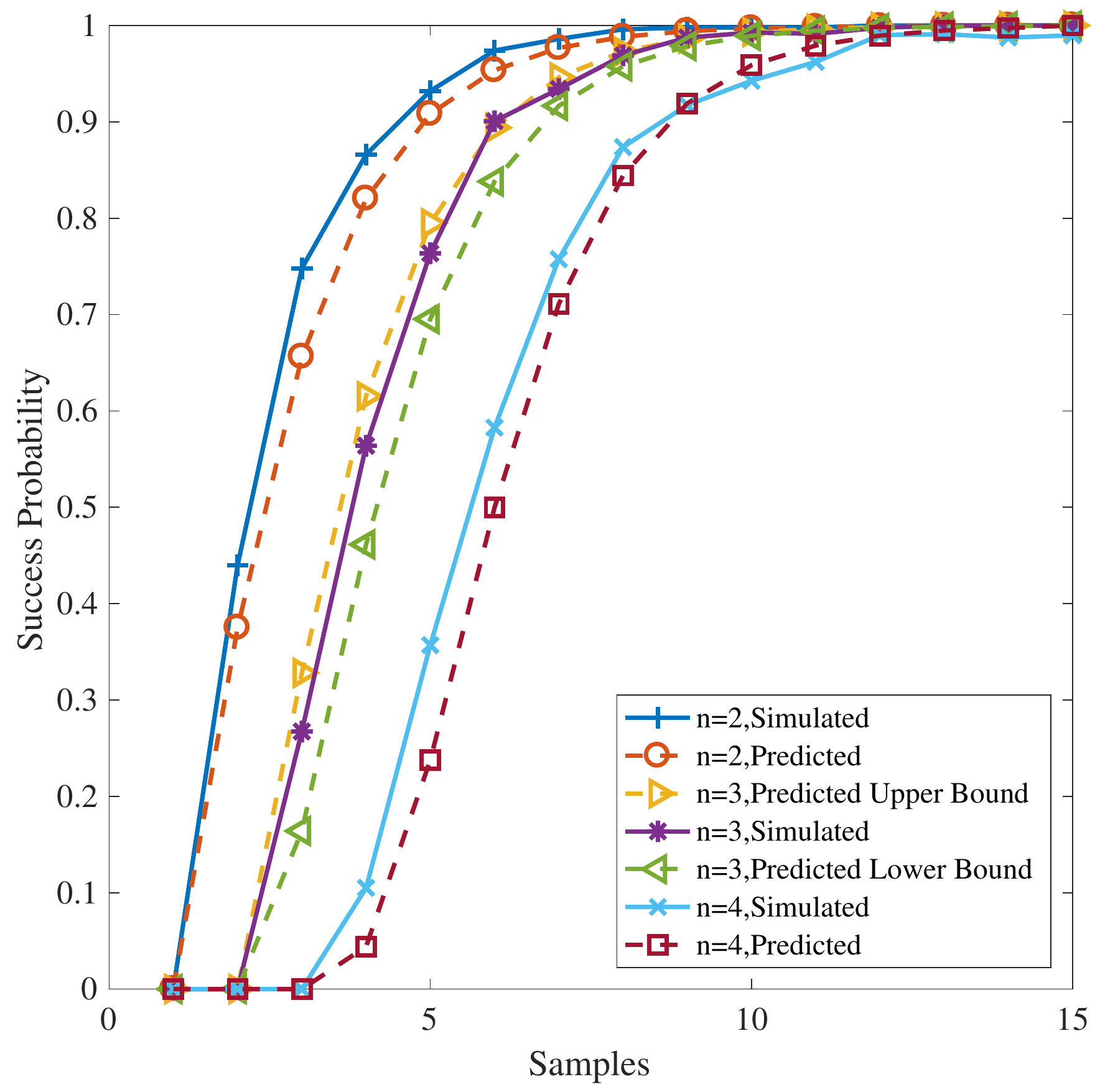} 
    \par\end{centering}
    \caption{Success probability of Algorithm 1 versus number of samples for 
    $n=2,3,4$ and
$M=2$.  Simulations were run over 200 trials. The predicted results are the probability 
that $\mathbf{X}$ has the requisite subset of columns to ensure correctness of Algorithm 1, as discussed in Section V and Appendix \ref{sec:rank}. }
    \label{fig:analytic}
\end{figure}

\begin{figure} 
    \begin{centering}
    \includegraphics[width=1\columnwidth]{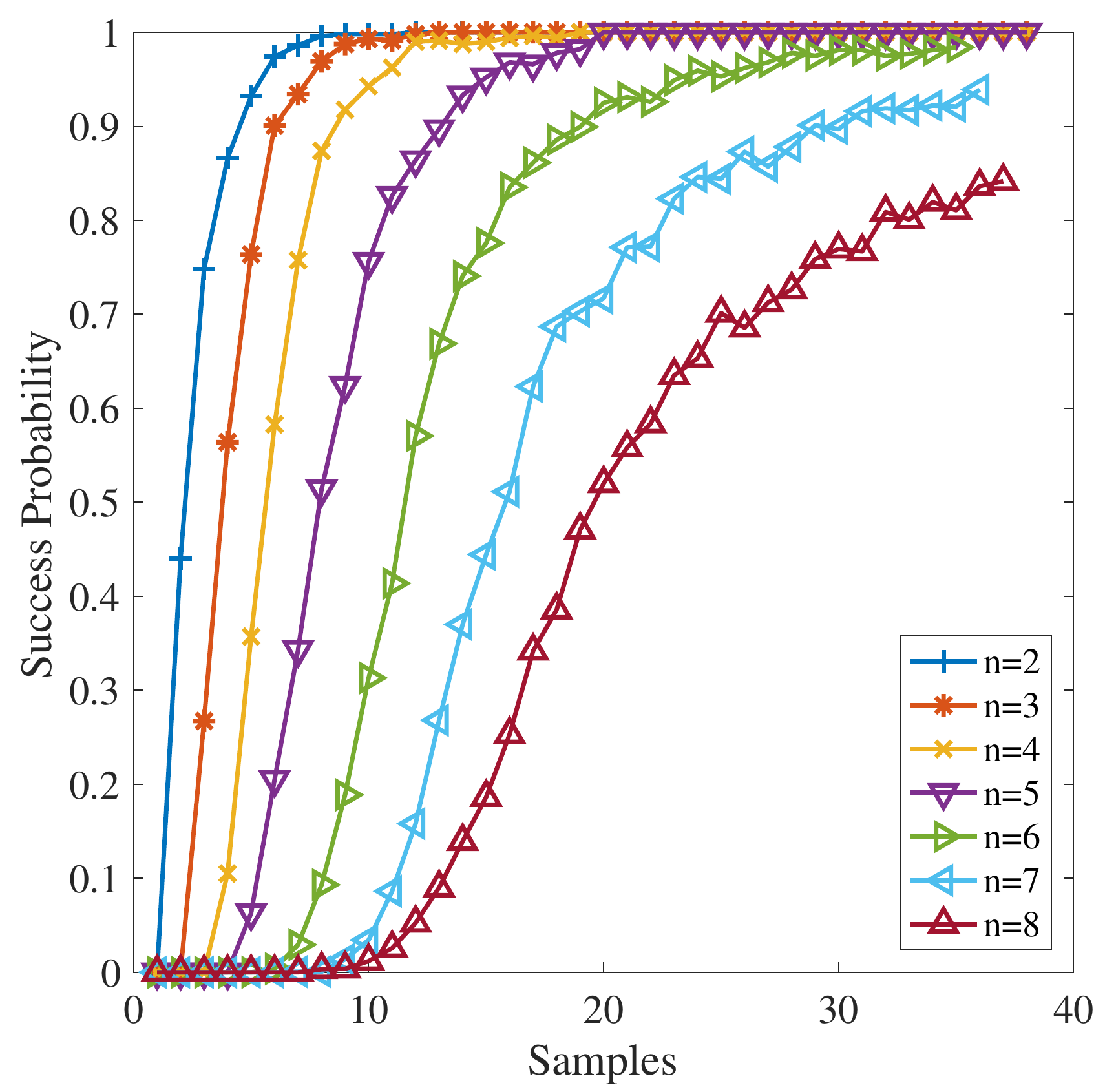} 
    \par\end{centering}
    \caption{Empirical success probability of Algorithm 1 for $n=2,\ldots,8$ and $M=2$ with no AWGN.  Simulations were run over 200 trials. Beginning at $n=6$, local optima exist when $n=k$ and $\mathbf{X}$ contains the maximum subset property.  However, as $k$ increases these local optima may become infeasible, increasing the success probability of Algorithm 1.}
    \label{fig:empirical_only}
\end{figure}

\begin{table}
\caption{Sample Size Requirements}
\label{tb:samples}
\begin{center}
\begin{tabular}{|c||c|c|c|c|}
\hline
\multicolumn{5}{|c|}{Algorithm 1} \\ \hline
n & M=2 & M=4 & M=8 & M=16 \\ \hline%\hline
2 & 5 & 14 & 29 & 56 \\ \hline
3 & 6 & 23 & 45 & 87 \\ \hline
4 & 10 & 32 & 60 & 118 \\ \hline
5 & 14 & 42 & 98 & 150 \\ \hline
%\end{tabular}f
%
%\vspace{2mm}
%\begin{tabular}{|c||c|c|c|c|}
\hline
\multicolumn{5}{|c|}{ Algorithm presented in \cite{talwar1996blind}} \\ \hline
n & M=2 & M=4  & M=8    & M=16 \\ \hline%\hline
2 & 5   & 33   & 182    & 913 \\ \hline
3 & 13  & 182  & 2,006   & 20,326 \\ \hline
4 & 33  & 913  & 20,326  & 416,140 \\ \hline
5 & 79  & 4,369 & 196,711 & 8,111,980 \\ \hline
\end{tabular}
\end{center}
\vspace{2mm}
The table at the top shows the number of samples required for various values of 
$n$ and $M$ to recover $\mathbf{U}$ in the correct form with 90\% success rate
using Algorithm \ref{alg:fit}.  The table at the bottom represents the number of
samples needed to ensure a 90\% success rate using either the ILSP or the ILSE
techniques presented in \cite{talwar1996blind}.
\end{table}

Table \ref{tb:samples} summarizes the number of samples required for various values of
$n$ and $M$ so that Algorithm \ref{alg:fit} has a 90\% probability of returning
an optimal solution to (\ref{eq:obj})--(\ref{eq:const}). For the values of $n$ presented, the success probability is almost entirely conditioned upon the input value of $\mathbf{X}$ rather than randomness in Algorithm \ref{alg:fit}; that is, running Algorithm \ref{alg:fit} multiple times on the same $\mathbf{X}$ will not improve success
rates.

Figure \ref{fig:analytic} shows the expected success rate for $n=2,3,4$ which is
based on the theory in Section \ref{sec:prelim} and Appendix \ref{sec:rank}. 
The results in this plot are for the case $M=2$ which corresponds to Binary Phase Shift Keying (BPSK) in the absence of noise.
For $n=2$, the theoretical success probability is the probability that $\mathbf{X}$ has the maximal subset property.  
For $n=3$ and $n=4$, success is only guaranteed if $\mathbf{X}$ has the maximal subset property as well as at least one additional distinct vector.  
For $n=3$, the expected success rate of Algorithm \ref{alg:fit} is not known when $\mathbf{X}$ has the maximal subset property alone.  
The probability that $\mathbf{X}$ has the maximal subset property is plotted as a lower bound on performance in this case; the upper bound given in Figure \ref{fig:analytic} expresses the probability that $\mathbf{X}$ has the maximal subset property as well as one additional vector.  
For $n=4$, as shown in Section \ref{sec:prelim}, we know that Algorithm \ref{alg:fit} will succeed with probability 0.5 when $\mathbf{X}$ has the maximal subset property alone; this is reflected in the theoretical prediction for this case. 
We note that for $n=2,3,$ and $4$, the empirical observations match the expected theoretical performance.

Figure \ref{fig:empirical_only} shows the empirical success probability of Algorithm \ref{alg:fit} up to $n=8$.  
This plot demonstrates two important features regarding the performance of Algorithm \ref{alg:fit} as $n$ grows.  
First, for $n > 5$, it is know that local optima may exist.  
Figure \ref{fig:finding_hadamard} from Section \ref{sec:prelim} gives the probability that when $n=k$ and $\mathbf{X}$ has the maximal subset property, Algorithm \ref{alg:fit} will find a global optima.  
However, we can see in Figure \ref{fig:empirical_only} that for large enough values of $k$, the success probability of Algorithm \ref{alg:fit} exceeds this probability.
This is because these additional samples cause local optima to become infeasible, increasing the probability Algorithm \ref{alg:fit} will find a global optima.
Additionally, these results show that the required values of $k$ appear to scale favorably as $n$ grows.  We further note that $n \leq 8$ captures nearly all MIMO systems found in use today.

Figures \ref{fig:est_error} and \ref{fig:bpsk_est_error} shows the symbol error
rate performance of the blind decoder compared to standard MIMO decoding
algorithms.  Figure \ref{fig:est_error} gives an example with high SNR and high
modulation order, with the parameters $n=3, M=32, c=3$, while Figure
\ref{fig:bpsk_est_error} shows the case $n=4, M=2, c=3$ at SNR values typically
found in modern cellular systems.  Despite having less side information,
the performance of the blind decoder (Algorithm \ref{alg:fit}) is only slightly
worse than the ZF and ML decoders with perfect CSI; there appears to be less
than 3 dB loss associated with the blind decoder.  The simulation used a fading block
length of 400 samples, and ran over a total of 500 fading blocks per SNR. 
In high dimensions, large numbers of constraints leads to numerical
instability, requiring the step size to be extremely small, and making the 
solver slow to converge.  Improving the runtime of our algorithm, for example
through an intelligent selection of a subset of received samples, is a topic 
of future research.

\begin{figure} 
    \begin{centering}
	\includegraphics[width=0.95\columnwidth]{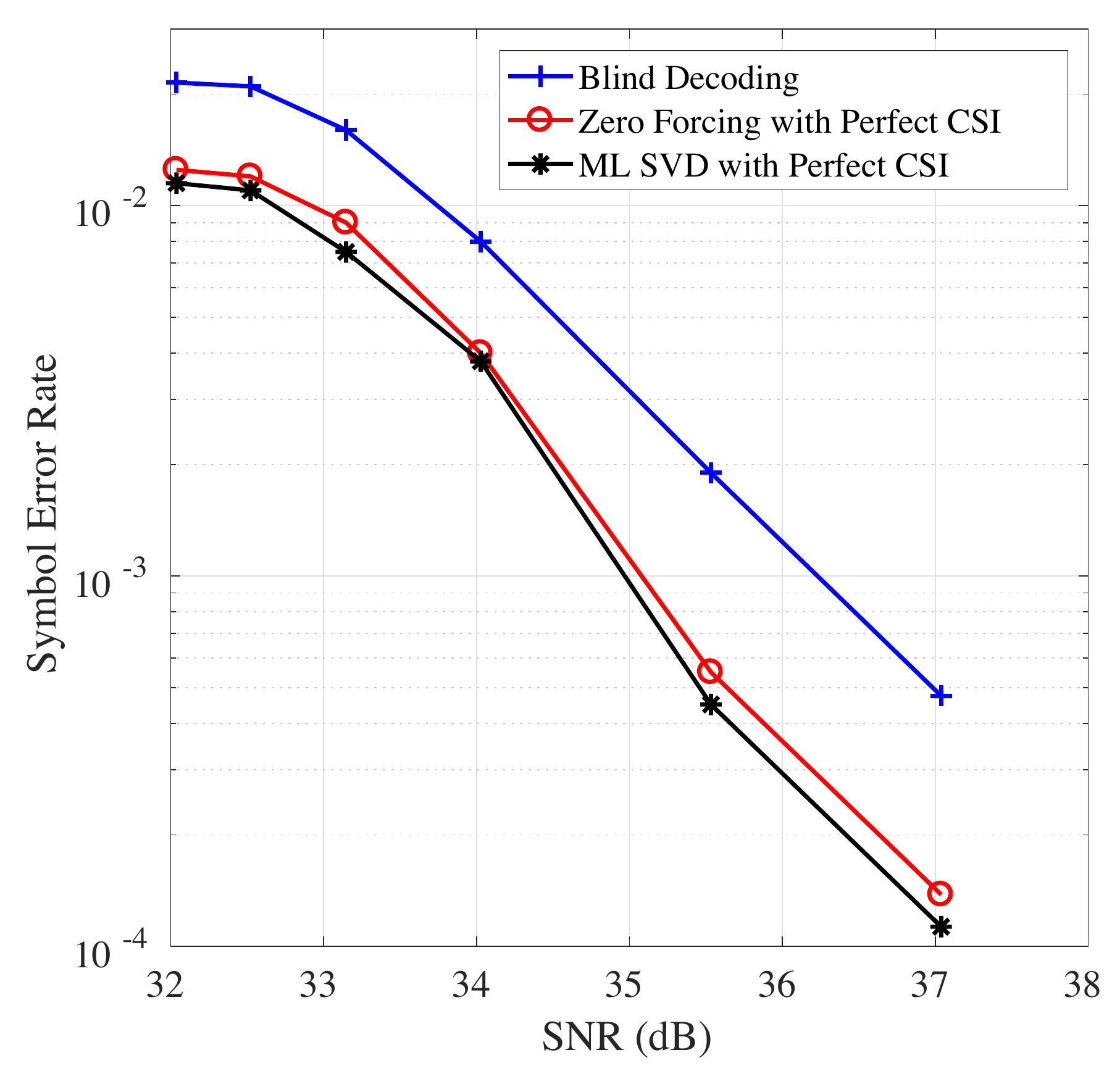} 
    \includegraphics[width=0.95\columnwidth]{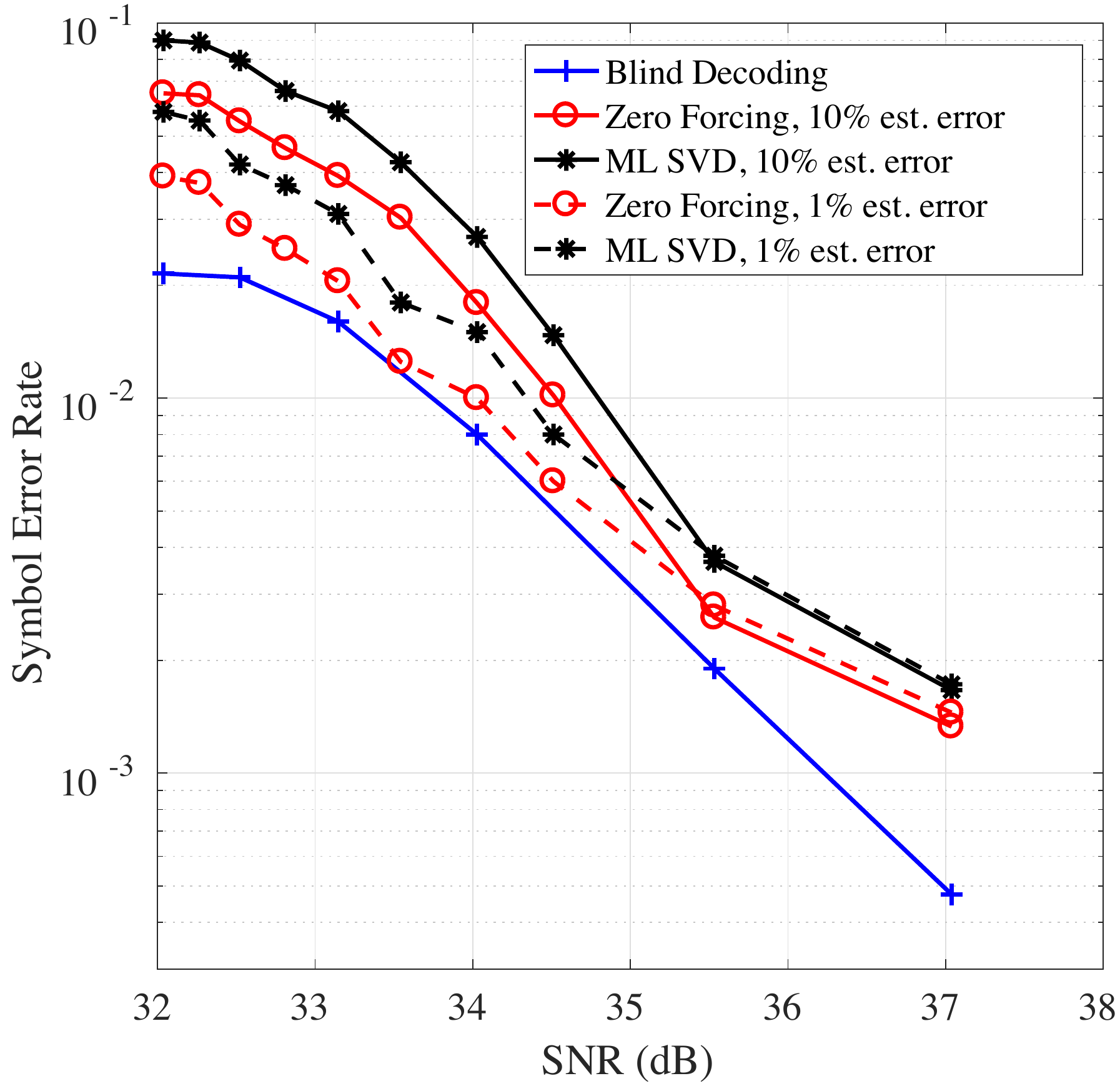}
    \par\end{centering}
    \caption{Decoding performance for $n=3, M=32, c=3$ compared to the
zero-forcing and maximum-likelihood decoder, implemented through parallel
channel decomposition.  The top figure has no estimation error present, 
the bottom figure has compares blind decoding to both ML and ZF decoding with
imperfect CS.  These figures show that if the error in the CSI is even one percent
of the channel noise variance, then blind decoding outperforms both these
algorithms. The blind decoding algorithm appears to have at most a 3dB
loss over decoding with perfect CSI.}
    \label{fig:est_error}
\end{figure}

\begin{figure} 
    \begin{centering}
	\includegraphics[width=1\columnwidth]{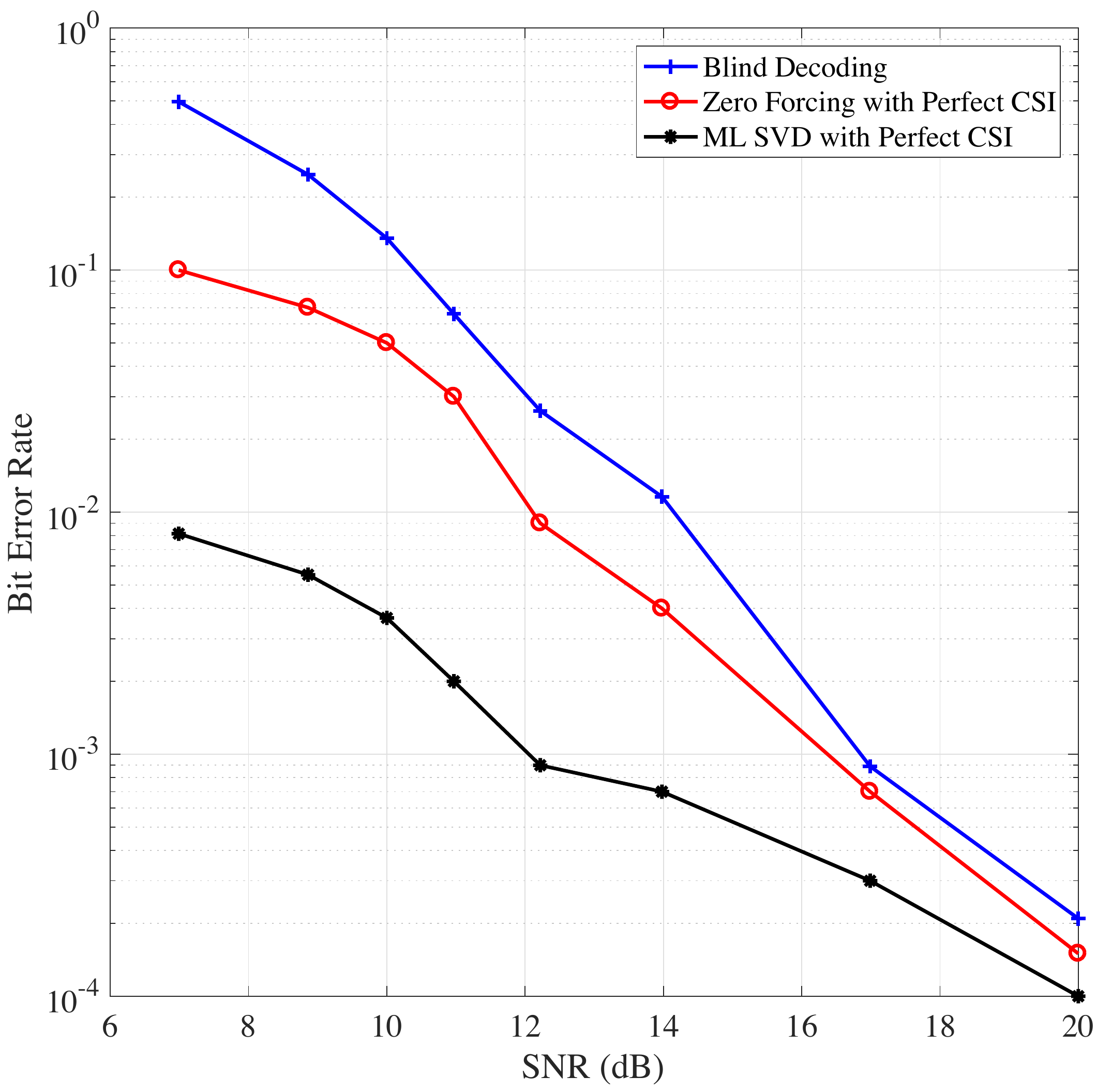} 
    \includegraphics[width=0.99\columnwidth]{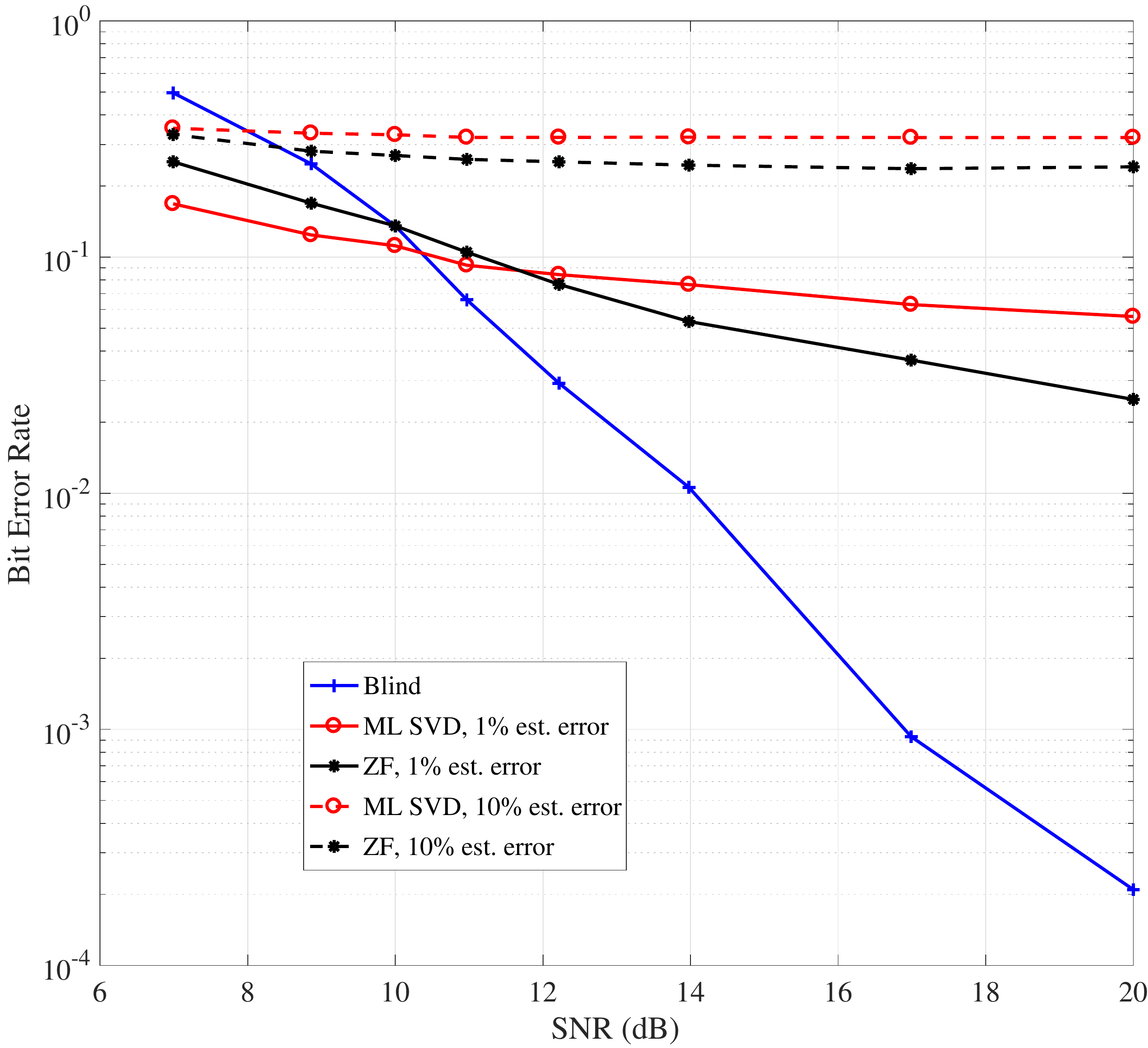}
    \par\end{centering}
    \caption{Decoding performance for $n=4, M=2, c=3$.  The decoding
performance is similar in comparison to standard MIMO algorithms as 
the $n=3, M=32$ case shown in Figure \ref{fig:est_error}.
}
    \label{fig:bpsk_est_error}
\end{figure}
Motivated by real-world usage, we compared blind decoding (Algorithm
\ref{alg:fit}) to the ZF and ML decoders
with imperfect CSI.  If we assume that the channel is estimated through a set of 
known pilot symbols that will be corrupted by Gaussian noise, the error in the 
CSI will be i.i.d. Gaussian.  This is a realistic assumption in most wireless
systems, and the model we used in our simulations.
In Figures \ref{fig:est_error} and \ref{fig:bpsk_est_error}, we also plot the performance of Algorithm
\ref{alg:fit} against ZF and ML when the variance of the error in the channel gain matrix 
is 1\% of that of the AWGN in the channel and for 10\% estimation error.  In both cases, 
Algorithm \ref{alg:fit} significantly outperforms the ZF and ML decoders.

%%%%%%%%%%%%%%%%%%%%%%%%%%%%%%%%%%%%%%%%%%%%%%%%%%%%%%%%%%%%%%%%%%%%%%%%%%%%%%%%
\section{Conclusion}
\label{sec:conc}
We have provided an algorithm to jointly estimate MIMO channels and decode the
underlying transmissions in block fading channels.  This algorithm performs
gradient descent on a non-convex
optimization problem.  Empirically, this algorithm has a performance
loss on the order of several decibels versus schemes with perfect CSI, 
but its performance becomes
superior when CSI knowledge is imperfect.  This algorithm is
practical in that it works well for block-fading channels with realistic
coherence times. In addition to the important application of decoding in MIMO
channels with missing or imperfect CSI, our algorithm is potentially useful 
from the point of view of an eavesdropper who does not know the pilot symbols 
but is trying to recover $\mathbf{x}$. 

We present in-depth analysis of the geometry of this non-convex optimization
problem. Specifically we prove that for $M=2$, and small values of $n$, all 
optima are global and give necessary and sufficient conditions for when these
optima correspond to solutions to the blind decoding.  For general values of $n$, we
relate the problem to the Hadamard Maximal Determinant problem and give evidence
that providing matching theoretical guarantees for larger values of $n$ is
likely difficult.  However, our empirical results suggest that our algorithm
remains feasible for values of $n$ commonly found in modern MIMO systems.

%The reduction to the Hadamard Maximal Determinant given in this
%problem may have implications for physical-layer authentication schemes.  
%For values of $n$ as low as $n=22$, the maximal value
%of the determinant $\{-1,+1\}$-valued matrices remains unknown, and for values
%above $n=120$ maximal values have not been computed (except for special cases)
%\cite{brent2013minors} because computing such values is computationally
%infeasible.  This implies that when $k=n$ the
%blind decoding problem becomes computationally infeasible for large $n$.  This result %may imply that it is possible to computationally hide CSI (and possibly the
%underlying data transmissions) in massive MIMO schemes in a similar manner to 
%\cite{dean2017physical}. 

This paper also motivates a suite of open theoretical problems related to the
performance of our algorithm. For example, we provide no theoretical results
that analytically explain the performance in the presence of AWGN, and for
$M>2$. We also leave open possible extensions to rectangular or complex-valued
channels, as well as more efficient algorithms than gradient descent that solve
the blind decoding problem.

%%%%%%%%%%%%%%%%%%%%%%%%%%%%%%%%%%%%%%%%%%%%%%%%%%%%%%%%%%%%%%%%%%%%%%%%%%%%%%
\appendices
\section{Proof of Proposition \ref{prop:fullrank}}
\label{apx:prop_one}
In this appendix, we prove the result of Proposition \ref{prop:fullrank} shows that solutions to (\ref{eq:obj})--(\ref{eq:const}) are meaningful if and only if the channel gain matrix is full rank.
\begin{customprop}{\ref{lm:max_optima}}
If the matrix $\mathbf{Y}$ is not full rank, then (\ref{eq:obj})--(\ref{eq:const}) 
is unbounded above. Conversely, if $\mathbf{Y}$ is full rank and $k \geq n$, then
(\ref{eq:obj})--(\ref{eq:const}) is bounded above and feasible.  
\end{customprop}
\begin{proof}
If $\mathbf{Y}$ is not full rank, then there is some nonzero vector $\mathbf{v}
\in \mathbb{R}^n$ s.t. $\mathbf{v}^T \mathbf{Y} = \mathbf{0}$. Then the matrix
\begin{align}
\mathbf{UY} &=
c_1 \cdot \left( \mathbf{A}^{-1} + c_2 \cdot
\begin{bmatrix}
\horzbar & \mathbf{v}^T & \horzbar \\
			 \\
	&  \mathbf{0} &   \\
			 \\
\end{bmatrix} \right) \mathbf{Y} \nonumber \\
&=c_1 \mathbf{X} + c_1 \mathbf{A}^{-1} \mathbf{e}
\end{align}
satisfies (\ref{eq:const}), for some value of $c_1$, but the objective function 
(\ref{eq:obj}) grows without bound as $c_2$ grows.

Conversely, if $\mathbf{Y}$ is full rank and $k \geq n$, then the left nullspace
of $\mathbf{Y}$ is $\{\mathbf{0}\}$. Thus, any non-zero row in $\mathbf{U}$ will always
affect (\ref{eq:const}) and the maximum $\| \mathbf{u}_i \|_2$ will be bounded
above for all $i$, implying that (\ref{eq:obj}) is bounded above.  Similarly, 
for any $\mathbf{Y}$, there will always be $\mathbf{U}$ that satisfies
(\ref{eq:const}), for example, consider $\mathbf{U}=0$.
\end{proof}
%%%%%%%%%%%%%%%%%%%%%%%%%%%%%%%%%%%%%%%%%%%%%%%%%%%%%%%%%%%%%%%%%%%%%%%%%%%%%%
\section{Proof of Lemma \ref{lm:max_optima}}
\label{apx:lemma_three}
\begin{customlemma}{\ref{lm:max_optima}}
If $\mathbf{X}$ has the maximal subset property, then, for all ATMs $\mathbf{T}$,
 all matrices in the form $\mathbf{U} = \mathbf{TA}^{-1}$, are global optima 
of (\ref{eq:obj})--(\ref{eq:const}).
\end{customlemma}
Lemma
\ref{lm:max_optima} follows from the following claim.
\begin{claim}
\label{lm:detconst}
Suppose that $\mathbf{X}$ has the maximal subset property.  Then for all matrices 
$\mathbf{M}$ such that $\|\mathbf{Mx}\|_\infty \leq 1$ for all $\mathbf{x} \in
\mathbf{X}$, we have $| \det \mathbf{M} | \leq 1$.
\end{claim}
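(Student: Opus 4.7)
The plan is to reduce the claim to a single application of the maximal subset property together with the multiplicativity of the determinant. First I would invoke the MSP to extract an $n \times n$ submatrix $\mathbf{V}$ of $\mathbf{X}$ whose absolute determinant equals the extremal value
\[
D_n \;:=\; \max_{\mathbf{W} \in [-1,+1]^{n \times n}} |\det \mathbf{W}|.
\]
Since each column of $\mathbf{V}$ is a column of $\mathbf{X}$, the hypothesis on $\mathbf{M}$ applies to each of them, giving $\|\mathbf{M}\mathbf{v}_j\|_\infty \leq 1$ for every $j$; in matrix form this says exactly that every entry of $\mathbf{M}\mathbf{V}$ lies in $[-1,+1]$.

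Next I would apply the definition of $D_n$ to the matrix $\mathbf{M}\mathbf{V}$ itself to obtain $|\det(\mathbf{M}\mathbf{V})| \leq D_n$. Multiplicativity of the determinant rewrites the left side as $|\det \mathbf{M}| \cdot |\det \mathbf{V}| = |\det \mathbf{M}| \cdot D_n$. Because the identity matrix already lies in $[-1,+1]^{n \times n}$, we have $D_n \geq 1 > 0$, so I may divide both sides by $D_n$ to conclude $|\det \mathbf{M}| \leq 1$.

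I do not expect a serious obstacle here; the whole argument is a two-line application of the MSP plus multiplicativity of $\det$. The only mild care needed is notational: the MSP is phrased in terms of a subset $\text{cols}(\mathbf{V}) \subseteq \text{cols}(\mathbf{X})$, so I should remark that any ordering of those $n$ vectors into a square matrix still yields absolute determinant $D_n$ (a reordering only changes the sign), and hence the specific choice of ordering used to form $\mathbf{M}\mathbf{V}$ is immaterial. Once the claim is in hand, Lemma~\ref{lm:max_optima} follows in one further step: in the noiseless regime $\mathbf{Y} = \mathbf{A}\mathbf{X}$, so for any feasible $\tilde{\mathbf{U}}$ the matrix $\mathbf{M} := \tilde{\mathbf{U}}\mathbf{A}$ satisfies the claim's hypothesis, giving $|\det \tilde{\mathbf{U}}| \leq 1/|\det \mathbf{A}|$; meanwhile for $\mathbf{U} = \mathbf{T}\mathbf{A}^{-1}$ we have $|\det \mathbf{U}| = 1/|\det \mathbf{A}|$ because any ATM is a signed permutation with determinant $\pm 1$, so $\mathbf{U}$ saturates the bound and is therefore a global optimum.
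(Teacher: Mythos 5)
Your proof is correct and is essentially the paper's argument: both extract the maximal-determinant submatrix $\mathbf{V}$, observe that $\mathbf{M}\mathbf{V} \in [-1,+1]^{n \times n}$, and use multiplicativity of the determinant to bound $|\det \mathbf{M}|$. The only difference is that you argue directly (dividing by $D_n \geq 1$) while the paper phrases it as a contrapositive; this is not a substantive distinction.
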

\begin{proof}
Let $\mathbf{x}_1,\ldots,\mathbf{x}_n$ be the set guaranteed by the maximal subset property.  
Let $\mathbf{V}$ be the matrix whose columns are
$\mathbf{x}_1,\ldots,\mathbf{x}_n$, so $|\det \mathbf{V}|$ is maximal.  
If $|\det \mathbf{M}| > 1$, then $|\det \mathbf{MV} | > |\det \mathbf{V}|$.   
This would imply that $\mathbf{MV}$ cannot be
contained in $[-1, +1]^{n \times n}$.
\end{proof}
\noindent Thus, if the matrix $\mathbf{X}$ has the maximal subset property, then the
optimal solution to (\ref{eq:obj})--(\ref{eq:const}) have $|\det
\mathbf{UA}|=1$, and hence all $\mathbf{U}=\mathbf{TA}^{-1}$ correspond to
global optima. This completes the proof of Lemma \ref{lm:max_optima}.
%%%%%%%%%%%%%%%%%%%%%%%%%%%%%%%%%%%%%%%%%%%%%%%%%%%%%%%%%%%%%%%%%%%%%%%%%%%%%%
\section{Proof of Theorem \ref{thm:vertex}}
\label{apx:theorem_one}
\begin{customthm}{\ref{thm:vertex}}
Algorithm \ref{alg:fit_corner} terminates at a vertex of the feasible region
with probability 1.
\end{customthm}
We begin the proof of Theorem \ref{thm:vertex} with a lemma that is a simple 
consequence of the fact that the determinant is a
multilinear function. 
\begin{lemma}
\label{lem:fp_boundary}
All optima of (\ref{eq:obj})--(\ref{eq:const}) lie on the boundary of the feasible region.
\end{lemma}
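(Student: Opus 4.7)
The plan is a direct scaling argument exploiting the homogeneity of the determinant and the fact that the feasible region is a convex polytope containing the origin. First I would observe that the feasible region, defined by the $2kn$ inequalities $|\mathbf{u}^{(j)} \mathbf{y}_i| \leq M + c\sigma$, is a convex polytope that contains $\mathbf{U} = \mathbf{0}$ in its interior. Hence if $\mathbf{U}$ lies strictly in the interior of the feasible region, there exists $\varepsilon > 0$ such that every $\mathbf{U}'$ with $\|\mathbf{U}' - \mathbf{U}\|$ small enough is still feasible.

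Next I would pick an interior $\mathbf{U}$ and consider the one-parameter family $\alpha \mathbf{U}$ for $\alpha \geq 1$. Since scaling sends $\mathbf{u}^{(j)} \mathbf{y}_i$ to $\alpha \mathbf{u}^{(j)}\mathbf{y}_i$, and since $|\mathbf{u}^{(j)}\mathbf{y}_i| < M + c\sigma$ strictly for all $i, j$ (this is what ``interior'' means for the polytope), there is some $\alpha > 1$ for which $\alpha \mathbf{U}$ remains feasible. But $|\det(\alpha \mathbf{U})| = \alpha^n |\det \mathbf{U}|$, and since the logarithm is strictly increasing, the objective value at $\alpha \mathbf{U}$ strictly exceeds that at $\mathbf{U}$. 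This contradicts $\mathbf{U}$ being an optimum.

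I would then conclude that any optimum $\mathbf{U}$ must satisfy at least one constraint with equality, i.e.\ it lies on the boundary of the feasible region. The only subtlety is handling the case $\det \mathbf{U} = 0$, where the objective is $-\infty$; such $\mathbf{U}$ is never an optimum (take any feasible $\mathbf{U}'$ with nonzero determinant, which exists by Proposition \ref{prop:fullrank}), so these degenerate points can be discarded at the outset.

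There is no real obstacle here: the whole argument rests on the fact that the feasible set is star-shaped with respect to the origin and the objective is homogeneous of degree $n$ in $\mathbf{U}$ (up to the logarithm). The proof is therefore essentially a one-line scaling observation that will be used later to justify that optimization can be restricted to the boundary.
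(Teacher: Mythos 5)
Your proof is correct, and it rests on the same underlying mechanism as the paper's: the determinant scales under dilation, so an interior point can always be pushed outward to increase the objective. The difference is that you scale the whole matrix, using $|\det(\alpha\mathbf{U})| = \alpha^n|\det\mathbf{U}|$, whereas the paper scales a single row, replacing $\mathbf{u}^{(j)}$ by $(1+\epsilon)\mathbf{u}^{(j)}$ and invoking multilinearity of the determinant in the rows. For the lemma as stated the two are interchangeable, and your global-scaling version is arguably the cleaner one-liner. The paper's row-wise version, however, buys a slightly stronger conclusion for free: since the constraints in (\ref{eq:const}) decouple across rows of $\mathbf{U}$, the row-scaling argument shows that \emph{every} row must individually have at least one active constraint at an optimum, not merely that some constraint is tight somewhere. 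That per-row statement is what the later analysis (e.g.\ Lemma \ref{lm:n_minus_one}, which counts active constraints row by row) actually leans on, so if you intend to reuse your lemma downstream you would want to note that your argument applies equally well to each row separately. Your handling of the $\det\mathbf{U}=0$ degenerate case is fine and is consistent with the paper's restriction of the domain to invertible matrices.
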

\begin{proof}
Consider a feasible, full-rank $\mathbf{U}$ which is not on the boundary of the
feasible region, i.e. $\|\mathbf{Uy}_i\|_\infty < 1$ for all $i \in [1,\ldots,k]$. Suppose
$|\mathbf{u}^{(j)} \mathbf{y}_i| = c$ for some $i,j$ and some $0 < c < 1$. If we set
\begin{equation}
\tilde{\mathbf{u}}^{(j)} = (1+\epsilon) \mathbf{u}^{(j)}
\end{equation}
for some $0 < \epsilon \leq c$,
and form $\tilde{\mathbf{U}}$ from the matrix $\mathbf{U}$ by replacing row 
$\mathbf{u}^{(j)}$ with 
$\tilde{\mathbf{u}}^{(j)}$, then, because the determinant is linear in the
rows of $\mathbf{U}$, we have: 
\begin{equation}
| \det \tilde{\mathbf{U}} | = (1+\epsilon) | \det \mathbf{U} | > | \det 
\mathbf{U} |
\end{equation}
and $\tilde{\mathbf{U}}$ is still feasible.
\end{proof}
In other words, if $\mathbf{U}$ is not on the problem boundary, the multilinearity of the
determinant function implies that we can always move towards the problem
boundary (and away from the origin) in a way that increases the objective
function.

We can also use the fact that the feasible region is formed by an
$n$-dimensional polytope to further categorize the optima of our optimization
problem, as stated in the following lemma.
Let $\mathcal{P}$ denote the polytope that describes the feasible region given by (\ref{eq:const}), and let $\mathcal{F}$ denote a face of this polytope.  Since each row of the matrix $\mathbf{U}$ acts on the constraints in an independent manner, we say that a row is ``active'' if there are $n$ linearly independent constraints active on this row.  If all rows of $\mathbf{U}$ are active then $\mathbf{U}$ is a vertex of $\mathcal{P}$; further, if there are $i$ active rows, then $\mathbf{U}$ lies on a face of dimension at most $n (n - i)$.  

\begin{lemma}
\label{lem:face}
Suppose that $\mathbf{U}$ is in the interior of a face $\mathcal{F}$.  Then there exists a $\mathbf{v}\in \mathbb{R}^n, i \in [n], \lambda_1 < \lambda_2 \in \mathbb{R}$ such that the interval $\mathbf{I}$ defined as 
\begin{equation}
\label{eq:int}
\mathbf{I} = \{\mathbf{U} + \lambda \mathbf{e}_i \mathbf{v}^\intercal |\, \lambda \in [ \lambda_1, \lambda_2 ] \}
\end{equation}
satisfies $\mathbf{I} \subseteq \mathcal{F}$.
Further, the points given by $\mathbf{U} + \lambda_j \mathbf{e}_i \mathbf{v}^\intercal$ for all $j = 1,2$ lie on a face of lower dimension than $\mathcal{F}$.
\end{lemma}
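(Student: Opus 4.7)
My plan is to exploit the fact that the constraints in~(\ref{eq:const}) act on each row of $\mathbf{U}$ independently, so the active constraint pattern decouples by row. The first step is to show that if $\mathbf{U}$ lies in the relative interior of a face $\mathcal{F}$ of positive dimension, then at least one row $i$ must have fewer than $n$ linearly independent active constraints: otherwise, every row contributes $n$ independent equality constraints for a total of $n^2$, pinning $\mathbf{U}$ to a vertex of $\mathcal{P}$ and contradicting $\dim(\mathcal{F}) \geq 1$. This uses precisely the dimension bookkeeping $\dim(\mathcal{F}) \leq n(n - i_{\mathrm{act}})$ given in the paragraph preceding the lemma.

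With such a row $i$ in hand, I would construct the direction $\mathbf{v}$ as follows. Let $\mathbf{Y}_i$ denote the matrix whose columns are those samples $\mathbf{y}_j$ for which the $i$-th row constraint $|\mathbf{u}^{(i)} \mathbf{y}_j| = 1$ is active. Since the rank of $\mathbf{Y}_i$ is strictly less than $n$, its left nullspace is non-trivial; pick any nonzero $\mathbf{v}$ there. For the perturbation $\tilde{\mathbf{U}}(\lambda) = \mathbf{U} + \lambda \mathbf{e}_i \mathbf{v}^\intercal$, only row $i$ changes, and $\mathbf{v}^\intercal \mathbf{y}_j = 0$ for every active column implies $(\mathbf{u}^{(i)} + \lambda \mathbf{v}^\intercal) \mathbf{y}_j = \mathbf{u}^{(i)} \mathbf{y}_j$, so every constraint active at $\mathbf{U}$ is preserved with the same sign along the entire parameterized line. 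By continuity, the finitely many constraints that are strictly inactive at $\mathbf{U}$ remain strictly inactive for $\lambda$ in some open neighborhood of $0$, and the constraints on all other rows are untouched.

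Finally, I would define $\lambda_2 > 0$ and $\lambda_1 < 0$ as the largest, respectively smallest, values for which $\tilde{\mathbf{U}}(\lambda)$ remains in $\mathcal{P}$. These exist and are finite because $\mathcal{P}$ is bounded whenever $\mathbf{Y}$ is full rank (Proposition~\ref{prop:fullrank}), and they are attained because $\mathcal{P}$ is closed. At $\lambda \in \{\lambda_1, \lambda_2\}$, some previously strict constraint on row $i$ becomes tight, so the active-constraint set strictly enlarges and the corresponding endpoint lies on a face $\mathcal{F}' \subsetneq \mathcal{F}$ of strictly lower dimension. Conversely, for $\lambda \in (\lambda_1, \lambda_2)$ the active-constraint set equals that defining $\mathcal{F}$, placing $\tilde{\mathbf{U}}(\lambda)$ in the relative interior of $\mathcal{F}$; combined with closure, this yields $\mathbf{I} \subseteq \mathcal{F}$, as required.

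The main subtlety I anticipate is correctly connecting the combinatorial notion of ``interior of the face $\mathcal{F}$'' to the algebraic statement ``exactly this set of constraints is active,'' and then leveraging the row-decoupled structure to conclude that a non-active row must exist as soon as $\dim(\mathcal{F}) \geq 1$. Once that counting is in place, the remainder is a standard ``edge-of-a-polytope'' argument adapted to the tensor-product structure $\mathbf{e}_i \mathbf{v}^\intercal$ of our feasible region.
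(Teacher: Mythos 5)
Your proposal is correct and follows essentially the same route as the paper: identify a row with fewer than $n$ linearly independent active constraints, choose $\mathbf{v}$ in the orthogonal complement of the active samples for that row, note that the rank-one perturbation $\lambda\mathbf{e}_i\mathbf{v}^\intercal$ preserves all active constraints, and use the full-rankness of $\mathbf{Y}$ (equivalently, boundedness of $\mathcal{P}$) to get finite endpoints $\lambda_1,\lambda_2$ at which an additional constraint becomes tight. The paper's proof makes the same row-decoupling and nullspace argument, so no substantive differences to report.
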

\begin{proof}
The polytope $\mathcal{P}$ is bounded by $| \langle \mathbf{u}^{(j)}, \mathbf{y}^{(i)} \rangle | \leq 1$ for $\mathbf{y}^{(1)}, \ldots, \mathbf{y}^{(k)}$ and all rows $\mathbf{u}^{(j)}$ of $\mathbf{U}$. If $\mathbf{U}$ is not at a vertex, then there exists a row $\mathbf{u}^{(j)}$ that is not active.  For this non-active row $\mathbf{u}^{(j)}$, say that $\Omega = \{\mathbf{y}^{(i)}\; |\; |\langle \mathbf{u}^{(j)}, \mathbf{y}^{(i)} \rangle | = 1 \}$.  We must have $\dim (\Omega) < n$.  Thus, there exists a $\mathbf{v} \in \Omega^\perp$ and a $\mathbf{y}^{(i)} \notin \text{cols}(\mathbf{Y})$ such that $\langle \mathbf{v}, \mathbf{y}^{(i)} \rangle \neq 0$.  This is true since if, for all $\mathbf{y}^{(i)} \in \mathbf{Y}, \langle \mathbf{v}, \mathbf{y}^{(i)} \rangle = 0$, then $\mathbf{Y}$ is not full rank.  We now require the following claim:
\begin{claim}
For small enough $\lambda$, $\mathbf{U} + \lambda \mathbf{e}_j \mathbf{v}^\intercal \in \mathcal{F}$.
\end{claim}
\begin{proof}
The quantity $\lambda \mathbf{e}_j \mathbf{v}^\intercal$ only affects constraints acting on row $j$ of $\mathbf{U}$, and for all $\mathbf{y}^{(i)} \in \Omega, (\mathbf{U} + \lambda \mathbf{e}_j \mathbf{v}^\intercal) \cdot \mathbf{y}^{(i)} = \mathbf{Uy}^{(i)}$.  Thus, no active constraints have been affected which implies that we have not left the face $\mathcal{F}$.  
\end{proof}
It remains to show that for appropriate values of $\lambda_1$ and $\lambda_2$, the points given by $\mathbf{U} + \lambda_j \mathbf{e}_i \mathbf{v}^\intercal$, for all $j = 1,2$, lie on a lower dimensional face than $\mathcal{F}$.  Since there exists a $\mathbf{y}^{(i)}$ such that $\langle \mathbf{v}, \mathbf{y}^{(i)} \rangle \neq 0$, there must be exactly two values $\lambda_1, \lambda_2$ such that $\langle \mathbf{u}^{j}+\lambda_i \mathbf{e}_j \mathbf{v}^\intercal, \mathbf{y}^{(j)} \rangle = \pm 1$.  Thus, for these bounding values of $\lambda$, an additional constraint will be active, implying that $\mathbf{U} + \lambda_j \mathbf{e}_i \mathbf{v}^\intercal$ will be on a lower dimensional face than $\mathcal{F}$.
\end{proof}

Having established Lemma \ref{lem:face}, we now show the following corollary which allows us to further characterize the optima of (\ref{eq:obj})--(\ref{eq:const}).
%\begin{coro}
%\label{cor:strict}
%All strict optima lie on vertices of the feasible region.
%\end{coro}
%\begin{proof}
%Suppose $\mathbf{U}$ is a strict optimum in the interior of some face $\mathcal{F}$.  Lemma \ref{lem:face} states that there exists a direction $\mathbf{v}$ that defines an interval $\mathbf{I} \subseteq \mathcal{F}$ as in (\ref{eq:int}).  By the multilinearity of the determinant, $\det \mathbf{U}$ will be linear when restricted to $\mathbf{I}$.  This is a contradiction since either the objective function must be greater at one of the end points of $\mathbf{I}$, implying $\mathbf{U}$ is not an optima, or the objective function must be constant over $\mathbf{I}$ implying the optima is not strict.
%\end{proof}
\begin{coro}
\label{cor:nonstrict}
Non-strict optima (which by Lemma \ref{lem:fp_boundary} must lie on a face $\mathcal{F}$) are restricted to an interval $\mathbf{I}$ as given in (\ref{eq:int}).  Further, the end points of this interval lie in a lower dimensional face.
\end{coro}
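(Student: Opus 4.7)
The plan is to combine Lemma \ref{lem:face} with the multilinearity of the determinant to analyze how the objective behaves along the intervals that lemma produces. Given a non-strict optimum $\mathbf{U}$, Lemma \ref{lem:fp_boundary} places $\mathbf{U}$ on the boundary of the feasible region. Because $\mathbf{U}$ is non-strict, it cannot be isolated among feasible points with its objective value, so I would first argue that $\mathbf{U}$ must lie in the relative interior of some face $\mathcal{F}$ of positive dimension (a vertex is isolated as a feasible point on the boundary, so any vertex attaining the optimal value in its neighborhood is automatically strict unless its optimality extends along an incident edge, which is already an instance of the interval we seek). Lemma \ref{lem:face} then supplies an interval $\mathbf{I} = \{\mathbf{U} + \lambda \mathbf{e}_i \mathbf{v}^\intercal : \lambda \in [\lambda_1,\lambda_2]\} \subseteq \mathcal{F}$ through $\mathbf{U}$ whose endpoints lie on strictly lower-dimensional faces.

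Next I would compute how the objective varies along $\mathbf{I}$. Since $\lambda \mathbf{e}_i \mathbf{v}^\intercal$ perturbs only row $i$ of $\mathbf{U}$, multilinearity of the determinant in its rows gives
\begin{equation}
\det\!\bigl(\mathbf{U} + \lambda \mathbf{e}_i \mathbf{v}^\intercal\bigr) \;=\; \det \mathbf{U} \;+\; \lambda \det \tilde{\mathbf{U}},
\end{equation}
where $\tilde{\mathbf{U}}$ is $\mathbf{U}$ with row $i$ replaced by $\mathbf{v}^\intercal$. So the objective along $\mathbf{I}$ is $\log|a + b\lambda|$ with $a = \det \mathbf{U}$ and $b = \det \tilde{\mathbf{U}}$. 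For $b \neq 0$, this function of $\lambda$ is strictly monotonic on every subinterval on which $a + b\lambda$ is nonzero, so at least one direction along $\mathbf{I}$ strictly increases the objective, contradicting optimality of $\mathbf{U}$. Hence $b = 0$, the objective is constant on all of $\mathbf{I}$, and every point of $\mathbf{I}$ is an optimum with the same objective value as $\mathbf{U}$. This is exactly the statement that the non-strict optima near $\mathbf{U}$ are restricted to the interval (\ref{eq:int}).

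The closing assertion, that the endpoints of $\mathbf{I}$ lie in strictly lower-dimensional faces than $\mathcal{F}$, is the final sentence of Lemma \ref{lem:face} and needs no further work. The only mild obstacle I anticipate is the careful handling of the boundary case in which the affine function $a + b\lambda$ could vanish in the interior of $\mathbf{I}$; since we have just shown $b = 0$, we have $a + b\lambda \equiv \det \mathbf{U} \neq 0$ (as $\mathbf{U}$ must be full-rank to be in the domain of the objective), so this degeneracy does not arise and the proof is complete.
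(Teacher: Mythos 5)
Your proof is correct and follows essentially the same route as the paper: apply Lemma \ref{lem:face} to obtain the interval $\mathbf{I}$ (whose endpoints the lemma already places on lower-dimensional faces), and use multilinearity of the determinant in the rows to conclude the objective is affine, hence constant, along $\mathbf{I}$ at a non-strict optimum. The paper's own proof is a two-sentence sketch; your explicit computation $\det(\mathbf{U}+\lambda\mathbf{e}_i\mathbf{v}^\intercal)=\det\mathbf{U}+\lambda\det\tilde{\mathbf{U}}$ and the forcing of the linear coefficient to vanish is exactly the detail it leaves implicit.
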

\begin{proof}
Suppose $\mathbf{U}$ is a non-strict optima in the interior of some face. By Lemma \ref{lem:face} and the multilinearity of the determinant, there must exist an interval $\mathbf{I}$ over which the objective function is constant valued.
\end{proof}

We also remark, informally, that Lemma \ref{lem:face} also implies that all strict optima will in fact be vertices of the feasible region.  By Lemma \ref{lem:face} and Corollary \ref{cor:nonstrict}, for any $\mathbf{U}$ that lies on the interior of a face, $\mathbf{U}$ is not a strict optimum.  Indeed, there must either be a direction $\mathbf{e}_i \mathbf{v}^\intercal$ along which $\mathbf{U}$ may move either to increase the value of the objective function or keep the objective function constant.   This result is not needed to complete Theorem \ref{thm:vertex}, but provides insight into the geometry of the problem.

Having characterized the set of optima in (\ref{eq:obj})--(\ref{eq:const}), we now turn our attention to characterizing the behavior of Algorithms \ref{alg:fit} and \ref{alg:fit_corner}. 
To do so, we will consider the gradient of the objective function, which is given in \cite{petersen2008matrix} as
%\begin{claim}
%\label{clm:grad}
%The gradient of (\ref{eq:obj}) is given by:
\begin{equation}
\nabla \log | \det \mathbf{U} | = \left( \mathbf{U}^{-1} \right)^\intercal.
\end{equation}
%\end{claim}
We note that an alternative proof of Lemma \ref{lem:fp_boundary} follows from the
fact that the gradient is non-zero as long as $\mathbf{U}$ is finite.  
%It is well known that for an $n$-dimensional positive-semidefinite matrix
%$\mathbf{U}$, $\nabla \log \det \mathbf{U} = \mathbf{U}^{-1}$ (see \cite{boyd_opt}, 
%for example).  When $\mathbf{U}$ is an arbitrary $n \times n$ non-singular matrix, we obtain a
%similar result. It is well known that the scalar derivatives of $\det
%\mathbf{U}$ are given by:
%\begin{equation}
%\frac{\partial}{\partial u_{ij}} \det \mathbf{U} = C_{ij},
%\end{equation}
%where $C_{ij}$ denotes the $(i,j)$ cofactor of $\mathbf{U}$. We can %use this to
%compute the gradient matrix of the objective function.  We now proceed with the proof of Claim \ref{clm:grad}.
%\begin{proof}
%We first note that we can write the objective function as follows:
%\begin{equation}
%\log | \det \mathbf{U} | = \frac{1}{2} \log \det %\mathbf{U}^\intercal
%\mathbf{U}.
%\end{equation}
%We can now differentiate using the scalar chain rule:
%\begin{align}
%\nabla \log | \det \mathbf{U} | &= 
%\frac{1}{2} \nabla \log \det \mathbf{U}^\intercal \mathbf{U} \\
%&= \frac{1}{\det \mathbf{U}} \cdot \nabla \det \mathbf{U}    \\
%&= \frac{1}{\det \mathbf{U}} \cdot \mathbf{C} 
%					  = (\mathbf{U}^{-1})^\intercal.  %\label{eq:deriv}
%\end{align}
%\end{proof}
%Notice that for positive-definite $\mathbf{U}$, 
%$(\mathbf{U}^{-1})^\intercal = (\mathbf{U}^{-1})$.  

In order to understand how gradient descent will behave on the boundary of the
feasible region, we must consider directional derivatives for directions that
lie on the problem boundary.  Let $\Delta \in \mathbb{R}^{n \times n}$ be a
direction such that, for feasible $\mathbf{U}$, $\mathbf{U} + \Delta$ is
also feasible.  Gradient descent will terminate if, for all $\Delta$,
\begin{equation}
\left\langle \left(\mathbf{U}^{-1}\right)^\intercal, \Delta \right\rangle_F \leq 0,
\label{eq:direction}
\end{equation}
where $\langle \cdot, \cdot \rangle_F$ is the Frobenius inner product.  
Thus, (\ref{eq:direction}) is equal to
$\mathrm{vec}\left(\left(\mathbf{U}^{-1}\right)^\intercal\right)^\intercal
\mathrm{vec}(\Delta)$ or
$\mathrm{tr}( \left(\mathbf{U}^{-1}\right)^\intercal \Delta )$.

%Notice that the constraints in (\ref{eq:const}) act on each row independently.  
%That is, changing the value of any row individually does not effect whether or 
%not the constraints on any other row are satisfied.  We use this fact in the next 
We now show that (\ref{eq:direction}) can only hold if each row has either $n-1$ 
or $n$ active, linearly independent constraints.  This lemma, as well as Corollary \ref{cor:nonstrict},
motivate Algorithm \ref{alg:fit_corner} as it implies that there may be
corner cases where Algorithm \ref{alg:fit} will fail, but these corner cases can
easily be handled by forcing the algorithm to terminate at the nearest vertex. 
\vspace{1mm}
\begin{lemma}
\label{lm:n_minus_one}
If any row of $\mathbf{U}$ has fewer then $n-1$ active constraints, 
then there exists a non-zero matrix $\mathbf{\Delta}$ such that 
$\mathbf{U + \Delta}$ satisfies (\ref{eq:const}) and $\log |\det \mathbf{U + \Delta}| > 
\log | \det \mathbf{U} |$.
%Furthermore, there will always be a non-negative direction as long as each row
%has fewer than $n$ active, linearly independent constraints. 
\end{lemma}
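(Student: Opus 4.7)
The plan is to construct $\mathbf{\Delta}$ as a single-row perturbation $\mathbf{\Delta} = \mathbf{e}_j \mathbf{\delta}^\intercal$, where $j$ is the row with fewer than $n-1$ linearly independent active constraints. Because the determinant is linear in row $j$, multilinearity yields $\det(\mathbf{U}+\mathbf{\Delta}) = \det\mathbf{U} + \langle \mathbf{c}_j, \mathbf{\delta}\rangle$, where $\mathbf{c}_j = \det(\mathbf{U})\mathbf{v}^{(j)}$ is the $j$-th row of cofactors, proportional to the $j$-th column $\mathbf{v}^{(j)}$ of $\mathbf{U}^{-1}$ (nonzero since $\mathbf{U}$ is invertible). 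The problem thus reduces to finding a feasible $\mathbf{\delta}$ of small norm such that $\mathrm{sgn}(\det \mathbf{U}) \cdot \langle \mathbf{c}_j, \mathbf{\delta}\rangle > 0$, since this guarantees $|\det(\mathbf{U}+\mathbf{\Delta})| > |\det \mathbf{U}|$.

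Let $\Omega_j$ denote the set of active constraint vectors on row $j$ and set $W = \mathrm{span}(\Omega_j)^\perp$. Any $\mathbf{\delta} \in W$ leaves every active constraint on row $j$ exactly satisfied, and the inactive constraints on row $j$ remain strict for $\|\mathbf{\delta}\|$ small by continuity; constraints on the other rows are unaffected since $\mathbf{\Delta}$ is supported on row $j$. The hypothesis forces $\dim\mathrm{span}(\Omega_j) \leq n-2$, so $\dim W \geq 2$. In the generic case $\mathbf{c}_j \notin \mathrm{span}(\Omega_j)$, the orthogonal projection of $\mathbf{c}_j$ onto $W$ is nonzero, and taking $\mathbf{\delta}$ as a small, appropriately signed multiple of that projection gives $\langle \mathbf{c}_j, \mathbf{\delta}\rangle$ of the required sign, completing the argument in this case.

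The main obstacle is the degenerate case $\mathbf{c}_j \in \mathrm{span}(\Omega_j)$, in which every perturbation within $W$ keeps the determinant fixed. To handle it, I would allow $\mathbf{\delta}$ a component along $\mathrm{span}(\Omega_j)$ chosen so that $\epsilon_i \langle \mathbf{\delta}, \mathbf{y}_i\rangle \leq 0$ for every $i \in S_j$ (strictly for at least one), which preserves feasibility while relaxing an active constraint. Expanding $\mathbf{c}_j = \sum_{i \in S_j} \alpha_i \mathbf{y}_i$ and using $\det\mathbf{U} = \langle \mathbf{c}_j, \mathbf{u}^{(j)}\rangle = \sum_i \alpha_i \epsilon_i \neq 0$ turns the ascent condition into a Farkas-type question about the signs of the $\alpha_i \epsilon_i$. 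To rule out the obstruction, I plan to invoke the rigid relations $\mathbf{u}^{(l)} \cdot \mathbf{v}^{(j)} = \delta_{lj}$ coming from $\mathbf{U}\mathbf{U}^{-1} = \mathbf{I}$, which pin $\mathbf{v}^{(j)}$ to the one-dimensional subspace $\bigcap_{l \neq j} (\mathbf{u}^{(l)})^\perp$; combined with the feasibility of samples outside $\Omega_j$ and the fact that $\dim \mathrm{span}(\Omega_j) \leq n-2$, this rigidity should either contradict the assumed sign pattern or yield a compatible $\mathbf{\delta}$. Should the direct argument prove too intricate, a fallback is to drop the single-row restriction and simultaneously perturb a second row; the bilinear correction $\det(\mathbf{\delta}_j, \mathbf{\delta}_l, \mathbf{u}^{(\neq)})$ in the multilinear expansion then supplies an additional degree of freedom with which to force strict increase.
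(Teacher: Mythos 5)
Your proposal takes essentially the same route as the paper's proof: restrict the perturbation to the deficient row, $\mathbf{\Delta}=\mathbf{e}_j\delta^\intercal$; use multilinearity to write $\det(\mathbf{U}+\mathbf{\Delta})=\det\mathbf{U}+\langle \mathbf{c}^{(j)},\delta\rangle$ with $\mathbf{c}^{(j)}$ the cofactor row (proportional to the $j$th column of $\mathbf{U}^{-1}$, hence nonzero); observe that fewer than $n-1$ active, linearly independent constraints leaves a feasible subspace $W=\mathrm{span}(\Omega_j)^\perp$ of dimension at least two, closed under negation, so any $\delta\in W$ with $\langle\mathbf{c}^{(j)},\delta\rangle\neq 0$ can be signed to increase $|\det|$ while preserving feasibility. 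Your ``generic case'' is exactly the published argument.

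Where you diverge is in explicitly isolating the degenerate case $\mathbf{c}^{(j)}\in\mathrm{span}(\Omega_j)$, in which every direction of $W$ is orthogonal to the cofactor row. You are right that this is the crux, but note that the paper does not dispose of it either: the published proof simply asserts that once the nullspace has dimension at least two ``there exists a $\delta^{(i)}$ \ldots\ that has $\langle c^{(i)},\delta^{(i)}\rangle\neq 0$,'' which is precisely the unproved claim that the cofactor row does not lie in the span of the active constraint vectors (a two-dimensional subspace can certainly be orthogonal to a fixed nonzero vector when $n\geq 3$, so dimension counting alone does not settle it). Your two proposed repairs --- the Farkas-type sign analysis built on $\mathbf{u}^{(l)}\cdot\mathbf{v}^{(j)}=\delta_{lj}$, and the fallback of perturbing a second row to exploit the bilinear cross term --- are plausible, but as written they are plans rather than arguments, so your proof is incomplete at exactly the point where the paper's is a bare assertion. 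To genuinely close the lemma you would need either to show that $\mathbf{c}^{(j)}\in\mathrm{span}(\Omega_j)$ cannot occur when $\dim\mathrm{span}(\Omega_j)\leq n-2$ (using the specific structure $\mathbf{Y}=\mathbf{AX}$ and the feasibility of the inactive samples), or to exhibit a non-infinitesimal feasible step that still increases $|\det\mathbf{U}|$ in that configuration; neither is done here, and neither is done in the paper.
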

\begin{proof}
Consider the $i$th row of $\mathbf{U}$, denoted $\mathbf{u}^{(i)}$, and let 
$\delta^{(i)}$ be the corresponding elements of $\Delta$.  The corresponding row of
the gradient matrix is given by:
\begin{equation}
\left( \nabla \mathbf{U} \right) ^{(i)} = \frac{1}{\det \mathbf{U}} \mathbf{c}^{(i)},
\end{equation}
where $\mathbf{c}^{(i)}$ is the $i$th row of the cofactor matrix of
$\mathbf{U}$.  Since $\mathbf{U}$ must be full rank, $\mathbf{c}^{(i)}$ must be
non-zero for all $i$.  Thus, for any $i$, the only way in which we could have
\begin{equation} 
\frac{1}{\det \mathbf{U}} \left\langle \mathbf{c}^{(i)}, \delta^{(i)}
\right\rangle_F = 0
\label{eq:dir_row}
\end{equation}
is if the only feasible values of $\delta^{(i)}$ (i.e. those that do not move
$\mathbf{U}$ outside the feasible region) are orthogonal to $\mathbf{c}^{(i)}$.

If fewer than $n-1$ linearly independent constraints are active on the $i$th
row, then there always is a subspace of at least dimension two from which we can select $\delta^{(i)}$.  
Precisely, suppose that the constraints given by $y_1, \ldots, y_{n-2}$ are active.  
The subspace spanned by these vectors must always have a null space of at least
dimension two; if $\delta^{(i)}$ is contained in this nullspace then $\mathbf{U + \Delta}$ will be
feasible. As long as this nullspace has dimension at least two, for all $i$, 
there exists a $\delta^{(i)}$ such that $\mathbf{U + \Delta}$ satisfies (\ref{eq:const}) 
and that has $\langle c^{(i)},\delta^{(i)} \rangle \neq 0$.  Thus, gradient descent will
always proceed as long as fewer than $n-1$ constraints are active on each row. 

%Similarly, if exactly $n-1$ linearly independent constraints are active on the $i$th row, 
%then we may only choose values of $\mathbf{\delta}^{(i)}$ that lie in a subspace of
%dimension one. 
%It is now possible that $\langle c^{(i)},\delta^{(i)} \rangle = 0$ for all choices of $\mathbf{c}^{(i)}$.
%However, notice that the values of $\mathbf{c}^{(i)}$ do not depend on $\mathbf{u}^{(i)}$, 
%and that changing $\mathbf{u}^{(i)}$ at most may only scale the left-hand side
%of (\ref{eq:dir_row}) by a constant. In this case, the left-hand side of 
%(\ref{eq:dir_row}) will remain zero, and the objective function will remain constant, 
%along the line given by the $n-1$ linearly independent constraints.
\end{proof}

In other words, by Lemma \ref{lm:n_minus_one}, if gradient descent terminates and we are not at a vertex, there
must be at least one row with exactly $n-1$ active, linearly independent
constraints.  In this case, we may move along the interval $\mathbf{I}$ guaranteed by Corollary \ref{cor:nonstrict} until we reach a lower dimensional face.  
This lower dimensional face will either be a vertex, in which case we have 
reached a strict optima and the algorithm will terminate, or there will exist 
a positive gradient and we can resume gradient descent. This completes 
Theorem \ref{thm:vertex}.  

%%%%%%%%%%%%%%%%%%%%%%%%%%%%%%%%%%%%%%%%%%%%%%%%%%%%%%%%%%%%%%%%%%%%%%%%%%%%%%
\section{Proof of Theorem \ref{thm:n3iff}}
\label{apx:n3}
\begin{customthm}{\ref{thm:n3iff}}
When $n=3$, Algorithm \ref{alg:fit_corner} is correct with probability 1 if and only if 
$k \geq 4$ and there exists a $3 \times 4$ matrix $\mathbf{V}$, such that $\text{cols}(\mathbf{V}) \subseteq \text{cols}(\mathbf{X})$, 
$\text{span}(\mathbf{v}_1,\ldots,\mathbf{v}_4) = \mathbb{R}^3$, and all vectors in $\mathbf{V}$ are pair-wise linearly independent.  
\end{customthm}
We know that if $\mathbf{X}$ has the maximal subset property, then Algorithm \ref{alg:fit_corner} will always terminate at a global maximum of (\ref{eq:obj})--(\ref{eq:const}) and that the set of global optima contains all solutions in the form $\mathbf{TX}$ for all $\mathbf{T} \in \mathcal{T}$.  However, for $n=3$, the maximal subset property alone does not ensure that all global optima will be solutions to the blind decoding problem.  Spurious optima must have the form $\mathbf{QX} \in [-1,+1]^{n \times k}$, where $\det \mathbf{Q} = \pm 1$ and $\mathbf{Q} \notin \mathcal{T}$.  Algorithm \ref{alg:fit_corner} will only be correct with probability 1 if there are no spurious optima.

We now show that, for $n=3$, if $\mathbf{X}$ has four distinct columns (\emph{distinct} meaning pair-wise linearly independent), $\mathbf{QX} \in [-1,+1]^{n \times k}$ and $\det \mathbf{Q} = \pm 1$ implies that $\mathbf{Q} \in \mathcal{T}$.  This further implies that all global optima are solutions to the blind decoding problem.
Consider the following choice of $\mathbf{X}$: 
\begin{equation}
\mathbf{X} = 
\begin{bmatrix}
1 & 1 & -1 & -1 \\
1 & -1 & 1 & -1 \\
1 & 1 & 1 & 1
\end{bmatrix}.
\label{eq:square}
\end{equation}
The following lemma shows that this choice of $\mathbf{X}$ further restricts $\mathbf{Q}$ to be orthogonal.
\begin{lemma}
\label{lem:orthogonal}
Suppose $\mathbf{Q}$ has $\det \mathbf{Q} = \pm 1$, and, for $\mathbf{X}$ given by (\ref{eq:square}), if $\mathbf{QX} \in [-1,+1]^{3 \times 3}$, then $\mathbf{Q} \in O(3)$.
\end{lemma}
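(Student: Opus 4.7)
The plan is to reduce the $\ell_\infty$ constraint on $\mathbf{QX}$ to an $\ell_1$ constraint on each row of $\mathbf{Q}$, and then apply Hadamard's inequality. Let $\mathbf{q}^{(j)} = (a,b,c)$ denote the $j$th row of $\mathbf{Q}$. The four columns of $\mathbf{X}$ are exactly the four vectors of the form $(\pm 1,\pm 1, 1)^\intercal$, so the four constraints $|\langle \mathbf{q}^{(j)}, \mathbf{x}_i\rangle| \leq 1$ take the form
\begin{equation*}
|a+b+c|\leq 1,\quad |a-b+c|\leq 1,\quad |-a+b+c|\leq 1,\quad |-a-b+c|\leq 1.
\end{equation*}
The first two say $|(a+b)+c|\leq 1$ and $|(a+b)-c|\leq 1$, which together are equivalent to $|a+b|+|c|\leq 1$; similarly the last two give $|a-b|+|c|\leq 1$. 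Combining and using the identity $\max(|a+b|,|a-b|) = |a|+|b|$, I obtain the single constraint $|a|+|b|+|c|\leq 1$, i.e.\ $\|\mathbf{q}^{(j)}\|_1 \leq 1$ for every row.

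Next, I would invoke Hadamard's inequality, $|\det \mathbf{Q}| \leq \prod_{j=1}^3 \|\mathbf{q}^{(j)}\|_2$. Since $\|\mathbf{q}^{(j)}\|_2 \leq \|\mathbf{q}^{(j)}\|_1 \leq 1$, the right-hand side is at most $1$. The hypothesis $|\det \mathbf{Q}| = 1$ then forces equality everywhere: each row has $\|\mathbf{q}^{(j)}\|_2 = 1$, and Hadamard's inequality is tight only when the rows are mutually orthogonal. Therefore $\mathbf{Q}\mathbf{Q}^\intercal = \mathbf{I}$, that is, $\mathbf{Q}\in O(3)$.

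There is no real obstacle here; the only subtle step is recognizing that the $\ell_\infty$ bound on $\mathbf{QX}$ collapses to a single $\ell_1$ bound per row once all four sign patterns of $(\pm 1,\pm 1,1)$ are present as columns of $\mathbf{X}$. I note in passing that the same chain of inequalities actually gives $\|\mathbf{q}^{(j)}\|_1 = \|\mathbf{q}^{(j)}\|_2 = 1$, which forces each row to be a signed standard basis vector, so in fact $\mathbf{Q}\in \mathcal{T}$; the weaker conclusion $\mathbf{Q}\in O(3)$ stated in the lemma suffices for its use in the subsequent step of Theorem \ref{thm:n3iff}, where additional distinct columns of $\mathbf{X}$ are brought in to finish the argument.
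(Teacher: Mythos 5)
Your proof is correct, and it takes a genuinely different route from the paper's. The paper first passes to the vertex condition $\mathbf{QX} \in \{-1,+1\}^{3\times 4}$ (invoking Theorem \ref{thm:vertex}), deduces that $\mathbf{Q}$ preserves the $\ell_2$-norms of the columns of $\mathbf{X}$, and then argues via the singular value decomposition that a volume-preserving image of the sphere of radius $\sqrt{3}$ containing the four points $(\pm1,\pm1,1)$ must itself be that sphere (Claim \ref{clm:sphere}), forcing $\mathbf{Q}\in O(3)$. You instead work directly with the relaxed constraint $\mathbf{QX}\in[-1,+1]^{3\times 4}$: pairing the four sign patterns collapses the constraints on each row to $\|\mathbf{q}^{(j)}\|_1\le 1$, and the chain $1=|\det\mathbf{Q}|\le\prod_j\|\mathbf{q}^{(j)}\|_2\le\prod_j\|\mathbf{q}^{(j)}\|_1\le 1$ forces equality in Hadamard's inequality, hence orthogonal rows of unit $\ell_2$-norm. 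Your argument is more elementary (no SVD, no ellipsoid volume computation, no appeal to Theorem \ref{thm:vertex} or to the vertex structure of the feasible region), and, as you observe, it actually yields the stronger conclusion $\mathbf{Q}\in\mathcal{T}$: since $\|\mathbf{q}^{(j)}\|_1=\|\mathbf{q}^{(j)}\|_2=1$ forces each row to be a signed standard basis vector and invertibility forces these to be distinct, $\mathbf{Q}$ is a signed permutation matrix. This would let you bypass the subsequent proposition in Appendix \ref{apx:n3} that upgrades $O(3)$ to $\mathcal{T}$ via the symmetries of the cube. What the paper's geometric argument buys in exchange is that it exhibits the general principle---a determinant-$\pm 1$ map that does not increase enough prescribed norms must be a rotation---which reappears as Claim \ref{clm:ortho} in the $n=4$ analysis, whereas your norm-collapse step is special to this particular $\mathbf{X}$ (though so is the lemma).
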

\begin{proof}
We know that, by Theorem \ref{thm:vertex}, Algorithm \ref{alg:fit_corner} must terminate 
at a vertex.  Thus, we must have $\mathbf{QX} \in \{-1,+1\}^{n \times k}$, which further implies that, for all $i$, $\|\mathbf{Qx}_i\|_2 = \|\mathbf{x}_i\|_2$.  We now consider the set of linear operators with determinant $\pm 1$ whose action preserves the norms of each vector $\mathbf{x}_i$.

Define the operator $\tilde{\mathbf{Q}}=\mathbf{\Sigma V}^\intercal$, where $\mathbf{\Sigma}$ denotes the diagonal matrix containing the singular values of $\mathbf{Q}$ and $\mathbf{V}$ denotes the right singular vectors of $\mathbf{Q}$.  Since $\mathbf{Q} = \mathbf{U \tilde{Q}}$, for some $\mathbf{U} \in O(3)$, there must exist a $\tilde{\mathbf{Q}}$ such that $\tilde{\mathbf{Q}}\mathbf{x}_i = \pm \mathbf{x}_i$ for all $i$; otherwise, $\|\mathbf{Qx}_i\|_2 \neq \|\mathbf{x}_i\|_2$ for some $i$.

We can consider the action of any $\tilde{\mathbf{Q}}$ on the surface of a sphere of radius $\sqrt{3}$.  This sphere, $\mathcal{S}$, contains the vectors that comprise the columns of $\mathbf{X}$.
Under the action of any linear operator with determinant $\pm 1$, $\mathcal{S}$ will be mapped to an ellipsoid, $\mathcal{E}$, such that $\text{vol}(\mathcal{E}) = \text{vol}(\mathcal{S})$. Further, we know that $\mathcal{E}$, given by $\mathcal{\tilde{Q}}$, must contain the points $\mathbf{x}_i$ for all $i$.  Lemma \ref{lem:orthogonal} is now completed by Claim \ref{clm:sphere}, which shows that this is only possible if $\mathcal{E} = \mathcal{S}$, implying that $\mathbf{Q} \in O(3)$. 
\end{proof}
\begin{claim}
\label{clm:sphere}
The only ellipsoid that is centered on the origin, has volume $4\pi \sqrt{3}$, and contains the points given by $\text{cols}(\mathbf{X})$ is a sphere.
\end{claim}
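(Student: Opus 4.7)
The plan is to parametrize an ellipsoid centered at the origin as the sublevel set $\mathcal{E}_{\mathbf{M}} = \{\mathbf{x} \in \mathbb{R}^3 : \mathbf{x}^\intercal \mathbf{M} \mathbf{x} \leq 1\}$ for some symmetric positive-definite matrix $\mathbf{M}$, translate the volume and containment hypotheses into algebraic conditions on $\mathbf{M}$, and then use AM-GM on the eigenvalues of $\mathbf{M}$ to pin $\mathbf{M}$ down uniquely. The sphere of radius $\sqrt{3}$ corresponds to $\mathbf{M} = \tfrac{1}{3}\mathbf{I}$, so showing the sphere is the only admissible ellipsoid amounts to showing that this is the only feasible $\mathbf{M}$.

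First I would translate the volume condition. Since $\mathrm{vol}(\mathcal{E}_{\mathbf{M}}) = \tfrac{4\pi}{3}(\det \mathbf{M})^{-1/2}$ and the prescribed volume is $4\pi\sqrt{3}$, the volume hypothesis becomes the single equation $\det \mathbf{M} = 1/27$. Next I would write out the four containment inequalities $\mathbf{x}_i^\intercal \mathbf{M} \mathbf{x}_i \leq 1$ for the columns $(\pm 1, \pm 1, 1)$ of $\mathbf{X}$. The crucial observation is that when these four inequalities are summed, the off-diagonal entries of $\mathbf{M}$ each pick up the coefficient $\sum_{\epsilon_1,\epsilon_2 \in \{\pm 1\}} \epsilon_1 \epsilon_2 = 0$ (and similarly for the $xz$ and $yz$ cross terms, whose coefficient sums are $\sum \epsilon_i \cdot 1 = 0$). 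Thus the sum collapses to $4\,\mathrm{tr}(\mathbf{M}) \leq 4$, i.e.\ $\mathrm{tr}(\mathbf{M}) \leq 1$.

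Now I would invoke AM-GM on the three (positive) eigenvalues $\lambda_1, \lambda_2, \lambda_3$ of $\mathbf{M}$: combining $\mathrm{tr}(\mathbf{M}) = \lambda_1 + \lambda_2 + \lambda_3 \geq 3(\lambda_1\lambda_2\lambda_3)^{1/3} = 3(\det\mathbf{M})^{1/3} = 1$ with the upper bound $\mathrm{tr}(\mathbf{M}) \leq 1$ forces equality in AM-GM. This in turn forces $\lambda_1 = \lambda_2 = \lambda_3 = 1/3$, so $\mathbf{M} = \tfrac{1}{3}\mathbf{I}$ and $\mathcal{E}_{\mathbf{M}}$ is exactly the sphere of radius $\sqrt{3}$, proving the claim.

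The proof is essentially mechanical once one recognizes the right symmetry: the four columns of $\mathbf{X}$ form an orbit of the sign-flip group acting on the first two coordinates, which is precisely what kills the off-diagonal contributions in the summed constraint. The main (mild) obstacle is simply confirming that the symmetry is rich enough to yield a useful trace bound from only four points; thereafter AM-GM combined with the prescribed determinant does all the work, and no further geometric argument is needed.
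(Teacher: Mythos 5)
Your proof is correct. It follows the same overall strategy as the paper's --- translate the volume condition into a determinant constraint on the quadratic form and combine it with a trace constraint coming from the four points $(\pm 1,\pm 1,1)$ --- but the way you extract the trace constraint is genuinely different and slightly stronger. The paper substitutes the four points into the ellipsoid equation \emph{with equality} (i.e.\ it assumes the points lie on the surface), solves the resulting linear system to conclude the matrix is diagonal with $a+b+c=1$, and then argues $a=b=c=\tfrac13$ from the volume. You instead only assume the points lie in the closed solid ellipsoid, sum the four inequalities so that the sign-flip symmetry in the first two coordinates annihilates every off-diagonal entry, and obtain $\mathrm{tr}(\mathbf{M})\le 1$; AM--GM on the eigenvalues together with $\det\mathbf{M}=1/27$ then forces $\mathbf{M}=\tfrac13\mathbf{I}$ without ever needing to show $\mathbf{M}$ is diagonal. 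This buys you two things: the claim as literally stated (``contains the points'') is proved under the weaker containment hypothesis, and the argument degrades gracefully --- it would still work if one only knew the summed constraint rather than each individual one. The paper's version, in exchange, pins down the full structure of the matrix (diagonality) directly from the point conditions, which is the form actually used in Lemma 6 where the points are known to lie on the ellipsoid surface. Either way the conclusion and the final AM--GM step coincide.
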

\begin{proof}
Consider an arbitrary ellipsoid, $\mathcal{E}$, centered about the origin.  For some $\mathbf{A} \succeq 0$, this can be described by the following equation
\begin{align}
\mathbf{v}^\intercal \mathbf{A} \mathbf{v} &= 1, \nonumber \\
\begin{bmatrix}
x & y & z
\end{bmatrix}
\begin{bmatrix}
a & \frac{d}{2} & \frac{f}{2} \\
\frac{d}{2} & b & \frac{e}{2} \\
\frac{f}{2} & \frac{e}{2} & c  
\end{bmatrix}
\begin{bmatrix}
x \\
y \\
z
\end{bmatrix} &= 1. \label{eq:ellipse}
\end{align}
We require the ellipsoid to contain the points $(\pm 1, \pm 1, 1)$.  By substituting these points into (\ref{eq:ellipse}), it is seen that we must have $d = e = f =
0$ and $a+b+c=1$. Thus, $\mathbf{A}$ must be diagonal.

The eigenvalues of $\mathbf{A}$ are the inverse squares of the length of the
semi-axes of the ellipsoid.  This implies that the volume of the ellipsoid is given
by $\text{vol}(\mathcal{E}) = \frac{4}{3} \pi / \sqrt{abc}$.  The only solution
that gives $\text{vol}(\mathcal{E}) = 4 \pi \sqrt{3}$ with $a+b+c=1$ is $a = b = c =
\frac{1}{3}$, implying that the required ellipsoid in fact a sphere.
\end{proof}
%\end{proof}
Having established that (\ref{eq:square}) implies that $\mathbf{Q} \in O(3)$, we can further show that the only feasible elements of $O(3)$ are in fact the ATMs.
\begin{prop}
Suppose $\mathbf{Q}$ has $\mathbf{Q} \in O(3)$, and $\mathbf{QX} \in [-1,+1]^{3 \times 4}$, then $\mathbf{Q} \in \mathcal{T}$.
\end{prop}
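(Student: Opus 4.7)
The plan is to leverage the orthogonality of $\mathbf{Q}$ together with the geometric structure of the four columns of $\mathbf{X}$ (which form a single face of the unit cube) to force $\mathbf{Q}$ into the hyperoctahedral symmetry group of the cube, which is exactly $\mathcal{T}$.

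First I would note that since $\mathbf{Q}\in O(3)$ we have $\|\mathbf{Q}\mathbf{x}_i\|_2 = \|\mathbf{x}_i\|_2 = \sqrt{3}$ for every column $\mathbf{x}_i$ of $\mathbf{X}$; the only points of $[-1,+1]^3$ having $\ell_2$ norm $\sqrt{3}$ are the eight cube vertices $\{-1,+1\}^3$, so in fact $\mathbf{QX}\in\{-1,+1\}^{3\times 4}$. Next I would observe that the four columns of $\mathbf{X}$ are exactly the vertices of the top face of the cube, with centroid $\mathbf{e}_3$. Under $\mathbf{Q}$ they are sent to a four-element subset $S\subseteq\{-1,+1\}^3$ whose centroid is the unit vector $\mathbf{Q}\mathbf{e}_3$.

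The central step is to classify the four-element subsets of $\{-1,+1\}^3$ whose centroid has $\ell_2$ norm equal to $1$. Each coordinate of such a centroid lies in the discrete set $\{-1,-\tfrac{1}{2},0,\tfrac{1}{2},1\}$, and requiring the squared coordinates to sum to $1$ forces exactly one coordinate to be $\pm 1$ and the other two to be $0$. Fixing, say, the first coordinate of the centroid to equal $+1$ then forces every element of the subset to have first coordinate $+1$, and the four cube vertices satisfying this condition are exactly one face. Hence $S$ must be one of the six faces of the cube and $\mathbf{Q}\mathbf{e}_3 = \pm\mathbf{e}_j$ for some $j\in\{1,2,3\}$.

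Finally, orthogonality of $\mathbf{Q}$ and $\mathbf{Q}\mathbf{e}_3=\pm\mathbf{e}_j$ imply that $\mathbf{Q}$ carries $\text{span}(\mathbf{e}_1,\mathbf{e}_2)$ isometrically onto $\text{span}(\mathbf{e}_k,\mathbf{e}_l)$ with $\{j,k,l\}=\{1,2,3\}$. Writing the columns of $\mathbf{X}$ as $\mathbf{e}_3+\mathbf{s}$ with $\mathbf{s}\in\{\pm\mathbf{e}_1\pm\mathbf{e}_2\}$, their images $\pm\mathbf{e}_j+\mathbf{Q}\mathbf{s}$ are the four vertices of $S$, so $\mathbf{Q}$ sends $\{\pm\mathbf{e}_1\pm\mathbf{e}_2\}$ bijectively to $\{\pm\mathbf{e}_k\pm\mathbf{e}_l\}$. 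A short averaging of adjacent pairs (e.g. $2\mathbf{Q}\mathbf{e}_1 = \mathbf{Q}(\mathbf{e}_1+\mathbf{e}_2)+\mathbf{Q}(\mathbf{e}_1-\mathbf{e}_2)$) combined with $\|\mathbf{Q}\mathbf{e}_i\|_2=1$ then pins $\mathbf{Q}\mathbf{e}_1,\mathbf{Q}\mathbf{e}_2\in\{\pm\mathbf{e}_k,\pm\mathbf{e}_l\}$. Together with $\mathbf{Q}\mathbf{e}_3=\pm\mathbf{e}_j$ this shows that $\mathbf{Q}$ is a signed permutation matrix, i.e., $\mathbf{Q}\in\mathcal{T}$. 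The main obstacle is the centroid-classification step; it is elementary once one notices that each coordinate of a 4-subset centroid takes values in a five-point discrete set, which avoids a brute-force enumeration of the $\binom{8}{4}=70$ subsets, after which the remainder of the argument is a bookkeeping exercise on cube symmetries.
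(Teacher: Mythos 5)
Your proof is correct, and it follows the same geometric outline as the paper's (the image of a cube face under $\mathbf{Q}$ must again be a cube face, which forces $\mathbf{Q}$ into the 48-element hyperoctahedral group $\mathcal{T}$), but you actually prove the two steps that the paper only asserts. The paper's proof says that $\text{cols}(\mathbf{QX})$ ``must also form the face of a unit cube'' because $\mathbf{Q}$ ``preserves norms and planes,'' and then that ``this fact restricts $\mathbf{Q}$ to the symmetries of the cube''; neither implication is justified there. Your centroid argument supplies the first: the centroid of the image set is the unit vector $\mathbf{Q}\mathbf{e}_3$, and the discrete set of achievable centroid coordinates forces one coordinate to be $\pm 1$ and the rest zero, hence the image is a face. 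Your averaging step $2\mathbf{Q}\mathbf{e}_1 = \mathbf{Q}(\mathbf{e}_1+\mathbf{e}_2)+\mathbf{Q}(\mathbf{e}_1-\mathbf{e}_2)$, combined with $\|\mathbf{Q}\mathbf{e}_1\|_2=1$, supplies the second, pinning each $\mathbf{Q}\mathbf{e}_i$ to a signed standard basis vector so that $\mathbf{Q}$ is a signed permutation matrix. This is a genuinely more complete argument than the one in the paper, at the cost of a somewhat longer case analysis; the only prerequisite you use from earlier in the appendix is that $\mathbf{Q}$ orthogonal forces $\mathbf{QX}\in\{-1,+1\}^{3\times 4}$, which you rederive directly from the norm condition rather than citing Theorem 2.
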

\begin{proof}
Notice that the vectors $\text{cols}(\mathbf{X})$ form a face of the unit cube.  
Since $\mathbf{QX} \in [-1,+1]^{3 \times 4}$ and $\mathbf{Q} \in O(3)$, then $\text{cols}(\mathbf{QX})$ must also form the face of a unit cube.  This is because $\mathbf{Q}$ is orthogonal and must preserve norms and planes.  This fact restricts $\mathbf{Q}$ to the symmetries of the cube.  There are 48 symmetries of the cube, which correspond to the set of 48 ATMs.
\end{proof}
%The vectors $\text{cols}(\mathbf{X})$ lie on vertices of the unit cube, denoted $\mathcal{V}$.
%We must have $\mathbf{Qx}_i \in \mathcal{V}$ for all $i$.  This is true because
%$\mathbf{Q} \in O(3)$ implies $\|\mathbf{Qv}\|_2 = \|\mathbf{v}\|_2$, for all $\mathbf{v} \in \mathbb{R}^3$, and $\mathcal{V}$ are the only points in the interval $[-1,+1]^3$ with the same norm as each $\mathbf{x}_i$.
%
%We also know that the vectors $\text{cols}({\mathbf{QX}})$ must lie on the same face of the unit cube.  This is because
%the vectors $\text{cols}(\mathbf{X})$ lie in the same plane, which is also a face of the unit cube.  Since $\mathbf{Q}$ is linear, the vectors $\text{cols}(\mathbf{QX})$ must also lie on a single plane. 
%Since each $\mathbf{Qx}_i$ must be a vertex of the unit cube, the vectors $\text{cols}(\mathbf{QX})$ must also be a face of the unit cube.

%It is now clear that we can decompose all possible $\mathbf{Q}$ into the product of two operators.  
%The first acts transitively on the columns of $\mathbf{X}$ and corresponds to symmetries of the square; there are 8 such operators.  
%The second maps the face of the unit cube containing $\text{cols}(\mathbf{X})$ on to any face of the unit cube; there are 6 such operators.  
%This set of 48 operators exactly corresponds to $\mathcal{T}$.
%\end{proof}
Finally notice that for all $\mathbf{D} = \text{diag}(\pm 1, \ldots, \pm 1)$, $\mathbf{QX} \in [-1,+1]^{n \times k}$ implies that $\mathbf{QXD} \in [-1,+1]^{n \times k}$.  Similarly, for all permutation matrices $\mathbf{P}$, $\mathbf{QXP}$ is feasible if and only if $\mathbf{QX}$ is feasible.  For $n=3$, all possible $\pm 1$-valued matrices that contain four distinct columns can be expressed as $\mathbf{XPD}$.  This implies that if $\mathbf{QX} \in [-1, +1]^{3 \times 4}$, for any possible $\mathbf{X}$ with four distinct columns, then $\mathbf{Q} \in \mathcal{T}$. Hence, any choice of $\mathbf{X}$ with four distinct columns is sufficient to ensure that there are no spurious optima.

We now turn our attention to showing the converse: that requiring $\mathbf{X}$ to have four distinct columns (that is, pairwise linearly independent) is in fact necessary for Algorithm \ref{alg:fit_corner} to be correct with probability 1.  First, if $n=k=3$, then for any choice of $\mathbf{X}$ such that $\mathbf{X}$ has the maximal subset property, spurious optima will exist.  For $n=3$, there is a small collection of three vectors, up to the ATMs, that have the maximal subset property, and so one may check that this is always true.  Therefore, when $n=k=3$, there will always be a matrix $\mathbf{Q}$ with unit determinant such that $\mathbf{QX} \in [-1,+1]^{3 \times 3}$ and $\mathbf{Q} \notin \mathcal{T}$.  

If $\mathbf{X}$ does not have four distinct columns then this will always be the case.
Consider the case where $k > 3$ and $\mathbf{X}$ has the maximal subset property but no four columns of $\mathbf{X}$ are distinct. 
Let the matrix $\mathbf{V}$ be formed from any subset of three columns of $\mathbf{X}$ such that $\mathbf{V}$ has the maximal subset property.  
Then we must have that for all $i$, there exists a $j$ 
such that $\mathbf{x}_i = \pm \mathbf{v}_j$. 
%otherwise, there exists a distinct fourth column in $\mathbf{X}$.  
For any $\mathbf{Qv}_j \in [-1, +1]^n$, we also must have $-\mathbf{Qv}_j = \mathbf{Qx}_i \in [-1,+1]^n$.  Thus, whenever $\mathbf{X}$ does not contain any columns that are distinct from the columns of $\mathbf{V}$, then $\mathbf{X}$ will also have the same set of optima as $\mathbf{V}$.
This completes Theorem \ref{thm:n3iff}.
%%%%%%%%%%%%%%%%%%%%%%%%%%%%%%%%%%%%%%%%%%%%%%%%%%%%%%%%%%%%%%%%%%%%%%%%%%%%%%
\section{Proof of Theorem \ref{thm:n4iff}}
\label{apx:theorem_three}
In this appendix we prove Theorem \ref{thm:n4iff}, which gives necessary and sufficient conditions so that Algorithms \ref{alg:fit} and \ref{alg:fit_corner} return correct solutions to the blind decoding problem when $n=4$.  Notice that because a Hadamard matrix exists at $n=4$, we know by Lemma \ref{cor:had_iff} the only optima are strict and are vertices of the feasible region.  Thus Algorithm \ref{alg:fit_corner} is not needed in this case.  
Before considering the specific case of $n=4$, we prove the following more
general statement for values of $n$ such that a Hadamard matrix exists. This
result will be used in the proof of Theorem \ref{thm:n4iff}. 
When a Hadamard matrix exists, we can further characterize the solutions to
(\ref{eq:obj})--(\ref{eq:const}) in the noiseless case as follows.
\begin{lemma}
\label{lm:ortho}
If $n$ is such that Hadamard matrix exists, and $\mathbf{X}$ has the maximal
subset property, then the only global optima to (\ref{eq:obj})--(\ref{eq:const})
on input $\mathbf{Y}=\mathbf{AX}$ are of the form $\mathbf{U}=\mathbf{QA}^{-1}$, 
where $\mathbf{Q} \in O(n)$.
\end{lemma}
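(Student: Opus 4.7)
The plan is to reduce to a statement about $\tilde{\mathbf{U}} := \mathbf{U}\mathbf{A}$: the lemma is equivalent to showing that at every global optimum $\tilde{\mathbf{U}}$ is orthogonal, because then $\mathbf{U} = \tilde{\mathbf{U}}\mathbf{A}^{-1}$ has the claimed form with $\mathbf{Q} := \tilde{\mathbf{U}} \in O(n)$. Under this change of variables the constraint (\ref{eq:const}) becomes $\|\tilde{\mathbf{U}}\mathbf{x}_i\|_\infty \leq 1$ for every column $\mathbf{x}_i$ of $\mathbf{X}$, while the objective changes only by the additive constant $-\log|\det \mathbf{A}|$; so it suffices to characterize the maximizers of $|\det \tilde{\mathbf{U}}|$ under these constraints.

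First I would show that the MSP submatrix $\mathbf{V}$ of $\mathbf{X}$ is itself a Hadamard matrix. By Hadamard's inequality, any $\mathbf{W} \in [-1,+1]^{n \times n}$ satisfies $|\det \mathbf{W}| \leq \prod_i \|\mathbf{w}_i\|_2 \leq n^{n/2}$, with equality only if the columns are mutually orthogonal and each has Euclidean norm $\sqrt{n}$; with entries in $[-1,+1]$, the norm condition forces every entry to be $\pm 1$. Since a Hadamard matrix exists in dimension $n$ the bound $n^{n/2}$ is actually attained, and by the definition of the MSP so is $|\det \mathbf{V}|$. Therefore $\mathbf{V}$ is Hadamard, and in particular $\mathbf{V}\mathbf{V}^\intercal = n\mathbf{I}$.

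The next step examines $\tilde{\mathbf{U}}\mathbf{V}$ at a global optimum. By Claim \ref{lm:detconst} every feasible $\tilde{\mathbf{U}}$ satisfies $|\det \tilde{\mathbf{U}}| \leq 1$, and Lemma \ref{lm:max_optima} exhibits optima where equality holds, so at any global optimum $|\det \tilde{\mathbf{U}}\mathbf{V}| = n^{n/2}$. Because the columns of $\mathbf{V}$ are among those of $\mathbf{X}$, feasibility gives $\tilde{\mathbf{U}}\mathbf{V} \in [-1,+1]^{n \times n}$; invoking the equality case of Hadamard's inequality a second time forces $\tilde{\mathbf{U}}\mathbf{V}$ to be a Hadamard matrix. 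Hence $(\tilde{\mathbf{U}}\mathbf{V})(\tilde{\mathbf{U}}\mathbf{V})^\intercal = n\mathbf{I}$, and substituting $\mathbf{V}\mathbf{V}^\intercal = n\mathbf{I}$ gives $\tilde{\mathbf{U}}\tilde{\mathbf{U}}^\intercal = \mathbf{I}$, i.e.\ $\tilde{\mathbf{U}} \in O(n)$.

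The main obstacle is conceptual rather than computational: one has to notice that the MSP together with the existence of a Hadamard matrix in dimension $n$ saturates both inequalities in $|\det \mathbf{W}| \leq \prod_i \|\mathbf{w}_i\|_2 \leq n^{n/2}$, which is exactly what pins down $\mathbf{V}$ and $\tilde{\mathbf{U}}\mathbf{V}$ as Hadamard. Once that observation is in hand, everything else is a one-line Gram calculation. It is also worth noting what is \emph{not} being claimed: the lemma places global optima in the orthogonal orbit of $\mathbf{A}^{-1}$, which is strictly larger than the discrete ATM orbit, so narrowing $\mathbf{Q}$ further to an ATM requires additional input (the role played by the fifth cross-class vector in Theorem \ref{thm:n4iff}) and is not addressed by this argument.
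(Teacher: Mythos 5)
Your proof is correct, and its skeleton matches the paper's: establish $|\det(\mathbf{UA})|=1$ at any global optimum via Claim \ref{lm:detconst} and Lemma \ref{lm:max_optima}, identify the MSP submatrix $\mathbf{V}$ and its image $\mathbf{UAV}$ as Hadamard matrices, and deduce orthogonality of $\mathbf{UA}$. The one place you diverge is the final deduction. The paper routes it through a standalone claim (Claim \ref{clm:ortho}): if $\mathbf{M}$ has $|\det\mathbf{M}|=1$ and is norm-non-expanding on an orthogonal basis, then $\operatorname{tr}(\mathbf{M}^\intercal\mathbf{M})=\sum_i\lambda_i^2\le n$ while $\prod_i\lambda_i^2=1$, so AM--GM forces all singular values to equal $1$. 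You instead use the full strength of the fact that \emph{both} $\mathbf{V}$ and $\tilde{\mathbf{U}}\mathbf{V}$ are Hadamard and compute $n\,\tilde{\mathbf{U}}\tilde{\mathbf{U}}^\intercal=(\tilde{\mathbf{U}}\mathbf{V})(\tilde{\mathbf{U}}\mathbf{V})^\intercal=n\mathbf{I}$ directly. Your version is a one-line Gram identity and avoids the singular-value argument entirely; the paper's version buys a slightly more general tool (norm non-expansion alone suffices, orthogonality of the image is not needed), though that extra generality is not exploited elsewhere. You are also more explicit than the paper about \emph{why} $\mathbf{V}$ and $\tilde{\mathbf{U}}\mathbf{V}$ must be Hadamard --- saturating both inequalities in $|\det\mathbf{W}|\le\prod_i\|\mathbf{w}_i\|_2\le n^{n/2}$ --- where the paper simply asserts it; and your closing remark that $O(n)$ is strictly larger than the ATM orbit agrees with the paper's own caveat following the lemma. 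No gaps.
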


The following claim is helpful in proving this lemma:
\begin{claim}
\label{clm:ortho}
For $\mathbf{X} \in \mathbb{R}^{n \times n}$ is an orthogonal matrix and some
$\mathbf{M} \in \mathbb{R}^{n \times n}$ with $| \det \mathbf{M} | = 1$, if
$\|\mathbf{Mx}_i\|_2 \leq \|\mathbf{x}_i\|_2$ for all $i$, then $\mathbf{M}$ must be 
orthogonal.
\end{claim}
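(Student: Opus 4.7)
The plan is to use Hadamard's inequality in both directions and squeeze. Since $\mathbf{X}$ is orthogonal, its columns are mutually orthogonal, so Hadamard's inequality is tight: $|\det \mathbf{X}| = \prod_{i=1}^{n} \|\mathbf{x}_i\|_2$. On the other hand, Hadamard's inequality applied to $\mathbf{M}\mathbf{X}$ gives $|\det(\mathbf{M}\mathbf{X})| \leq \prod_i \|\mathbf{M}\mathbf{x}_i\|_2$.

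The first step is to use the determinant assumption to link these two expressions. Because $|\det \mathbf{M}| = 1$, we have $|\det(\mathbf{M}\mathbf{X})| = |\det \mathbf{X}|$, so chaining everything together yields
\begin{equation*}
\prod_{i=1}^n \|\mathbf{x}_i\|_2 \;=\; |\det \mathbf{X}| \;=\; |\det(\mathbf{M}\mathbf{X})| \;\leq\; \prod_{i=1}^n \|\mathbf{M}\mathbf{x}_i\|_2 \;\leq\; \prod_{i=1}^n \|\mathbf{x}_i\|_2,
\end{equation*}
where the last inequality is the hypothesis $\|\mathbf{M}\mathbf{x}_i\|_2 \leq \|\mathbf{x}_i\|_2$ applied term by term. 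Every inequality in the chain must therefore be an equality.

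The next step is to extract the structural consequences. Equality in the Hadamard bound for $\mathbf{M}\mathbf{X}$ forces the columns $\mathbf{M}\mathbf{x}_1, \ldots, \mathbf{M}\mathbf{x}_n$ to be pairwise orthogonal, and equality in the terminal inequality forces $\|\mathbf{M}\mathbf{x}_i\|_2 = \|\mathbf{x}_i\|_2$ for every $i$. Consequently, $(\mathbf{M}\mathbf{X})^\intercal (\mathbf{M}\mathbf{X}) = \operatorname{diag}(\|\mathbf{x}_i\|_2^2) = \mathbf{X}^\intercal \mathbf{X}$, and since $\mathbf{X}$ is invertible we can cancel to obtain $\mathbf{M}^\intercal \mathbf{M} = \mathbf{I}$, i.e.\ $\mathbf{M} \in O(n)$.

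There is no real obstacle here beyond recognizing that both the equality case and the inequality case of Hadamard's determinant bound are needed simultaneously; the determinant hypothesis plays the role of forcing the product of singular values to equal 1, while the per-column norm hypothesis forces each singular value to be at most 1, and together these squeeze $\mathbf{M}$ into the orthogonal group. The only mild care needed is in the paper's convention for ``orthogonal matrix'' applied to $\mathbf{X}$: whether columns are unit norm or merely pairwise orthogonal (as is the case for a Hadamard matrix scaled by $\sqrt{n}$), the argument goes through identically because Hadamard's inequality is sharp whenever the columns are pairwise orthogonal, regardless of their common norm.
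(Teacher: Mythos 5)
Your proof is correct, and it takes a genuinely different route from the paper's. You work entirely with determinants and column norms: Hadamard's inequality is tight for $\mathbf{X}$ (pairwise orthogonal columns), the hypothesis $|\det \mathbf{M}|=1$ transfers that value to $\mathbf{MX}$, and the two remaining inequalities in your chain are squeezed to equalities, from which the equality case of Hadamard gives $(\mathbf{MX})^\intercal(\mathbf{MX}) = \mathbf{X}^\intercal\mathbf{X}$ and hence $\mathbf{M}^\intercal\mathbf{M} = \mathbf{I}$ after cancelling the invertible $\mathbf{X}$. The paper instead works with the singular values of $\mathbf{M}$: it evaluates $\mathrm{tr}(\mathbf{M}^\intercal\mathbf{M}) = \sum_i \sigma_i^2$ in the orthogonal basis given by the columns of $\mathbf{X}$ to get $\sum_i \sigma_i^2 \leq n$, notes $\prod_i \sigma_i^2 = \det(\mathbf{M}^\intercal\mathbf{M}) = 1$, and invokes the equality case of the AM--GM inequality to conclude $\sigma_i = 1$ for all $i$. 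The two arguments are essentially dual uses of the same tension (a product pinned at $1$ by the determinant versus per-column norm contraction): AM--GM makes the singular-value structure of $\mathbf{M}$ explicit, while your Hadamard argument stays at the level of the columns and, as a bonus, directly exhibits that $\mathbf{MX}$ has pairwise orthogonal columns of the same norms as those of $\mathbf{X}$ -- which is in fact the geometric content used downstream in Lemma \ref{lm:ortho}, where $\mathbf{X}$ is a Hadamard matrix. Your closing remark about the norm convention for ``orthogonal'' $\mathbf{X}$ is well taken, since the claim is indeed applied to Hadamard matrices whose columns have norm $\sqrt{n}$ rather than $1$; both your proof and the paper's accommodate this, though the paper's phrase ``orthonormal basis'' is slightly loose on that point. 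Your only implicit steps -- that no column of $\mathbf{MX}$ vanishes (needed for the equality case of Hadamard) and that term-by-term equality follows from equality of products of positive factors -- are both immediate from $\det(\mathbf{MX}) \neq 0$, so nothing is missing.
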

\begin{proof}
Let $\mathbf{x}_1, \ldots, \mathbf{x}_n$ be the orthonormal basis obtained by
the columns of $\mathbf{X}$.  
\begin{align}
\mathrm{tr}\left( \mathbf{M}^T\mathbf{M} \right) &= \left\| \mathbf{M}^T \mathbf{M}
\right\|_F^2 = \sum_{i=1}^n |\lambda_i|^2 \\
\left| \sum_{i=1}^n \mathbf{x}_i^T \mathbf{M}^T \mathbf{M} \mathbf{x}_i \right| &\leq
\sum_{i=1}^n | \mathbf{x}_i^T \mathbf{M}^T \mathbf{M} \mathbf{x}_i | \\
	&= \sum_{i=1}^n |\mathbf{Mx}_i|^2 \\
	&\leq \sum_{i=1}^n |\mathbf{x}_i|^2 \\
    &= n
\end{align}
However, because $|\det \mathbf{M}| = 1$, this implies:
\begin{equation}
\det \mathbf{M}^T\mathbf{M} = \prod_{i=1}^n \lambda_i^2 = 1 = \sum_{i=1}^n
|\lambda_i|^2,
\end{equation}
and by the inequality of arithmetic and geometric means, this implies
$\lambda_i^2 = 1$ for all $i$. Since $\mathbf{M}$ is real valued, this implies
$\mathbf{M}$ is orthogonal.
% This is because
%\begin{align}
%\mathbf{M} &= \mathbf{U \Sigma V}^T \\
%\mathbf{M}^T \mathbf{M} &= \mathbf{U}^T \mathbf{\Sigma}^T \mathbf{\Sigma U} \\
%                        &= \mathbf{U}^T \mathbf{I U} \\ 
%					    &= \mathbf{I}.
%\end{align}
\end{proof}

By Lemma \ref{lm:detconst}, we know that $\mathbf{Q}$ must have determinant one.
Let $\mathbf{V}$ be the matrix guaranteed by the maximal subset property.
$\mathbf{V}$ must be Hadamard.  The matrix $\mathbf{UAV}$, will also have the
maximum value of the determinant over all $[-1,+1]^{n \times n}$ matrices, and
thus must also be Hadamard.  Since all points in $\mathbf{V}$ and $\mathbf{UAV}$
have $\ell_2$-norm $2^{n/2}$, by Claim \ref{clm:ortho}, $\mathbf{UA}$ must be
orthogonal. This completes the proof of Lemma \ref{lm:ortho}.   

At this point, one might be tempted to conjecture that, in fact, the maximal subset 
property is on its own sufficient; that is that the orthogonal matrix $\mathbf{Q}$ in 
Lemma \ref{lm:ortho} can be replaced by an ATM $\mathbf{T}$. However, this is
not the case as we will show in the proof of Theorem \ref{thm:n4iff}, given below.

\begin{customthm}{\ref{thm:n4iff}}
When $n=4$, Algorithm \ref{alg:fit} is correct with probability 1 if and only if $k \geq 5$ and $\text{cols}(\mathbf{X})$
contains at least four linearly independent vectors from $\mathcal{H}_4^{(i)}$
and a fifth vector from $\mathcal{H}_4^{(j)}$ for any $i \neq j$. 

Algorithm \ref{alg:fit} will be correct with probability 0.5 if $\text{cols}(\mathbf{X})$ has only four
linearly independent vectors belonging to the same equivalence class.
\end{customthm}
We prove Theorem \ref{thm:n4iff} in two parts.  First, in Lemma
\ref{lm:n4_succ_prob}, we show that when $n=k$
and $\mathbf{X}$ has the maximal subset property, Algorithm \ref{alg:fit} will
return the correct solution to the blind decoding problem with probability 0.5,
and that with the addition of an extra vector from a separate equivalence class,
all global optima correspond to solutions to the blind decoding problem.
In Lemma \ref{lm:n4_succ_prob} we in fact prove a slightly more general statement and give
the probability of that all global optima correspond to solutions of the blind
decoding problem given the input to Algorithm \ref{alg:fit} is chosen
uniformly at random.  Second, in Lemma \ref{lm:n4_global}, we show that for all
values of $k$, all optima are indeed global despite the fact that there are
suboptimal vertices.
\begin{lemma}
\label{lm:n4_succ_prob}
For $n=4$, for a collection of $k$ samples chosen uniformly at random, the
probability that all global optima will correspond to solutions of the blind
decoding problem is given by
\begin{equation}
\mathrm{Pr}(\mathrm{Success}) = r(4,k) \cdot r(3,k)^4 \cdot (1 - 2^{n-k}).
\end{equation}
\end{lemma}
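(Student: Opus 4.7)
The plan is to reduce the probability calculation to the combinatorial condition established in Theorem \ref{thm:n4iff} and then decompose the resulting event into pieces matching the three factors in the statement.

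First, I would invoke Theorem \ref{thm:n4iff}: the event that all global optima of (\ref{eq:obj})--(\ref{eq:const}) correspond to blind decoding solutions is equivalent to $\mathbf{X}$ containing four linearly independent columns from one equivalence class $\mathcal{H}_4^{(i)}$, together with at least one column from the complementary class $\mathcal{H}_4^{(j)}$, $j\ne i$. For $n=4$, the class $\mathcal{H}_4^{(i)}$ contains only $8$ vectors arranged into $4$ ``direction'' cosets $\{c,-c\}$, so any four $\mathbb{R}$-linearly independent columns from $\mathcal{H}_4^{(i)}$ must consist of one representative per direction and therefore form a Hadamard basis of $\mathbb{R}^4$; in particular, $\mathbf{X}$ automatically has the maximal subset property, and by Lemma \ref{lm:ortho} the global optima are of the form $\mathbf{Q}\mathbf{A}^{-1}$ for $\mathbf{Q}\in O(4)$.

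Next, I would work in the $\mathbb{F}_2^4$ description obtained from the map $+1\mapsto 0,\;-1\mapsto 1$. Under this map, $\mathcal{H}_4^{(1)}$ becomes the 3-dimensional even-weight subspace $V\le\mathbb{F}_2^4$, $\mathcal{H}_4^{(2)}$ becomes its coset, and the four Hadamard directions inside $\mathcal{H}_4^{(i)}$ are the four cosets of $\langle(1,1,1,1)\rangle$ inside $V$ (respectively the corresponding coset). The probability then decomposes as follows. The factor $(1-2^{n-k})$ comes from the conditional probability that, given a Hadamard basis of size $n=4$ has been realized in some class, at least one of the remaining $k-n$ i.i.d.\ uniform columns lands in the opposite class; each such column lies in either class with probability $1/2$, giving $1-(1/2)^{k-n}=1-2^{n-k}$. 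The factor $r(4,k)$ captures the prerequisite that $\mathbf{X}$ have full $\mathbb{F}_2^4$-rank, without which $\mathbf{X}$ cannot contain any Hadamard basis. The factor $r(3,k)^4$ arises by decomposing the ``all four Hadamard directions are covered'' event across the four directions: after projecting onto appropriate 3-dimensional $\mathbb{F}_2$-quotients of $V$ (respectively its coset) and exploiting the symmetry of the uniform distribution under signed permutations, each direction-coverage event reduces to an $\mathbb{F}_2^3$-spanning condition of probability $r(3,k)$, and these four reduced events turn out to be independent after the rank-$4$ conditioning.

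The main technical obstacle is justifying the clean $r(3,k)^4$ factor: since the four direction-coverage events share the single random matrix $\mathbf{X}$, they are not literally independent. The argument must carefully condition on the full-rank event encoded by $r(4,k)$ and use the transitive action of signed permutations on the four Hadamard directions to decouple them into independent $\mathbb{F}_2^3$-spanning events in the appropriate quotients. Once this structural factorization is in place, combining it with the straightforward $(1-2^{n-k})$ computation for the ``extra column in the opposite class'' event yields the claimed product formula.
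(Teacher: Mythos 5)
There is a genuine gap, and it sits at the heart of what this lemma's proof is supposed to deliver. In the paper, Theorem \ref{thm:n4iff} is proven \emph{by means of} this lemma (together with Lemma \ref{lm:n4_global}), so your opening move of invoking Theorem \ref{thm:n4iff} for the combinatorial characterization is circular: the characterization is precisely what the proof of this lemma must establish. Concretely, the paper's argument exhibits an orthogonal matrix $\mathbf{S} \notin \mathcal{T}$ with $\mathbf{G}=\mathbf{SH}$ for $\mathbf{G}\in\mathcal{H}_4^{(1)}$, $\mathbf{H}\in\mathcal{H}_4^{(2)}$, uses Lemma \ref{lm:ortho} to confine global optima to $\mathbf{QA}^{-1}$ with $\mathbf{Q}\in O(4)$, shows that when all columns of $\mathbf{X}$ lie in a single equivalence class exactly half of these optima carry a factor of $\mathbf{S}$ and are therefore spurious (the $0.5$ success probability), and then shows that $\mathbf{S}$ maps every vector of one class outside $\{-1,+1\}^4$, so that a single column from the opposite class makes every $\mathbf{S}$-type vertex infeasible. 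Your proposal reproduces the reduction to $O(4)$ via Lemma \ref{lm:ortho} but omits the entire $\mathbf{S}$ analysis, which is the step that actually converts ``orthogonal optima'' into ``exactly the ATM optima'' and justifies the conditional factor $(1-2^{n-k})$.

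On the probability computation itself, your $(1-2^{n-k})$ derivation matches the paper's. However, your route to the $r(3,k)^4$ factor is different from the paper's --- the paper characterizes the maximal subset property by requiring $\mathbf{X}$ and each of the four row-deleted $3\times k$ submatrices to be full rank, whereas you decompose into four ``direction-coverage'' events in $\mathbb{F}_2^3$-quotients --- and you explicitly concede that you cannot justify the independence needed to write the joint probability of four events driven by the same random columns as a product. That concession is the gap: without the decoupling argument you gesture at (conditioning on full rank and using the signed-permutation symmetry), the factor $r(3,k)^4$ is unsupported in your write-up, so the central formula of the lemma is not actually derived.
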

\begin{proof}
One can verify that for $n=4$, for a matrix to have the maximal subset
property (and hence be a Hadamard matrix), then not only must the matrix be full 
rank, but all matrices obtained by choosing a subset of three 
rows must also be full rank. The probability that a random set of $k$ vectors of
dimension 4 is a Hadamard matrix given by:
\begin{equation}
\mathrm{Pr}(\mathrm{Success}) = r(4,k) \cdot r(3,k)^4. \label{eq:n4}
\end{equation}
It can be seen that there is an orthogonal matrix, $\mathbf{S}$, that is not an ATM, such 
that for all $\mathbf{G} \in \mathcal{H}_4^{(1)}$, and $\mathbf{H} \in
\mathcal{H}_4^{(2)}$, $\mathbf{G}=\mathbf{SH}$.  To find an example of such a matrix, for any choice of $\mathbf{G}$ and $\mathbf{H}$, compute $\mathbf{S} = \mathbf{G}^{-1} \mathbf{H}$. This implies that, exactly half of the 
global optima are solutions to (\ref{eq:obj})--(\ref{eq:const}). This is
consistent with the observation that if $\mathbf{X}$ has a maximal subset, then Algorithm
\ref{alg:fit} succeeds 50\% of the time for $n=4, k=4$. 

Notice that for this same $\mathbf{G,H}$, and $\mathbf{S}$ given above, 
$\mathbf{SG} \notin \{-1,+1\}^4$ and similarly
$\mathbf{S}^{-1}\mathbf{H} \notin \{-1,+1\}^4$, for all $\mathbf{G}$ and
$\mathbf{H}$.  Notice further that all vectors in $\{-1,+1\}^4$ appear in either 
$\mathcal{H}_4^{(1)}$ or $\mathcal{H}_4^{(2)}$, and that the product of $\mathbf{S}$ 
times any vector in $\mathcal{H}_4^{(1)}$ is not in $\{-1,+1\}^4$. Similarly, any $\mathbf{S}^{-1}$ times any $\mathcal{H}_4^{(2)}$ is not in $\{-1,+1\}^4$.

Now consider a
collection of $k>n$ vectors that contains 4 independent elements of
$\mathcal{H}_4^{(i)}$, for some $i$.  If all vectors in this collection belong to the same
equivalence class, then matrices containing a factor of $\mathbf{S}$ will be the
optima of (\ref{eq:obj})--(\ref{eq:const}).  Otherwise, all such matrices will lie
outside of the feasible region and all global optima will correspond to solutions. 
For this reason, adding constraints removes vertices from the feasible region,
thus increasing the success probability of Algorithm 2.

Given the above argument, we have a probability of $1 - 2^{n-k}$ that the only global 
optima will be solutions, conditioned on the fact that the matrix has the maximal subset 
property.  Since equation (\ref{eq:n4}) gives the probability of the maximum
subset holding, we can express the probability
that, given $k$ random samples, all global optima are solutions to
(\ref{eq:obj})--(\ref{eq:const}) as:
\begin{equation}
\mathrm{Pr}(\mathrm{Success}) = r(4,k) \cdot r(3,k)^4 \cdot (1 - 2^{n-k}).
\end{equation}
\end{proof}
In order to arrive at our desired result for $n=4$, 
we still need to show that no vertices that correspond to matrices with determinant 
of $\pm 8$ are local optima of (\ref{eq:obj})--(\ref{eq:const}).  This is proven below and 
completes Theorem \ref{thm:n4iff}.

\begin{lemma}
\label{lm:n4_global}
For $n=4$, all optima of (\ref{eq:obj})--(\ref{eq:const}) are global.
\end{lemma}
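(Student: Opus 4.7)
The plan is to rule out every non-global vertex as a local maximum; since $n=4$ admits a Hadamard matrix, Lemma \ref{cor:had_iff} then forces every optimum to be strict, lie on a vertex, and therefore attain the global maximum. At any vertex $\mathbf{U}$, each row $\mathbf{u}^{(j)}$ has an active set $I_j\subseteq[k]$ of size $n=4$ with $\mathbf{u}^{(j)}\mathbf{y}_i = s_{ji}\in\{-1,+1\}$ for $i\in I_j$. In the cleanest sub-case $k=n$ all active sets coincide, $\mathbf{S}:=\mathbf{UY}\in\{-1,+1\}^{4\times 4}$, and $|\det\mathbf{S}|\in\{0,8,16\}$; vertices with $|\det\mathbf{S}|=16$ are Hadamard and are precisely the global maxima, so the core task is to rule out vertices with $|\det\mathbf{S}|=8$.

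I would first derive the first-order (KKT) condition. Using the row gradient $\nabla_{\mathbf{u}^{(j)}}\log|\det\mathbf{U}| = \mathbf{c}^{(j)}/\det\mathbf{U}$ and the outward active-constraint normals $s_{ji}\mathbf{y}_i$, nonnegativity of the Lagrange multipliers reduces --- after expanding in the basis $\{\mathbf{y}_i\}_{i\in I_j}$ of active columns and using $\mathbf{U}=\mathbf{S}\mathbf{Y}^{-1}$ in the $k=n$ case --- to the sign condition
\begin{equation*}
s_{ji}\bigl(\mathbf{S}^{-1}\bigr)_{ij}\geq 0\qquad\text{for all } i,j.
\end{equation*}
For Hadamard $\mathbf{S}$, $\mathbf{S}^{-1}=\mathbf{S}^\intercal/4$ makes each quantity equal to $1/4$, so the condition holds and these vertices are genuinely optimal; the goal is to show it fails for every $\mathbf{S}$ with $|\det\mathbf{S}|=8$.

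The remainder is a finite case check. Under the group of row and column permutations together with row and column sign flips --- all of which preserve both $|\det\mathbf{S}|$ and the KKT inequality --- there are only a handful of equivalence classes of $\{-1,+1\}^{4\times 4}$ matrices with $|\det|=8$, each obtainable from a Hadamard matrix by flipping the sign of a single entry. For each canonical representative I would compute $\mathbf{S}^{-1}$ explicitly by cofactors and exhibit an index $(i,j)$ with $s_{ji}(\mathbf{S}^{-1})_{ij}<0$; the sign violation yields an explicit feasible ascent direction (relax the offending active constraint, which strictly raises $\log|\det\mathbf{U}|$ while a sufficiently small step keeps every other active and inactive constraint satisfied). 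The main obstacle is precisely this enumeration --- cataloguing the $|\det|=8$ classes, verifying the sign violation on each representative, and confirming invariance of the KKT condition under the chosen symmetries so that one representative per class suffices. Extending from $k=n$ to $k>n$ is a secondary subtlety: the row-$j$ KKT condition then involves the inverse of $\mathbf{U}\mathbf{Y}_{I_j}$, whose off-$j$ rows need not be $\pm 1$-valued, but the local per-row structure of the argument, together with the fact that inactive constraints remain inactive under small perturbations, lets the $k=n$ sign-violation analysis transfer.
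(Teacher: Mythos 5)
Your reduction to vertices and your KKT setup are both correct: at a vertex with $k=n$ we have $\mathbf{U}=\mathbf{S}\mathbf{Y}^{-1}$ with $\mathbf{S}\in\{-1,+1\}^{4\times 4}$, and the multiplier attached to the active constraint $(j,i)$ is indeed $\lambda_{ji}=s_{ji}(\mathbf{S}^{-1})_{ij}=s_{ji}C_{ji}/\det\mathbf{S}$, where $C_{ji}$ is the cofactor. A strictly negative multiplier does give a feasible ascent direction. The gap is in the finite case check itself: it is \emph{not} true that every $|\det\mathbf{S}|=8$ matrix exhibits a negative multiplier, so the first-order sign test cannot close the argument. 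Since a single sign flip at $(i,j)$ changes the determinant to $\det\mathbf{S}-2s_{ij}C_{ij}$ and $|C_{ij}|\in\{0,4\}$, a matrix with $\det\mathbf{S}=8$ has $s_{ij}C_{ij}=-4$ at some entry (i.e.\ a negative multiplier) \emph{exactly when} it lies at Hamming distance $1$ from a determinant-$16$ matrix. But the maximum Hamming distance from a $|\det|=8$ matrix to the nearest $|\det|=16$ matrix is $2$, not $1$ (this is the computationally verified fact the paper's proof is organized around), so your side claim that every $|\det|=8$ class arises from a Hadamard matrix by one flip is false, and for every distance-$2$ matrix all multipliers satisfy $s_{ij}C_{ij}\in\{0,4\}$, i.e.\ $\lambda_{ji}\ge 0$ throughout. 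These vertices pass your KKT test and your proposed enumeration would terminate without exhibiting an ascent direction for them.

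Ruling out those distance-$2$ vertices is inherently a second-order matter, which is why the paper argues differently: at such a vertex there are two zero-cofactor entries (in distinct columns), the objective is constant to first order --- in fact constant --- along each of the two corresponding edges of the Hamming graph ($AB$ and $AC$ in Figure \ref{fig:vertices}), yet the multilinearity of the determinant forces it to be strictly increasing along the diagonal $AD$ of that $2$-face toward the nearby Hadamard vertex (using Corollary \ref{cor:nonstrict} to exclude a constant-valued affine piece and the connectivity of the fixed-sign component to exclude a sign change). If you want to rescue your approach, you must supplement the first-order test with exactly this kind of analysis along the zero-multiplier directions: when $\lambda_{ji}=0$ the cross term of the bilinear expansion of $\det$ in the two relaxed entries is what produces the ascent, and that term is invisible to the gradient. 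Your remarks about transferring from $k=n$ to $k>n$ are a secondary issue by comparison, but note that the paper sidesteps it because additional columns only remove vertices from the feasible region, never add optima.
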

\begin{proof}
Optima of (\ref{eq:obj})--(\ref{eq:const}) can only lie on vertices of the problem boundary --- that is
optima can only correspond to non-singular $\{\pm 1\}$-valued matrices by Lemma
\ref{cor:had_iff}. We need
to show that no matrices with determinant $\pm 8$ are local optima. 
We begin with two facts which have been verified through computer simulation.

Create a graph with a node for each matrix in $\{-1,1\}^{4 \times 4}$ and edges
between each pair of nodes that have Hamming distance of one.
Remove from this graph all nodes which correspond two determinant zero,
leaving only nodes with determinant $\pm 8$ and $\pm 16$. Since the determinant
is a linear function of the columns of a matrix, then the value of $\det
\mathbf{A}$
changes linearly along each edge of this graph.  Studying the geometry of this graph will give us insight into paths that Algorithm \ref{alg:fit_corner} may travel in arriving to an optima.

The first observation is that the graph is partitioned into two components. This
can be verified, for example, either by inspecting the least eigenvalues of the Laplacian of the
adjacency matrix of the graph or performing a breadth-first search.  We find that one component of this graph corresponds to
$\{\pm 1\}$-valued matrix that have positive determinants, and the other to all 
all matrices with negative determinants.  This means that one can traverse either component of these graphs without the objective function changing sign.  This further implies that, over each edge of the graph, 
the objective function is either constant (and equal to $\log 8$ along each edge), or
changes logrithmically from $\log 8$ to $\log 16$.

It can further be verified that the maximum Hamming distance between any matrix
of determinant $\pm 8$ and determinant $\pm 16$ is 2.  Those that have Hamming
distance 1 are clearly not local optima.  Thus we turn our attention to the
remaining determinate $\pm 8$ matrices that have Hamming distance 2. Each of
matrices is connected to at least two determinant $\pm 8$ matrices that are
distance 1 away from a $\pm 16$ matrix.  This is depicted in Figure
\ref{fig:vertices}: matrix $A$ is distance 2 from optimal matrix $D$ and
adjacent to suboptimal matrices $B$ and $C$. 

From Corollary \ref{cor:nonstrict}, we know that the objective function can
only be constant on an affine subspace.  It is constant along the lines $AB$ and
$AC$, but not on the line between $BD$ and $CD$.  This implies that the
objective function cannot be constant on the line $AD$.  Further, we know that the
determinants of $A$ and $D$ have the same sign and that there cannot be critical
points on this line. Thus the objective function along this line is
monotonically increasing, implying that no determinant $\pm 8$ matrix is a local
optimum.
\end{proof}
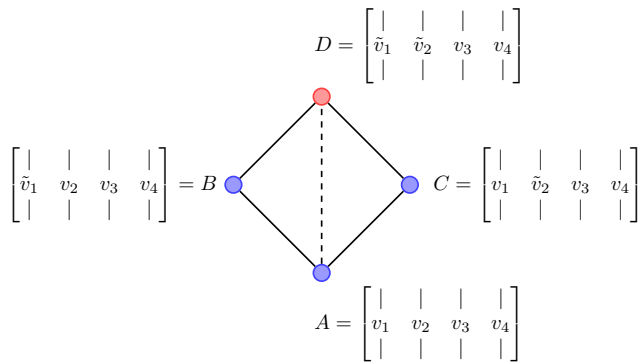
\begin{figure} 
    \begin{centering}

\resizebox{\columnwidth}{!}{%
\begin{tikzpicture}[
red_point/.style={circle,draw=red!75,fill=red!40,thick,
inner sep=0pt,minimum size=8pt},
blue_point/.style={circle,draw=blue!75,fill=blue!40,thick,
inner sep=0pt,minimum size=8pt}
]
	\node[blue_point,label=below right:
		{   \hspace{-6mm} 
			$A=\begin{bmatrix}
				\vert & \vert & \vert & \vert \\ 
				v_1 & v_2 & v_3 & v_4 \\
				\vert & \vert & \vert & \vert 
				\end{bmatrix}$
		}] (a) at (0,0) {};

	\node[blue_point,label=left:
		{
			$\begin{bmatrix}
			\vert & \vert & \vert & \vert \\
			\tilde{v}_1 & v_2 & v_3 & v_4 \\
			\vert & \vert & \vert & \vert
			\end{bmatrix}=B$
		}] (b) at (-1.5,1.5) {};

	\node[blue_point,label=right:
		{			
			$C = \begin{bmatrix}
			\vert & \vert & \vert & \vert \\
			v_1 & \tilde{v}_2 & v_3 & v_4 \\
			\vert & \vert & \vert & \vert
			\end{bmatrix}$ 
		}] (c) at (1.5,1.5) {};

	\node[red_point,label=above right:
		{
			\hspace{-6mm}
			$D=\begin{bmatrix}
			\vert & \vert & \vert & \vert \\
			\tilde{v}_1 & \tilde{v}_2 & v_3 & v_4 \\
			\vert & \vert & \vert & \vert
			\end{bmatrix}$
		}] (d) at (0,3) {};

	\draw[thick] (a) -- (b);
	\draw[thick] (b) -- (d);
	\draw[thick] (a) -- (c);
	\draw[thick] (c) -- (d);
	\draw[thick,dashed] (a) -- (d);

\end{tikzpicture}
}%
    \par\end{centering}
    \caption{The matrix $A$ with determinant 8 is Hamming distance 2 away from 
the matrix $D$ with determinant 16.  $A$ is Hamming distance 1 away from both
$B$ and $C$ which also have determinant 8.  The objective function is constant
on the lines $AB$ and $AB$ but monotonically increasing on the line $AD$. }
    \label{fig:vertices}
\end{figure}

%%%%%%%%%%%%%%%%%%%%%%%%%%%%%%%%%%%%%%%%%%%%%%%%%%%%%%%%%%%%%%%%%%%%%%%%%%%%%%
\section{Distribution of the Rank of a Collection of Random Vectors}
\label{sec:rank}
In this section, we consider the distribution of the rank of a collection of
vectors drawn uniformly from $\mathbb{F}_q^n$.  Because the rank of a collection
of vectors in $\mathbb{F}_q^n$ is less than or equal to its rank in
$\mathbb{R}^n$, this will allow us to obtain an exact expression for $r(n,k)$,
which is the probability that a set of binary vectors is full rank over
$\mathbb{R}^n$.  We begin by stating a simple lower bound which shows that the
probability of a collection of vectors not being full rank decays exponentially
fast as the number of vectors grow. In this section we refer to $k$ as the size
of the collection of vectors.

Noting that the number of subspaces of dimension $n-1$ is $2^n$, and that the
probability that all $k$ vectors live in any single $n-1$ dimensional subspace
is $(1/2)^k$, by union bound, we have the following probability:
\begin{align}
\Pr(\mathrm{not\,full\,rank\,in\,}\mathbb{R}^n) &\leq
\Pr(\mathrm{not\,full\,rank\,in\,
}\mathbb{F}_2^n) \\
 & \leq 2^n \left( \frac{1}{2} \right)^k = 2^{n-k}
\end{align}

We now compute the exact distribution of the dimension of the subspace spanned by
a random subset of a vector space over a finite field. From this distribution, we
can compute an exact expression for $r(n,k)$.  The computation makes use of the
M\"obius inversion formula, a standard tool in combinatorics and number theory
that provides a natural way to count elements of partially ordered sets using an
overcounting-undercounting procedure.  For a full treatment on M\"obius
functions and their applications see \cite{BG75}.

In \cite{BG75}, the authors apply M\"obius inversion to counting vector
subspaces.  If $\mathbf{U}$ and $\mathbf{V}$ are subspaces of $\mathbb{F}_q^n$,
then $\mathbf{U} \leq \mathbf{V}$ iff $\mathbf{U}$ is a subspace of
$\mathbf{V}$. This relation forms a partial ordering for all subspaces of
$\mathbf{F}_q^n$.    

We are interesting in counting the number of collections of vectors which span a
given subspace.  
Let $N_=(\mathbf{X})$ be the number of $k$-tuples ($\mathbf{x}_1, \ldots, \mathbf{x}_k$) that 
span the subspace $\mathbf{X}$, and let $N_\leq(\mathbf{X})$ be the number of $k$-tuples that 
span either $\mathbf{X}$ or a subspace of $\mathbf{X}$.  Clearly,
\begin{equation}
N_\leq(\mathbf{X}) = \sum_{\mathbf{U}:\mathbf{U} \leq \mathbf{X}} N_=(\mathbf{U})
\end{equation} 
Note that the function $N_\leq(\mathbf{U})$ is easily computable as:  
\begin{equation}
N_\leq(\mathbf{U}) = q^{(\dim \mathbf{U})k}.
\end{equation}
The M\"{o}bius inversion formula gives us a way to compute $N_=(\mathbf{U})$
through $N_\leq(\mathbf{U})$, namely:
\begin{equation}
N_=(\mathbf{X}) = 
\sum_{\mathbf{U}: \mathbf{U} \leq \mathbf{X}} 
\mu (\mathbf{U}, \mathbf{X}) N_\leq(\mathbf{U}),
\end{equation}
where $\mu$ is M\"obius function, which is the integer-valued function on ordered
pairs of subspaces defined implicitly by:
\begin{equation}
\sum_{\mathbf{X}: \mathbf{U} \leq \mathbf{X} \mathbf{W}} 
\mu(\mathbf{U},\mathbf{X}) =
\begin{cases}
1,\,\mathrm{if}\,\mathbf{U} = \mathbf{W} \\
0,\,\mathrm{if}\,\mathbf{U} \neq \mathbf{W}.
\end{cases}
\end{equation}
$\mu$ may be computed recursively by the following formula:
\begin{equation}
\mu (\mathbf{U},\mathbf{X}) = 
\begin{cases}
1,\,\mathbf{U} = \mathbf{W} \\
- \sum_{\mathbf{X}: \mathbf{U} \leq \mathbf{X} \leq \mathbf{W}} \mu (\mathbf{U},
\mathbf{W}),\,\mathbf{U} < \mathbf{W} \\
0,\,\mathrm{otherwise}
\end{cases}
\end{equation}
In \cite{BG75}, the authors show that $\mu(\mathbf{U}, \mathbf{X})$ depends only
on the difference between $\dim \mathbf{X}$ and $\dim \mathbf{U}$.  Letting $i =
\dim \mathbf{X} - \dim \mathbf{U}$, they further show that:
\begin{equation}
\mu(\mathbf{U}, \mathbf{X}) = \mu_i = (-1)^i q^{i \choose 2}.
\end{equation}
With these preliminaries, we now have everything we need to prove the following
theorem:
\begin{theorem}
Let the set of vectors $\mathbf{x}_1, \ldots, \mathbf{x}_k$ be chosen uniformly at
random from $\mathbb{F}_q^n$, and let $\mathbf{W}$ denote the subspace spanned by 
these vectors. Then the probability that $\dim \mathbf{W} = m$ is:
\begin{equation}
\mathrm{Pr}(\dim \mathbf{W} = m) = q^{-nk} {n \brack m}_q \sum_{i=0}^m (-1)^i q^{i \choose 2}
q^{(m-i)k} {m \brack i}_q. \label{eq:subspace}
\end{equation}
\end{theorem}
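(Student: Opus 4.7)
The plan is to directly apply the Möbius inversion formula established in the preliminaries to count $k$-tuples whose span has a prescribed dimension, and then divide by the total number of tuples. First I would observe that, by the symmetry of $\mathbb{F}_q^n$ under $GL_n(\mathbb{F}_q)$, the quantity $N_=(\mathbf{X})$ depends only on $\dim \mathbf{X}$; in particular, for all $m$-dimensional subspaces $\mathbf{X}$ the value $N_=(\mathbf{X})$ is the same. Since the total number of ordered $k$-tuples in $\mathbb{F}_q^n$ is $q^{nk}$, and since the Gaussian binomial ${n \brack m}_q$ counts the number of $m$-dimensional subspaces, we can write
\begin{equation}
\Pr(\dim \mathbf{W} = m) \;=\; q^{-nk} \, {n \brack m}_q \, N_=(\mathbf{X}),
\end{equation}
for any fixed $m$-dimensional $\mathbf{X}$.

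Next I would compute $N_=(\mathbf{X})$ using the Möbius inversion identity already set up in the excerpt,
\begin{equation}
N_=(\mathbf{X}) \;=\; \sum_{\mathbf{U} \leq \mathbf{X}} \mu(\mathbf{U},\mathbf{X}) \, N_\leq(\mathbf{U}),
\end{equation}
together with the explicit values $N_\leq(\mathbf{U}) = q^{(\dim \mathbf{U}) k}$ and $\mu(\mathbf{U},\mathbf{X}) = (-1)^i q^{\binom{i}{2}}$ with $i = \dim \mathbf{X} - \dim \mathbf{U}$. I would group the terms in the sum according to the codimension $i = m - \dim \mathbf{U}$ of $\mathbf{U}$ inside $\mathbf{X}$. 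Since $\mathbf{X} \cong \mathbb{F}_q^m$, the number of subspaces of $\mathbf{X}$ of dimension $m-i$ is ${m \brack m-i}_q = {m \brack i}_q$, which gives
\begin{equation}
N_=(\mathbf{X}) \;=\; \sum_{i=0}^{m} (-1)^i \, q^{\binom{i}{2}} \, q^{(m-i)k} \, {m \brack i}_q.
\end{equation}

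Combining the two displays yields exactly the claimed formula for $\Pr(\dim \mathbf{W} = m)$. There is no real obstacle here beyond careful bookkeeping; all of the heavy machinery (the existence and value of the Möbius function on the subspace lattice of $\mathbb{F}_q^n$, and the counting identity for $N_\leq$) is already cited from \cite{BG75} and restated in the preliminaries. The only two nontrivial ingredients are the symmetry argument that lets us replace a sum over $m$-dimensional subspaces by ${n \brack m}_q$ copies of a single $N_=(\mathbf{X})$, and the reindexing by codimension $i$ which turns the sum over $\mathbf{U} \leq \mathbf{X}$ into the explicit sum over $i$ appearing in the statement.
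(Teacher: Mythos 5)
Your proposal is correct and follows essentially the same route as the paper's proof: Möbius inversion on the subspace lattice of $\mathbb{F}_q^n$ with $N_\leq(\mathbf{U}) = q^{(\dim \mathbf{U})k}$ and $\mu_i = (-1)^i q^{\binom{i}{2}}$, reindexed by codimension so that ${m \brack i}_q$ counts the subspaces at each level. The only cosmetic difference is that you factor out ${n \brack m}_q$ at the start via the $GL_n(\mathbb{F}_q)$ symmetry, whereas the paper carries the sum over all $m$-dimensional subspaces through the computation and collapses it to ${n \brack m}_q$ in the final step.
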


\begin{proof} 
Note that there are $q^{nk}$ total possible 
$k$-tuples of ($\mathbf{x}_1, \ldots, \mathbf{x}_k$). This implies that 
\begin{equation}
\mathrm{Pr}(\dim \mathbf{X} = m) = \sum_{\mathbf{X} \leq \mathbb{F}_q^n: \dim
\mathbf{X} = m}
\frac{N_=(\mathbf{X})}{q^{nk}}.
\end{equation}
We now compute $N_=(\mathbf{X})$ through the use of M\"{o}bius inversion, using the
expressions for $N_=(\mathbf{X})$, $N_\leq(\mathbf{X})$ and $\mu_i$ we derived above, we have:
\begin{align}
&\mathrm{Pr}(\dim \mathbf{W} = m) 
	= \sum_{\mathbf{X} \leq \mathbb{F}_q^n: \dim \mathbf{X} = m} 
	   \frac{N_=(\mathbf{X})}{q^{nk}} \\
	&= \frac{1}{q^{nk}} \sum_{\mathbf{X} \leq \mathbb{F}_q^n: \dim \mathbf{X} = m} 
	   \: \sum_{\mathbf{U}:\mathbf{U} \leq \mathbf{X}} 
		\mu_{\dim \mathbf{X} - \dim \mathbf{U}} N_\leq (\mathbf{U}) \\
 	&= \frac{1}{q^{nk}} \sum_{\mathbf{X} \leq \mathbb{F}_q^n: \dim \mathbf{X} = m} 
	    \: \sum_{i=0}^m 
	    \: \sum_{\mathbf{U} \leq \mathbf{X}, \dim \mathbf{U} = m-i} (-1)^i 
		q^{i \choose 2} q^{(m-i)k} 
		\label{eq:bracket1} \\
    &= q^{-nk} {n \brack m}_q \, \sum_{i=0}^m (-1)^i q^{i \choose 2}
		 q^{(m-i)k} {m \brack i}_q \label{eq:bracket2}
\end{align}
where (\ref{eq:bracket1}) and (\ref{eq:bracket2}) follow by noting that ${n
\brack m}_q$ counts the number of $m$-dimensional subspaces of $\mathbb{F}_q^n$.
\end{proof}
From this theorem, an expression for $r(n,k)$ readily follows by substituting 
$m=n$ and $q=2$:
\begin{equation}
r(n,k) = 2^{-nk} \sum_{i=0}^n (-1)^i 2^{i \choose 2} 2^{(n-i)k} {n \brack i}_2. 
\label{eq:nkrank}
\end{equation}
%%%%%%%%%%%%%%%%%%%%%%%%%%%%%%%%%%%%%%%%%%%%%%%%%%%%%%%%%%%%%%%%%%%%%%%%%%%%%%%
\section*{Acknowledgements}
The authors would like to thank Yonathan Morin for insightful conversation on 
MIMO decoding and channel estimation, and for his comments on a preliminary version
of this work,  Mainak Chowdhury for discussion on non-coherent MIMO channels and
optimization, Milind Rao for discussion on optimization, blind-source separation
and statistical learning techniques, Ronny Hadani for his comments on the use of our
algorithm in complex-valued channels, and Jonathan Perlstein for providing a 
counterexample for the $n=3, k=3$ case, and for his comments on a preliminary version of 
this work.

\balance
\bibliographystyle{ieeetr}
\bibliography{IEEEabrv,trdean}

\end{document}